\newtheorem*{Estimation and Inference}{Estimation and Inference}
\newtheorem*{Repeated cross sections}{Repeated cross sections}
\newtheorem*{Comparing Fuzzy DID}{Comparing Fuzzy DID}
\newtheorem*{Unbalanced case}{Unbalanced case}
\newtheorem*{Cross section comparison}{Cross section comparison}
\newtheorem*{Supplementary of Theorem 3}{Supplementary of Theorem 3}
\newtheorem*{Non-binary, ordered treatment}{Non-binary, ordered treatment}
\newtheorem*{Particular DID-IV design}{Particular DID-IV design}
\newtheorem*{Random assignment of the adoption date}{Random assignment of the adoption date}
\newtheorem*{Introducing the treatment's path}{Introducing the treatment's path}
\newtheorem*{Design based approach}{Design based approach}
\newtheorem{Proof of Theorem}{Proof of Theorem}
\newtheorem{Proof of Lemma}{Proof of Lemma}
\newtheorem{proof of}{proof of}
\theoremstyle{definition}
\newtheorem{Assumption}{Assumption}
\newtheorem*{Discussion in section 5.4}{Discussion in section 5.4}
\newtheorem{Theorem}{Theorem}
\newtheorem*{Def}{Def}
\newtheorem{Lemma}{Lemma}
\newtheorem*{No anticipation for the first stage}{No anticipation for the first stage}
\newtheorem*{Parallel Trend for the first stage}{Parallel Trend for the first stage}
\newtheorem*{Parallel Trend for the second stage}{Parallel Trend for the second stage}
\title{Two-way fixed effects instrumental variable regressions\\ in staggered DID-IV designs.\thanks{I am grateful to Daiji Kawaguchi and Ryo Okui for their continued guidance and support. All errors are my own.}}
\author{Miyaji Sho\thanks{Graduate School of Economics, The University of Tokyo, 7-3-1 Hongo, Bunkyoku, Tokyo 113-0033, Japan; Email: \href{mailto:shomiyaji-apple@g.ecc.u-tokyo.ac.jp}{shomiyaji-apple@g.ecc.u-tokyo.ac.jp}.}}
\date{\today}
\begin{document}
\begin{titlepage}
\maketitle
\begin{abstract}
Many studies run two-way fixed effects instrumental variable (TWFEIV) regressions, leveraging variation in the timing of policy adoption across units as an instrument for treatment. This paper studies the properties of the TWFEIV estimator in staggered instrumented difference-in-differences (DID-IV) designs. We show that in settings with the staggered adoption of the instrument across units, the TWFEIV estimator can be decomposed into a weighted average of all possible two-group/two-period Wald-DID estimators. Under staggered DID-IV designs, a causal interpretation of the TWFEIV estimand hinges on the stable effects of the instrument on the treatment and the outcome over time. We illustrate the use of our decomposition theorem for the TWFEIV estimator through an empirical application.
\end{abstract} 
\vspace*{\fill}
\end{titlepage}

\clearpage

\section{Introduction}\label{sec1}
Instrumented difference-in-differences (DID-IV) is a method to estimate the effect of a treatment on an outcome, exploiting variation in the timing of policy adoption across units as an instrument for the treatment. In a simple setting with two groups and two periods, some units become exposed to the policy shock in the second period (exposed group), whereas others are not over two periods (unexposed group). The estimator is constructed by running the following IV regression with the group and post-time dummies as included instruments and the interaction of the two as the excluded instrument (e.g., \cite{Duflo2001-nh}, \cite{Field2007-yc}):
\begin{align*}
Y_{i,t}=\beta_{0}+\beta_{i,.}\text{Exposed}_{i}+\beta_{,.t}\text{POST}_{t}+\beta_{IV}D_{i,t}+\epsilon_{i,t}.
\end{align*}
The resulting IV estimand $\beta_{IV}$ scales the DID estimand of the outcome by the DID estimand of the treatment, the so-called Wald-DID estimand (\cite{De_Chaisemartin2018-xe}, \cite{Miyaji2023}). In this two-group/two-period ($2 \times 2$) setting, DID-IV designs mainly consist of a monotonicity assumption and parallel trends assumptions in the treatment and the outcome between the two groups, and allow for the Wald-DID estimand to capture the local average treatment effect on the treated (LATET) (\cite{chasemartin2010-ch}, \cite{Hudson2017-tm}, and \cite{Miyaji2023}). DID-IV designs have gained popularity over DID designs in practice when there is no control group or the treatment adoption is potentially endogenous over time (\cite{Miyaji2023}).
\par 
In reality, however, most DID-IV applications go beyond the canonical DID-IV set up, and leverage variation in the timing of policy adoption across units in more than two periods, instrumenting for the treatment with the natural variation. The instrument is constructed, for instance from the staggered adoption of school reforms across countries or municipalities (e.g. \cite{Oreopoulos2006-bn}, \cite{Lundborg2014-gm}, \cite{Meghir2018-bk}), the phase-in introduction of head starts across states (e.g. \cite{Johnson2019-kb}), or the gradual adoption of broadband internet programs (e.g. \cite{Akerman2015-hh}, \cite{Bhuller2013-ki}). These policy changes can be viewed as some natural experiments, but not randomized in reality.\par 
Recently, \cite{Miyaji2023} formalizes the underlying identification strategy as a staggered DID-IV design. In this design, the treatment adoption is allowed to be endogenous over time, while the instrument is required to be uncorrelated with time-varying unobservables in the treatment and the outcome; the assignment of the treatment can be non-staggered across units, while the assignment of the instrument is staggered across units: they are partitioned into mutually exclusive and exhaustive cohorts by the initial adoption date of the instrument. The target parameter is the cohort specific local average treatment effect on the treated (CLATT); this parameter measures the treatment effects among the units who belong to cohort $e$ and are induced to the treatment by instrument in a given relative period $l$ after the initial adoption of the instrument. The identifying assumptions are the natural generalization of those in $2 \times 2$ DID-IV designs.\par
In practice, empirical researchers commonly implement this design via linear instrumental variable regressions with time and unit fixed effects, the so-called two-way fixed effects instrumental variable (TWFEIV) regressions (e.g., \cite{Black2005-aw}, \cite{Lundborg2017-mz}, \cite{Johnson2019-kb}): 
\begin{align}
\label{intro1}
    &Y_{i,t}=\phi_{i.}+\lambda_{t.}+\beta_{IV} D_{i,t}+v_{i,t},\\
    \label{intro2}
    &D_{i,t}=\gamma_{i.}+\zeta_{t.}+\pi Z_{i,t}+\eta_{i,t}.
\end{align}
In contrast to the canonical DID-IV set up, however, the validity of running TWFEIV regressions seems less clear under staggered DID-IV designs. The IV estimate is commonly interpreted as measuring the local average treatment effect in the presence of heterogeneous treatment effects as in \cite{Imbens1994-qy}, whereas the target parameter is not stated formally. We know little about how the IV estimator is constructed by comparing the evolution of the treatment and the outcome across units and over time. Finally, we have no tools to illustrate the identifying variations in the IV estimate in a given application.\par
In this paper, we study the properties of two-way fixed effects instrumental variable estimators under staggered DID-IV designs. Specifically, we present the decomposition result for the TWFEIV estimator, and study the causal interpretation of the TWFEIV estimand under staggered DID-IV designs.\par 
First, we derive the decomposition theorem for the TWFEIV estimator with settings of the staggered adoption of the instrument across units. We show that the TWFEIV estimator is equal to a weighted average of all possible $2 \times 2$ Wald-DID estimators arising from the three types of the DID-IV design. First, in an Unexposed/Exposed design, some units are never exposed to the instrument during the sample period (unexposed group), whereas some units start exposed at a particular date and remain exposed (exposed group). Second, in an Exposed/Not Yet Exposed design, some units start exposed earlier, whereas some units are not yet exposed during the design period (not yet exposed group). Finally, in an Exposed/Exposed Shift design, some units are already exposed, whereas some units start exposed later at a particular point during the design period (exposed shift group). The weight assigned to each Wald-DID estimator reflects all the identifying variations in each DID-IV design: the sample share, the variance of the instrument, and the DID estimator of the treatment between the two groups.\par
Built on the decomposition result, we next uncover the shortcomings of running TWFEIV regressions under staggered DID-IV designs. We show that the TWFEIV estimand potentially fails to summarize the treatment effects under staggered DID-IV designs due to negative weights. Specifically, we show that this estimand is equal to a weighted average of all possible cohort specific local average treatment effect on the treated (CLATT) parameters, but some weights can be negative. The negative weight problem potentially arises due to the "bad comparisons" (c.f. \cite{Goodman-Bacon2021-ej}) performed by TWFEIV regressions: the already exposed units play the role of controls in the Exposed/Exposed Shift design in the first stage and reduced form regressions. Given the negative result of using the TWFEIV estimand under staggered DID-IV designs, we also investigate the sufficient conditions for this estimand to attain its causal interpretation. We show that this estimand can be interpreted as causal only if the effects of the instrument on the treatment and the outcome are stable over time.\par
We extend our decomposition result in several directions. We first consider non-binary, ordered treatment. We also derive the decomposition result for the TWFEIV estimand in unbalanced panel settings. Lastly, we consider the case when the adoption date of the instrument is randomized across units. In all cases, we show that the TWFEIV estimand potentially fails to summarize the treatment effects under staggered DID-IV designs due to negative weights.\par
We illustrate our findings with the setting of \cite{Miller2019-ok} who estimate the effect of female police officers' share on intimate partner homicide rate, leveraging the timing variation of AA (affirmative action) plans across U.S. counties. In this application, we first assess the plausibility of the staggered DID-IV design implicitly imposed by \cite{Miller2019-ok} and confirm its validity. We then estimate TWFEIV regressions, slightly modifying the authors' setting, and apply our DID-IV decomposition theorem to the IV estimate. We find that the estimate assigns more weights to the Unexposed/Exposed design and less weights to the other two types of the DID-IV design. Despite the small weight on the Exposed/Exposed Shift design, we also find that the IV estimate suffers from the substantial downward bias arising from the bad comparisons in the Exposed/Exposed Shift design.\par
Finally, we develop simple tools to examine how different specifications affect the change in TWFEIV estimates, and illustrate these by revisiting \cite{Miller2019-ok}. In many empirical settings, researchers typically diverge from a simple TWFEIV regression as in equation \eqref{intro1} and estimate various specifications such as weighting or including time-varying covariates. We follow \cite{Goodman-Bacon2021-ej} and decompose the difference between the two specifications into the changes in Wald-DID estimates, the changes in weights, and the interaction of the two. This decomposition result enables the researchers to quantify the contribution of the changes in each term to the difference in the overall estimates. In addition, plotting the pairs of Wald-DID estimates and associated weights obtained from the two specifications allows the researchers to investigate which components have the significant impact on these contributions.\par
Overall, this paper shows the negative result of using TWFEIV estimators under staggered DID-IV designs in more than two periods, and provide tools to illustrate how serious that concern is in a given application. Specifically, our decomposition result for the TWFEIV estimator enables the researchers to quantify the bias term arising from the bad comparisons in Exposed/Exposed Shift designs in the data. Fortunately, \cite{Miyaji2023} recently proposes the alternative estimation method in staggered DID-IV designs that is robust to treatment effect heterogeneity. Using such estimation method allows the practitioners to avoid the issue of TWFEIV estimators in practice, and facilitates the credibility of their empirical findings.\par
The rest of the paper is organized as follows. The next subsection discusses the related literature. Section \ref{sec2} presents our decomposition theorem for the TWFEIV estimator. Section \ref{sec3} formally introduces staggered instrumented difference-in-differences designs. Section \ref{sec4} presents the pitfalls of running TWFEIV regressions under staggered DID-IV designs, and explores the sufficient conditions for the TWFEIV estimand to attain its causal interpretation. Section \ref{sec5} describes some of the extensions. Section \ref{sec6} presents our empirical application. Section \ref{sec7} explain how different specifications affect the difference in estimates and Section \ref{sec8} concludes. All proofs are given in the Appendix.
\subsection{Related literature}\label{sec1.1}
Our paper is related to the recent DID-IV literature (\cite{chasemartin2010-ch}; \cite{Hudson2017-tm}; \cite{De_Chaisemartin2018-xe}; \cite{Miyaji2023}). In this literature, \cite{chasemartin2010-ch} first formalizes $2 \times 2$ DID-IV designs and shows that a Wald-DID estimand identifies the local average treatment effect on the treated (LATET) if the parallel trends assumptions in the treatment and the outcome, and a monotonicity assumption are satisfied. \cite{Hudson2017-tm} also consider $2 \times 2$ DID-IV designs with non-binary, ordered treatment settings. Build on the work in \cite{chasemartin2010-ch}, however, \cite{De_Chaisemartin2018-xe} formalize $2 \times 2$ DID-IV designs differently, and call them Fuzzy DID. \cite{Miyaji2023} compares $2 \times 2$ DID-IV to Fuzzy DID designs and points out the issues embedded in Fuzzy DID designs, and extends $2 \times 2$ DID-IV design to multiple period settings with the staggered adoption of the instrument across units, which the author calls staggered DID-IV designs. \cite{Miyaji2023} also provides a reliable estimation method in staggered DID-IV designs that is robust to treatment effect heterogeneity.\par 
In this paper, we contribute to the literature by showing the properties of two-way fixed instrumental variable estimators in staggered DID-IV designs. In reality, when empirical researchers implicitly rely on the staggered DID-IV design, they commonly implement this design via TWFEIV regressions (e.g. \cite{Black2005-aw}, \cite{Lundborg2014-gm}, \cite{Meghir2018-bk}). This paper presents the issues of the conventional approach, and provides the sufficient conditions for this estimand to attain its causal interpretation.\par 
Our paper is also related to a recent DID literature on the causal interpretation of two-way fixed effects (TWFE) regressions and its dynamic specifications under heterogeneous treatment effects (\cite{Athey2022-uo}; \cite{Borusyak2021-jv}; \cite{De_Chaisemartin2020-dw}; \cite{Goodman-Bacon2021-ej}; \cite{Imai2021-dn}; \cite{Sun2021-rp}).\par
Specifically, this paper is closely connected to \cite{Goodman-Bacon2021-ej}, who derives the decomposition theorem for the TWFE estimator with settings of the staggered adoption of the treatment across units. In this paper, we establish the decompose theorem for the TWFEIV estimator with settings of the staggered adoption of the instrument across units, which is a natural generalization of their theorem 1.\par 
This paper is also closely connected to \cite{De_Chaisemartin2020-dw}, who decompose the TWFE estimand and present the issue of using this estimand under DID designs: some weights assigned to the causal parameters in this estimand can be potentially negative. In their appendix, the authors also decompose the TWFEIV estimand and refer to the negative weight problem in this estimand. Specifically, they apply the decomposition theorem for the TWFE estimand to the numerator and denominator in the TWFEIV estimand respectively, and conclude that this estimand identifies the LATE as in \cite{Imbens1994-qy} only if the effects of the instrument on the treatment and outcome are constant across groups and over time. However, their decomposition result for the TWFEIV estimand has some drawbacks. First, they do not formally state the target parameter and identifying assumptions in DID-IV designs. Second, their decomposition result is not based on the target parameter in DID-IV designs. Finally, the sufficient conditions for this estimand to be interpretable causal parameter are not well investigated.\par 
In this paper, we investigate the causal interpretation of the TWFEIV estimand more clearly than that of \cite{De_Chaisemartin2020-dw}. Specifically, we first decompose the TWFEIV estimator into all possible $2 \times 2$ Wald-DID estimators. We then formally introduce the target parameter and identifying assumptions in staggered DID-IV designs, built on the recent work in \cite{Miyaji2023}. This allows us to decompose the TWFEIV estimand into a weighted average of the target parameter in staggered DID-IV designs. Finally, we assess the causal interpretation of the TWFEIV estimand under a variety of restrictions on the effects of the instrument on the treatment and outcome, which clarifies the sufficient conditions for this estimand to attain its causal interpretation.\par   
We note that this paper is distinct from the recent IV literature on the causal interpretation of two stage least square (TSLS) estimators with covariates under heterogeneous treatment effects (\cite{Sloczynski2020-uk}, \cite{Blandhol2022-lk}). These recent studies investigate the causal interpretation of the TSLS estimand with covariates under the random variation of the instrument conditional on covariates, and cast doubt on the LATE (or LATEs) interpretation of this estimand. In this literature, the identifying variations come from the assignment process of the instrument. In this paper, however, we investigate the causal interpretation of the TWFEIV estimand (where time and unit dummies can be viewed as covariates) under staggered DID-IV designs: our identifying variations mainly come from the parallel trends assumptions in the treatment and the outcome over time.

\section{Instrumented difference-in-differences decomposition}\label{sec2}
In this section, we present a decomposition result for the two-way fixed effects instrumental variable (TWFEIV) estimator in multiple time period settings with the staggered adoption of the instrument across units.\par
\subsection{Set up}\label{sec2.1}
We introduce the notation we use throughout this article. We consider a panel data setting with $T$ periods and $N$ units. For each $i \in \{1,\dots N\}$ and $t \in \{1,\dots,T\}$, let $Y_{i,t}$ denote the outcome and $D_{i,t} \in \{0,1\}$ denote the treatment status, and $Z_{i,t}\in \{0,1\}$ denote the instrument status. Let $D_i=(D_{i,1},\dots,D_{i,T})$ and $Z_i=(Z_{i,1},\dots,Z_{i,T})$ denote the path of the treatment and the path of the instrument for unit $i$, respectively. Throughout this article, we assume that $\{Y_{i,t},D_{i,t},Z_{i,t}\}_{t=1}^{T}$ are independent and identically distributed (i.i.d). \par
We make the following assumption about the assignment process of the instrument.\par
\begin{Assumption}[Staggered adoption for $Z_{i,t}$]
\label{sec2as1}
For $s < t$, $Z_{i,s} \leq Z_{i,t}$ where $s,t \in \{1,\dots T\}$.
\end{Assumption}
Assumption \ref{sec2as1} requires that once units start exposed to the instrument, they remain exposed to that instrument afterward. In the DID literature, several recent papers impose this assumption on the adoption process of the treatment and sometimes call it the "staggered treatment adoption", see, e.g., \cite{Athey2022-uo}, \cite{Callaway2021-wl} and \cite{Sun2021-rp}.\par
Given Assumption \ref{sec2as1}, we can uniquely characterize the instrument path by the time period when unit $i$ is first exposed to the instrument, denoted by $E_{i}=\min\{t: Z_{i,t}=1\}$. If unit $i$ is not exposed to the instrument for all time periods, we define $E_{i}=\infty$. Based on the initial exposure period $E_i$, we can uniquely partition units into mutually exclusive and exhaustive cohorts $e$ for $e \in \{1,2,\dots, T,\infty\}$: all the units in cohort $e$ are first exposed to the instrument at time $E_{i}=e$. Hereafter, to ease the notation, we assume that the data contain $K$ cohorts $(K \leq T)$ where $e \in \{1,\dots,k,\dots, K\}$, and define $U$ as the never exposed cohort $E_i=\infty$.\par
Let $n_e$ be the relative sample share for cohort $e$ and let $\bar{Z}_e$ be the time share of the exposure to the instrument for cohort $e$:
\begin{align*}
n_e \equiv \frac{\sum_i{\mathbf{1}\{E_i=e\}}}{N},\hspace{3mm}\bar{Z}_e \equiv \frac{\sum_t{\mathbf{1}\{t \geq e\}}}{T}.
\end{align*}
We also define $n_{ab} \equiv \displaystyle\frac{n_a}{n_a+n_b}$ to be the relative sample share between cohort $a$ and $b$.\par 
In contrast to the staggered adoption of the instrument across units, we allow the general adoption process for the treatment: the treatment can potentially turn on/off repeatedly over time. \cite{De_Chaisemartin2020-dw} and \cite{Imai2021-dn} consider the same setting in the recent DID literature.\par
The notations $PRE(a)$, $MID(a,b)$, and $POST(a)$ represent the corresponding time window, respectively: $PRE(a) \equiv [1,a)$, $MID(a,b) \equiv [a,b)$, and $POST(a) \equiv [a,T]$. Let $\bar{R}_{e}^{POST(a)}$ be the sample mean of the random variable $R_{i,t}$ in cohort $e$ during the time window $POST(a)$:
\begin{align*}
\bar{R}_{e}^{POST(a)} \equiv \frac{1}{T-(a-1)}\sum_{a}^{T}\left[\frac{\sum_{i}R_{i,t}\mathbf{1}\{E_i=e\}}{\sum_{i}\mathbf{1}\{E_i=e\}}\right].
\end{align*}
We define $\bar{R}_{e}^{PRE(a)}$ and $\bar{R}_{e}^{MID(a,b)}$ analogously, representing the sample mean of the random variable $R_{i,t}$ in cohort $e$ during the time window $PRE(a)$ and $MID(a,b)$ respectively.\par
\subsection{Decomposing the TWFEIV estimator}\label{2.2}
We consider a TWFEIV regression in multiple time period settings with the staggered adoption of the instrument across units:
\begin{align}
\label{TWFEIV1}
    &Y_{i,t}=\phi_{i.}+\lambda_{t.}+\beta_{IV} D_{i,t}+v_{i,t},\\
    \label{TWFEIV2}
    &D_{i,t}=\gamma_{i.}+\zeta_{t.}+\pi Z_{i,t}+\eta_{i,t}.
\end{align}
By substituting the first stage regression \eqref{TWFEIV2} into the structural equation \eqref{TWFEIV1}, we obtain the reduced form regression:
\begin{align}
\label{TWFEIV3}
Y_{i,t}=\phi_{i.}+\lambda_{t.}+\alpha Z_{i,t}+v_{i,t}.
\end{align}
The ratio between the first stage coefficient $\hat{\pi}$ and the reduced form coefficient $\hat{\alpha}$ yields the TWFEIV estimator $\hat{\beta}_{IV}$. By the Frisch-Waugh-Lovell theorem, the IV estimator $\hat{\beta}_{IV}$ is equal to the ratio between the coefficient from regressing $Y_{i,t}$ on the double-demeaning variable $\Tilde{Z}_{i,t}$ and the coefficient from regressing $D_{i,t}$ on the same variable:
\begin{align}
\label{sec4eq3}
\hat{\beta}_{IV}&=\frac{\frac{1}{NT}\sum_{i}\sum_{t}\Tilde{Z}_{i,t}Y_{i,t}}{\frac{1}{NT}\sum_{i}\sum_{t}\Tilde{Z}_{i,t}D_{i,t}},
\end{align}
where $\Tilde{Z}_{i,t}$ is the double demeaning variable defined below:
\begin{align*}
\Tilde{Z}_{i,t}&=Z_{i,t}-\frac{1}{T}\sum_{t=1}^T Z_{i,t}-\frac{1}{N}\sum_{i=1}^N Z_{i,t}+\frac{1}{NT}\sum_{t=1}^T\sum_{i=1}^N Z_{i,t}\\
&\equiv (Z_{i,t}-\bar{Z}_i)-(\bar{Z}_t-\bar{\bar{Z}}).
\end{align*}\par
Note that the TWFEIV regression runs the two-way fixed effects (TWFE) regression twice, as can be seen in equations \eqref{TWFEIV2} and \eqref{TWFEIV3}. Because we assume the staggered assignment of the instrument across units, if we focus on the TWFE coefficient on $Z_{i,t}$ in the first stage or the reduced form regression, we can show that it is equal to a weighted average of all possible $2 \times 2$ DID estimators of the treatment or the outcome from the decomposition result for the TWFE estimator shown by \cite{Goodman-Bacon2021-ej}.\par
Consider the simple setting where we have only two periods and two cohorts: one cohort is not exposed to the instrument during the two periods ($E_i=U$), whereas the other cohort starts exposed to the instrument in the second period ($E_i=2$). In this setting, the TWFEIV estimator takes the following form, the so-called Wald-DID estimator (\cite{De_Chaisemartin2018-xe}, \cite{Miyaji2023}):
\begin{align*}
\hat{\beta}_{IV} = \frac{\bar{Y}_{2,2}-\bar{Y}_{2,1}-(\bar{Y}_{U,2}-\bar{Y}_{U,1})}{\bar{D}_{2,2}-\bar{D}_{2,1}-(\bar{D}_{U,2}-\bar{D}_{U,1})},
\end{align*}
where $\bar{R}_{a,t}$ is the sample mean of the random variable $R_{i,t}$ for cohort $E_i=a$ in time $t$. This estimator scales the DID estimator of the outcome by the DID estimator of the treatment between cohort $E_i=U$ and $E_i=2$.\par
The above observations bring us the intuition about how we can decompose the TWFEIV estimator with settings of the staggered adoption of the instrument across units; we expect that the TWFEIV estimator can be decomposed into a weighted average of all possible $2 \times 2$ Wald-DID estimators (instead of DID-estimators).\par
To clarify this intuition, assume for now that we have only three cohorts, an early exposed cohort $k$, a middle exposed cohort $l$ $(k < l)$, and a never exposed cohort $U$ $(E_i=\infty)$. Figure \ref{Figure1} plots the simulated data for the time trends of the average treatment (first stage) and the average outcome (reduced form) in three cohorts.\par
\begin{figure}[t]
  \begin{minipage}[b]{0.5\columnwidth}
    \centering
    \includegraphics[width=\columnwidth]{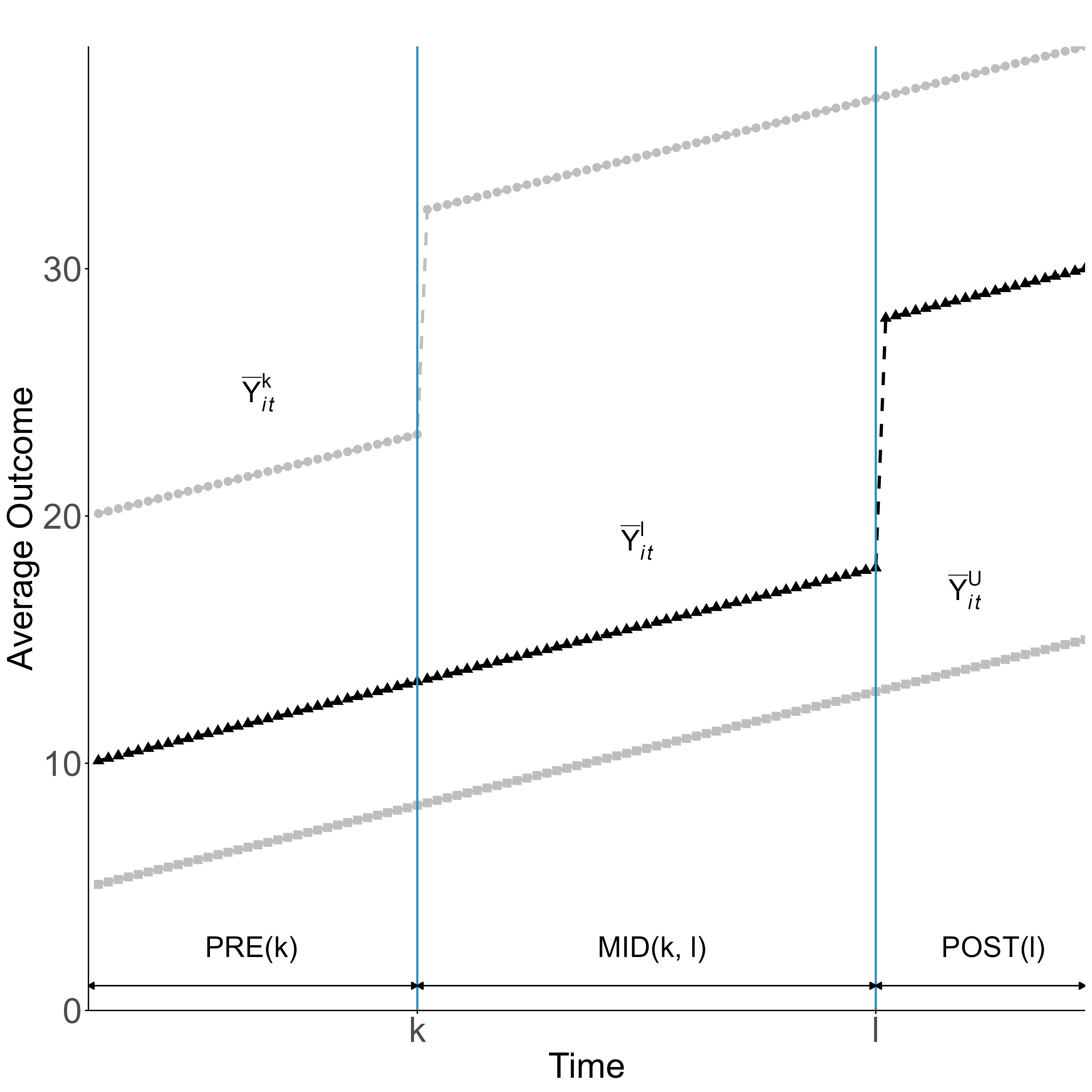}
  \end{minipage}
  \hspace{0.02\columnwidth} 
  \begin{minipage}[b]{0.5\columnwidth}
    \centering
    \includegraphics[width=\columnwidth]{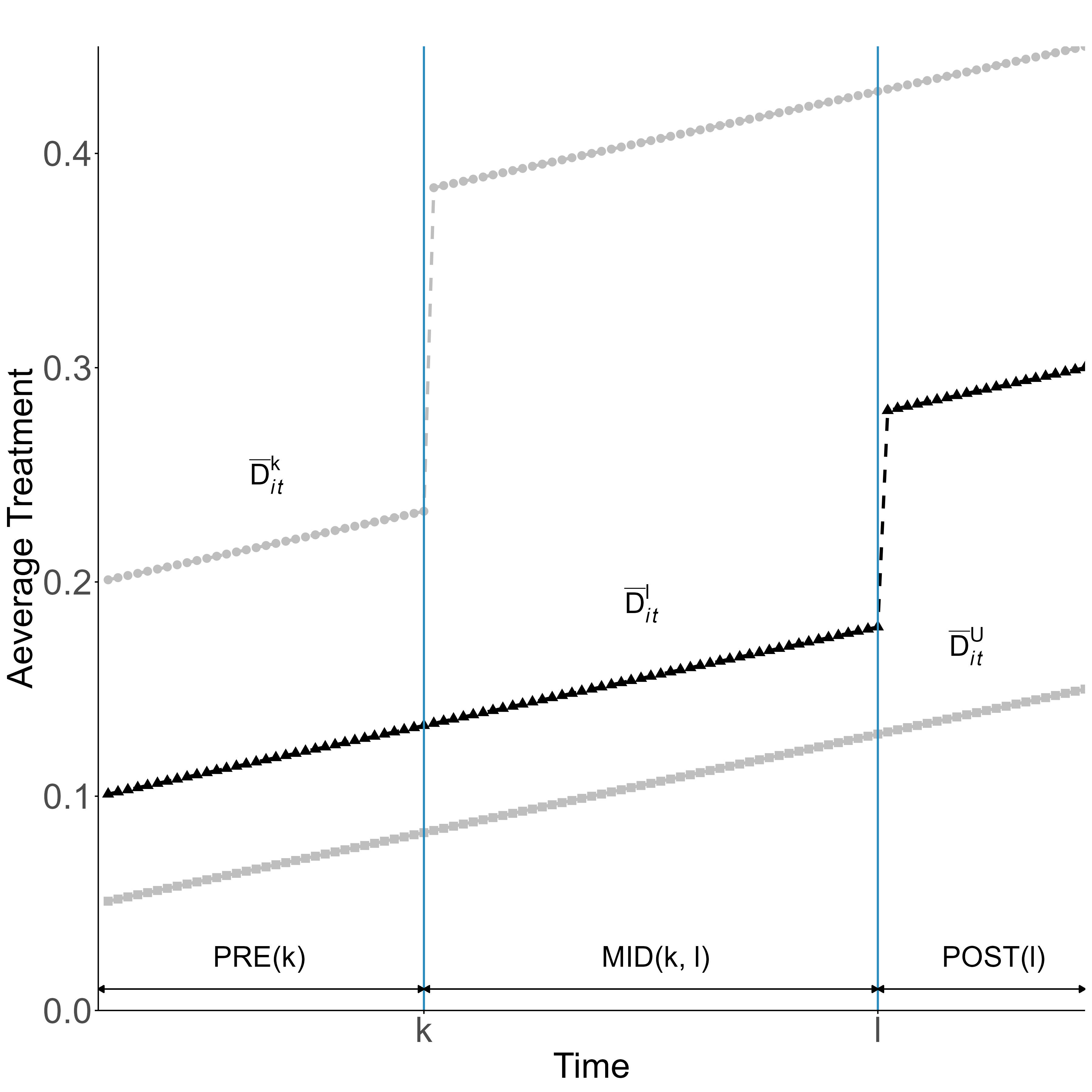}
  \end{minipage}
  \caption{Instrumented difference-in-differences with three cohorts. {\it Notes:} This figure plots the simulated data for the time trends of the average treatment (first stage) and the average outcome (reduced form) with time length $T=100$ in three cohorts: an early exposed cohort $k$, which is exposed to the instrument at $k=\frac{34}{100}T$; a middle exposed cohort $l$, which is exposed to the instrument at $l=\frac{80}{100}T$; a never exposed cohort, $U$. The $x$-axis consists of three time windows: the pre-exposed period for cohort $k$, $[1,k-1]$, denoted by $PRE(k)$; the middle exposed period when the cohort $k$ is already exposed but cohort $l$ is not yet exposed, $[k,l-1]$, denoted by $MID(k,l)$; and post-exposed period when cohort $l$ is already exposed, $[l, T]$, denoted by $POST(l)$. The effects of the instrument on the treatment and the outcome are $0.15$ and $9$ in cohort $k$ respectively; $0.1$ and $10$ in cohort $l$ respectively.}
  \label{Figure1}
\end{figure}
From the data structure, we can construct the Wald-DID estimator in three ways. First, we can compare the evolution of the treatment and the outcome between exposed cohort $j=k,l$ and never exposed cohort $U$, exploiting the time window $POST(j)$ and $PRE(j)$, which we call an Unexposed/Exposed design:
\begin{align}
\label{sec4eq4}
\hat{\beta}_{IV, jU}^{2\times2} &\equiv \frac{\left(\bar{y}_{j}^{POST(j)}-\bar{y}_{j}^{PRE(j)}\right)-\left(\bar{y}_{U}^{POST(j)}-\bar{y}_{U}^{PRE(j)}\right)}{\left(\bar{D}_{j}^{POST(j)}-\bar{D}_{j}^{PRE(j)}\right)-\left(\bar{D}_{U}^{POST(j)}-\bar{D}_{U}^{PRE(j)}\right)},\hspace{5mm}j=k,l,\\
&\equiv \frac{\hat{\beta}_{jU}^{2\times2}}{\hat{D}_{jU}^{2\times2}},\hspace{5mm}j=k,l.\notag
\end{align}\par
Second, we can construct the Wald-DID estimator, leveraging variation in the timing of the initial exposure to the instrument between exposed cohorts. Consider an early exposed cohort $k$ and a middle exposed cohort $l$. Before period $l$, the early exposed cohort $k$ is already exposed to the instrument, while the middle exposed cohort $l$ is not yet exposed to the instrument. In this setting, we can view that the middle exposed cohort $l$ plays the role of the control group in both the first stage and the reduced form. From this observation, we can compare the evolution of the treatment and the outcome between the early exposed cohort $k$ and middle exposed cohort $l$, exploiting the time window $MID(k,l)$ and $PRE(k)$, which we call an Exposed/Not Yet Exposed design:
\begin{align}
\label{sec4eq5}
\hat{\beta}_{IV, kl}^{2\times2,k} &\equiv \frac{\left(\bar{y}_{k}^{MID(k,l)}-\bar{y}_{k}^{PRE(k)}\right)-\left(\bar{y}_{l}^{MID(k,l)}-\bar{y}_{l}^{PRE(k)}\right)}{\left(\bar{D}_{k}^{MID(k,l)}-\bar{D}_{k}^{PRE(k)}\right)-\left(\bar{D}_{l}^{MID(k,l)}-\bar{D}_{l}^{PRE(k)}\right)}\\
&\equiv \frac{\hat{\beta}_{kl}^{2\times2,k}}{\hat{D}_{kl}^{2\times2,k}}\notag.
\end{align}\par
Finally, if we focus on the middle exposed cohort $l$, which changes the exposure status from being unexposed to being exposed at time $l$, we can regard the early exposed cohort $k$ as the control group after time $l$ because this cohort is already exposed to the instrument at time $l$. We can compare the evolution of the treatment and the outcome between early exposed cohort $k$ and middle exposed cohort $l$, exploiting the time window $MID(k,l)$ and $POST(l)$, which we call an Exposed/Exposed Shift design:
\begin{align}
\label{sec4eq6}
\hat{\beta}_{IV, kl}^{2\times2,l} &\equiv \frac{\left(\bar{y}_{l}^{POST(l)}-\bar{y}_{l}^{MID(k,l)}\right)-\left(\bar{y}_{k}^{POST(l)}-\bar{y}_{k}^{MID(k,l)}\right)}{\left(\bar{D}_{l}^{POST(l)}-\bar{D}_{l}^{MID(k,l)}\right)-\left(\bar{D}_{k}^{POST(l)}-\bar{D}_{k}^{MID(k,l)}\right)}\\
&\equiv \frac{\hat{\beta}_{kl}^{2\times2,l}}{\hat{D}_{kl}^{2\times2,l}}.\notag
\end{align}\par
In each type of the DID-IV design, we have three sources of variation. First, each design exploits the subsample from all $NT$ observations. The Unexposed/Exposed DID-IV design in \eqref{sec4eq4} uses two cohorts and all time periods, indicating that the relative sample share is $n_k+n_u$. The Exposed/Not Yet Exposed DID-IV design in \eqref{sec4eq5} uses two cohorts but exploits only the time periods before period $l$, so the relative sample share is $(1-\bar{Z}_l)(n_k+n_l)$. The Exposed/Exposed Shift DID-IV design in \eqref{sec4eq6} uses two cohorts but exploits only the time periods after period $k$, so the relative sample share is $\bar{Z}_k(n_k+n_l)$.\par
Second, the variation in each type of the DID-IV design partly comes from the variation of the instrument in its subsample. It is equal to the variance of the double demeaning variable $\Tilde{Z}_{i,t}$ in each design:
\begin{align}
\label{sec4eq7}
&\hat{V}_{jU}^Z \equiv n_{jU}(1-n_{jU})\bar{Z}_j(1-\bar{Z}_j),\hspace{5mm}j=k,l,\\
\label{sec4eq8}
&\hat{V}_{kl}^{Z,k} \equiv n_{kl}(1-n_{kl})\left(\frac{\bar{Z}_{k}-\bar{Z}_{l}}{1-\bar{Z}_{l}}\right)\left(\frac{1-\bar{Z}_{k}}{1-\bar{Z}_{l}}\right),\\
\label{sec4eq9}
&\hat{V}_{kl}^{Z,l} \equiv n_{kl}(1-n_{kl})\left(\frac{\bar{Z}_{l}}{\bar{Z}_{k}}\right)\left(\frac{\bar{Z}_{k}-\bar{Z}_{l}}{\bar{Z}_{k}}\right),
\end{align}
where the $\hat{V}_{jU}^Z$, $\hat{V}_{kl}^{Z,k}$ and $\hat{V}_{kl}^{Z,l}$ represent the variance of the double demeaning variable $\Tilde{Z}_{i,t}$ in Unexposed/Exposed, Exposed/Not Yet Exposed, and Exposed/Exposed Shift DID-IV designs, respectively. In the staggered DID set up, \cite{Goodman-Bacon2021-ej} also describes the two variations, that is, the relative sample share and the variance of the double demeaning treatment variable in each type of the DID designs.\par
Unlike the staggered DID set up, however, each DID-IV design has an additional source of the variation; the effect of the instrument on the treatment in the first stage. This comes from the fact that each DID-IV design allows the noncompliance of receiving the treatment when units are exposed to the instrument. The amount of this variation is equal to the $2\times2$ DID estimator of the treatment in each DID-IV design:
\begin{align*}
&\hat{D}_{jU}^{2\times2}\equiv \left(\bar{D}_{j}^{POST(j)}-\bar{D}_{j}^{PRE(j)}\right)-\left(\bar{D}_{U}^{POST(j)}-\bar{D}_{U}^{PRE(j)}\right)\hspace{5mm}j=k,l,\\
&\hat{D}_{kl}^{2\times2,k}\equiv\left(\bar{D}_{k}^{MID(k,l)}-\bar{D}_{k}^{PRE(k)}\right)-\left(\bar{D}_{l}^{MID(k,l)}-\bar{D}_{l}^{PRE(k)}\right),\\
&\hat{D}_{kl}^{2\times2,l}\equiv\left(\bar{D}_{l}^{POST(l)}-\bar{D}_{l}^{MID(k,l)}\right)-\left(\bar{D}_{k}^{POST(l)}-\bar{D}_{k}^{MID(k,l)}\right).
\end{align*}\par
Note that the denominator of the TWFEIV estimator $\hat{\beta}_{IV}$ in \eqref{sec4eq3}, which we denote $\hat{C}^{D,Z}$ hereafter, measures the covariance between the instrument $Z_{i,t}$ and the treatment $D_{i,t}$ in whole samples. By some calculations (see the proof of Theorem \ref{sec3thm1} below), one can show that $\hat{C}^{D,Z}$ is equal to a weighted average of all possible $2\times2$ DID estimators of the treatment in each DID-IV design:
\begin{align*}
\hat{C}^{D,Z}=\sum_{k \neq U}\hat{w}_{kU}\hat{D}_{kU}^{2\times2}+\sum_{k \neq U}\sum_{l >k}[\hat{w}_{kl}^{k}\hat{D}_{kl}^{2\times2,k}+\hat{w}_{kl}^{l}\hat{D}_{kl}^{2\times2,l}],
\end{align*}
where the weights are:
\begin{align*}
&\hat{w}_{kU}=(n_k+n_u)^2\hat{V}_{kU}^{Z},\\
&\hat{w}_{kl}^{k}=((n_k+n_l)(1-\bar{Z}_l))^2\hat{V}_{kl}^{Z,k},\\
&\hat{w}_{kl}^{l}=((n_k+n_l)\bar{Z}_k)^2\hat{V}_{kl}^{Z,l}.
\end{align*}
Hereafter, we refer to $\hat{w}_{kU}$, $\hat{w}_{kl}^{k}$, and $\hat{w}_{kl}^{l}$ as the first stage weights. This decomposition result for $\hat{C}^{D,Z}$ is almost identical to that of \cite{Goodman-Bacon2021-ej} for the TWFE estimator under staggered DID designs, but the slight difference here is that each weight is not scaled by the variance of the double demeaning variable $\Tilde{Z}_{it}$ in whole samples.\par
We now present the decomposition theorem for the TWFEIV estimator under the staggered assignment of the instrument across units. Theorem \ref{sec3thm1} below is a generalization of the decomposition result for the TWFE estimator with settings of the staggered assignment of the treatment across units in \cite{Goodman-Bacon2021-ej}.\par
\begin{Theorem}[Instrumented Difference-in-Differences Decomposition Theorem]
\label{sec3thm1}
Suppose that there exist $K$ cohorts, $e=1,\dots,k,\dots,K$. The data may also contain a never exposed cohort $U$. Then, the two-way fixed effects instrumental variable estimator $\hat{\beta}_{IV}$ in \eqref{sec4eq3} is a weighted average of all possible $2 \times 2$ Wald-DID estimators.
\begin{align*}
\hat{\beta}_{IV}=\bigg[\sum_{k \neq U}\hat{w}_{IV, kU}\hat{\beta}_{IV, kU}^{2\times2}+\sum_{k \neq U}\sum_{l >k}\hat{w}_{IV, kl}^{k}\hat{\beta}_{IV, kl}^{2\times2,k}+\hat{w}_{IV, kl}^{l}\hat{\beta}_{IV, kl}^{2\times2,l}\bigg].
\end{align*}
The $2\times2$ Wald-DID estimators are:
\begin{align*}
&\hat{\beta}_{IV, kU}^{2\times2} \equiv \frac{\left(\bar{y}_{k}^{POST(k)}-\bar{y}_{k}^{PRE(k)}\right)-\left(\bar{y}_{U}^{POST(k)}-\bar{y}_{U}^{PRE(k)}\right)}{\left(\bar{D}_{k}^{POST(k)}-\bar{D}_{k}^{PRE(k)}\right)-\left(\bar{D}_{U}^{POST(k)}-\bar{D}_{U}^{PRE(k)}\right)},\\
&\hat{\beta}_{IV, kl}^{2\times2,k} \equiv \frac{\left(\bar{y}_{k}^{MID(k,l)}-\bar{y}_{k}^{PRE(k)}\right)-\left(\bar{y}_{l}^{MID(k,l)}-\bar{y}_{l}^{PRE(k)}\right)}{\left(\bar{D}_{k}^{MID(k,l)}-\bar{D}_{k}^{PRE(k)}\right)-\left(\bar{D}_{l}^{MID(k,l)}-\bar{D}_{l}^{PRE(k)}\right)},\\
&\hat{\beta}_{IV, kl}^{2\times2,l} \equiv \frac{\left(\bar{y}_{l}^{POST(l)}-\bar{y}_{l}^{MID(k,l)}\right)-\left(\bar{y}_{k}^{POST(l)}-\bar{y}_{k}^{MID(k,l)}\right)}{\left(\bar{D}_{l}^{POST(l)}-\bar{D}_{l}^{MID(k,l)}\right)-\left(\bar{D}_{k}^{POST(l)}-\bar{D}_{k}^{MID(k,l)}\right)}.
\end{align*}
The weights are:
\begin{align*}
&\hat{w}_{IV, kU}=\frac{\hat{w}_{kU}\hat{D}_{kU}^{2\times2}}{\hat{C}^{D,Z}}\\
&\hat{w}_{IV, kl}^{k}=\frac{\hat{w}_{kl}^{k}\hat{D}_{kl}^{2\times2,k}}{\hat{C}^{D,Z}}\\
&\hat{w}_{IV, kl}^{l}=\frac{\hat{w}_{kl}^{l}\hat{D}_{kl}^{2\times2,l}}{\hat{C}^{D,Z}}.
\end{align*}
and sum to one, that is, we have $\sum_{k \neq U}w_{IV, kU}+\sum_{k \neq U}\sum_{l >k}[w_{IV, kl}^{k}+w_{IV, kl}^{l}]=1$.
\end{Theorem}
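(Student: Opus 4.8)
The plan is to reduce the theorem to the Goodman-Bacon (2021) decomposition applied twice — once to the first stage and once to the reduced form — using that both are two-way fixed effects regressions on the \emph{same} staggered regressor $Z_{i,t}$. First, by the Frisch--Waugh--Lovell identity \eqref{sec4eq3}, write $\hat{\beta}_{IV}=\hat{C}^{Y,Z}/\hat{C}^{D,Z}$, where $\hat{C}^{Y,Z}\equiv\frac{1}{NT}\sum_i\sum_t\Tilde{Z}_{i,t}Y_{i,t}$ and $\hat{C}^{D,Z}\equiv\frac{1}{NT}\sum_i\sum_t\Tilde{Z}_{i,t}D_{i,t}$ are the sample covariances of the double-demeaned instrument with the outcome and with the treatment. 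Under Assumption \ref{sec2as1} the instrument adoption is staggered, so for any panel variable $R_{i,t}$ the TWFE regression of $R_{i,t}$ on $Z_{i,t}$ with unit and time effects is exactly the setting of Goodman-Bacon's Theorem~1, with $Z_{i,t}$ playing the role of the treatment.

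Second, I would carry out the Goodman-Bacon computation for a generic $R_{i,t}$: expanding $\Tilde{Z}_{i,t}$ in terms of the cohort indicator $E_i$ and regrouping $\frac{1}{NT}\sum_i\sum_t\Tilde{Z}_{i,t}R_{i,t}$ over the time windows $PRE(\cdot)$, $MID(\cdot,\cdot)$, $POST(\cdot)$ and over ordered cohort pairs yields
\begin{align*}
\frac{1}{NT}\sum_i\sum_t\Tilde{Z}_{i,t}R_{i,t}=\sum_{k\neq U}\hat{w}_{kU}\,\hat{R}_{kU}^{2\times2}+\sum_{k\neq U}\sum_{l>k}\left[\hat{w}_{kl}^{k}\,\hat{R}_{kl}^{2\times2,k}+\hat{w}_{kl}^{l}\,\hat{R}_{kl}^{2\times2,l}\right],
\end{align*}
where $\hat{R}^{2\times2}_{\cdot}$ is the corresponding $2\times2$ DID estimator of $R$ (as in \eqref{sec4eq4}--\eqref{sec4eq6} with $R$ replacing $Y$), and the weights $\hat{w}_{kU}$, $\hat{w}_{kl}^{k}$, $\hat{w}_{kl}^{l}$ are exactly those displayed before the theorem — in particular they depend only on the cohort shares $n_e$ and time shares $\bar{Z}_e$, not on $R$. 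Specializing to $R=D$ reproduces the stated decomposition of $\hat{C}^{D,Z}$; specializing to $R=Y$ gives the analogous decomposition of $\hat{C}^{Y,Z}$ into the $2\times2$ DID estimators $\hat{\beta}^{2\times2}_{\cdot}$ of the outcome, with the same weights.

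Third, I combine the two decompositions. By the definitions in \eqref{sec4eq4}--\eqref{sec4eq6}, each outcome DID estimator factors as $\hat{\beta}_{kU}^{2\times2}=\hat{D}_{kU}^{2\times2}\,\hat{\beta}_{IV,kU}^{2\times2}$, and likewise $\hat{\beta}_{kl}^{2\times2,k}=\hat{D}_{kl}^{2\times2,k}\,\hat{\beta}_{IV,kl}^{2\times2,k}$ and $\hat{\beta}_{kl}^{2\times2,l}=\hat{D}_{kl}^{2\times2,l}\,\hat{\beta}_{IV,kl}^{2\times2,l}$ in the Exposed/Not Yet Exposed and Exposed/Exposed Shift designs. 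Substituting into the expression for $\hat{C}^{Y,Z}$ and dividing by $\hat{C}^{D,Z}$ gives
\begin{align*}
\hat{\beta}_{IV}=\sum_{k\neq U}\frac{\hat{w}_{kU}\hat{D}_{kU}^{2\times2}}{\hat{C}^{D,Z}}\,\hat{\beta}_{IV,kU}^{2\times2}+\sum_{k\neq U}\sum_{l>k}\left[\frac{\hat{w}_{kl}^{k}\hat{D}_{kl}^{2\times2,k}}{\hat{C}^{D,Z}}\,\hat{\beta}_{IV,kl}^{2\times2,k}+\frac{\hat{w}_{kl}^{l}\hat{D}_{kl}^{2\times2,l}}{\hat{C}^{D,Z}}\,\hat{\beta}_{IV,kl}^{2\times2,l}\right],
\end{align*}
which is precisely the asserted form with $\hat{w}_{IV,kU}$, $\hat{w}_{IV,kl}^{k}$, $\hat{w}_{IV,kl}^{l}$ as in the statement. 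Summing these IV weights gives $\left(\sum_{k\neq U}\hat{w}_{kU}\hat{D}_{kU}^{2\times2}+\sum_{k\neq U}\sum_{l>k}[\hat{w}_{kl}^{k}\hat{D}_{kl}^{2\times2,k}+\hat{w}_{kl}^{l}\hat{D}_{kl}^{2\times2,l}]\right)/\hat{C}^{D,Z}$, which equals $1$ by the $\hat{C}^{D,Z}$ decomposition established in the previous step.

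I expect the main obstacle to be the second step: the explicit algebra showing that $\frac{1}{NT}\sum_i\sum_t\Tilde{Z}_{i,t}R_{i,t}$ reorganizes into the three families of $2\times2$ DID contrasts with exactly the weights $\hat{w}_{kU}$, $\hat{w}_{kl}^{k}$, $\hat{w}_{kl}^{l}$, including matching each coefficient to $n_{ab}(1-n_{ab})$ times the relevant time-share factor and verifying the variance identities \eqref{sec4eq7}--\eqref{sec4eq9}. For a general number of cohorts $K$ this is a lengthy but routine bookkeeping argument, structurally identical to the proof of Theorem~1 in \cite{Goodman-Bacon2021-ej} with $Z_{i,t}$ substituted for the treatment indicator; everything after that — the factorization of the outcome DID into the Wald-DID and the treatment DID, and taking the ratio — is immediate.
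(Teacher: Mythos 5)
Your proposal is correct and follows essentially the same route as the paper's proof: apply the Frisch--Waugh--Lovell identity, collapse the covariance $\frac{1}{NT}\sum_i\sum_t\Tilde{Z}_{i,t}R_{i,t}$ to cohort level and decompose it (via Goodman-Bacon's Lemma 1 and the window-by-window algebra over $PRE$, $MID$, $POST$) for $R=Y$ and $R=D$ with identical $R$-independent weights, then factor each outcome DID as the treatment DID times the Wald-DID and take the ratio, with the weights summing to one because $\hat{C}^{D,Z}$ equals the sum of the weight numerators. The "main obstacle" you flag is exactly the computation the paper carries out explicitly in Appendix A, and your outline of it is accurate.
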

\begin{proof}
See Appendix \ref{ApeA}.
\end{proof}
Theorem \ref{sec3thm1} shows that when the assignment of the instrument is staggered across units, the TWFEIV estimator is a weighted average of all possible $2 \times 2$ Wald-DID estimators. If there exist $K$ cohorts in the data, we have $K^2-K$ Wald-DID estimators, which come from either Exposed/Not Yet Exposed designs as in \eqref{sec4eq5} or Exposed/Exposed shift designs as in \eqref{sec4eq6}. If the data contains a never exposed cohort $U$, we have additionally $K$ Wald-DID estimators, which come from Unexposed/Exposed designs as in \eqref{sec4eq4}. If both situations occur, the TWFEIV estimator equals a weighted average of $K^2$ Wald-DID estimators.\par
The weight assigned to each Wald-DID estimator consists of three parts: the relative sample share squared, the variance of the double demeaning variable $\Tilde{Z}_{i,t}$, and the DID estimator of the treatment in each DID-IV design. The first part depends on the sample share of two cohorts and the timing of the initial exposure date. The second part reflects the variation of the instrument in the subsample, represented by \eqref{sec4eq7}-\eqref{sec4eq9}, and depends on the relative sample share between two cohorts and the timing of the initial exposure date. Finally, the remaining part reflects variation in the evolution of the treatment between the two cohorts. Note that the weight is not guaranteed to be non-negative in finite sample settings: although the first and second parts are always non-negative, the DID estimator of the treatment can be potentially negative in the data.\par
Theorem \ref{sec3thm1} also shows that if we subset the data containing only two cohorts (cohorts $k$ and $l$), the TWFEIV estimator $\beta_{IV,kl}^{2 \times 2}$ in the subsample can be written as:
\begin{align*}
\beta_{IV,kl}^{2 \times 2}=\frac{\hat{w}_{kl}^{k}\hat{D}_{kl}^{2 \times 2,k}}{\hat{w}_{kl}^{k}\hat{D}_{kl}^{2 \times 2,k}+\hat{w}_{kl}^{l}\hat{D}_{kl}^{2 \times 2,l}}\beta_{IV,kl}^{2 \times 2,k}+\frac{\hat{w}_{kl}^{l}\hat{D}_{kl}^{2 \times 2,l}}{\hat{w}_{kl}^{k}\hat{D}_{kl}^{2 \times 2,k}+\hat{w}_{kl}^{l}\hat{D}_{kl}^{2 \times 2,l}}\beta_{IV,kl}^{2 \times 2,l}.
\end{align*}
The TWFEIV estimator $\beta_{IV,kl}^{2 \times 2}$ is a weighted average of the Wald-DID estimators which come from either Exposed/Not Yet Exposed design or Exposed/Exposed Shift design, and the weight assigned to each Wald-DID estimator reflects the first stage weight and the DID estimator of the treatment in each DID-IV design.\par
To make the DID-IV decomposition theorem concrete, we provide a simple numerical example. Suppose we have three cohorts with equal sample size, as shown in Figure \ref{Figure1}. In this figure, we set an early exposed period $k$ and a middle exposed period $l$ such that $\bar{Z}_k=0.67$ and $\bar{Z}_l=0.21$. We assume that the effect of the instrument on the treatment is $0.15$ in cohort $k$ and $0.1$ in cohort $l$ over time. This means that the units in cohort $k$ are more induced to the treatment by the instrument than those in cohort $l$ and the effects are stable in both cohorts. The DID estimates of the treatment are $\{\hat{D}_{kU}^{2 \times 2},\hat{D}_{lU}^{2 \times 2},\hat{D}_{kl}^{2 \times 2,k},\hat{D}_{kl}^{2 \times 2,l}\}=\{0.15,0.1,0.15,0.1\}$. We also assume that the effect of the instrument on the outcome through treatment is $9$ in cohort $k$ and $10$ in cohort $l$ over time. The DID estimates of the outcome are $\{\hat{Y}_{kU}^{2 \times 2},\hat{Y}_{lU}^{2 \times 2},\hat{Y}_{kl}^{2 \times 2,k},\hat{Y}_{kl}^{2 \times 2,l}\}=\{9,10,9,10\}$. Dividing the DID estimate of the treatment by the DID estimate of the outcome yields the Wald-DID estimate: $\{\hat{\beta}_{kU}^{2 \times 2},\hat{\beta}_{lU}^{2 \times 2},\hat{\beta}_{kl}^{2 \times 2,k},\hat{\beta}_{kl}^{2 \times 2,l}\}=\{60,100,60,100\}$. The Wald-DID estimate is larger in cohort $l$ than that of cohort $k$, though as we already noted, the effect of the instrument on the treatment is larger in cohort $k$ than that of cohort $l$.\par
The DID estimates of the treatment and the exposure timing determine the amount of the weight assigned to each Wald-DID estimate, holding the sample size equal across cohorts. In the above setting, the resulting weights are $\{\hat{w}_{IV,kU},\hat{w}_{IV,lU},\hat{w}_{IV,kl}^{k},\hat{w}_{IV,kl}^{l}\}=\{0.28,0.12,0.40,0.20\}$. In Unexposed/Exposed designs, we have $\hat{w}_{IV, kU} > \hat{w}_{IV, lU}$ for two reasons. First, the DID estimate of the treatment is larger in cohort $k$ than that of cohort $l$, that is, we have $\hat{D}_{kU}^{2 \times 2}=0.15 > 0.1=\hat{D}_{lU}^{2 \times 2}$. Second, the time period $k$ is closer to the middle in the whole period than the time period $l$, that is, we have $\bar{Z}_k(1-\bar{Z}_k)=0.22 > 0.17=\bar{Z}_l(1-\bar{Z}_l)$, which implies $\hat{w}_{kU} > \hat{w}_{lU}$ in the first stage weight. By the similar argument, we have $\hat{w}_{IV, kl}^{k} > \hat{w}_{IV, kl}^{l}$ between Exposed/Not Yet Exposed and Exposed/Exposed Shift designs: we have $\hat{D}_{kl}^{2 \times 2,k}=0.15 > 0.1=\hat{D}_{kl}^{2 \times 2,l}$ and $\hat{w}_{kl}^{k} > \hat{w}_{kl}^{l}$ in the first stage weight. If the DID estimates of the treatment are equal between the two designs, the exposure timing matters: we have $\hat{w}_{IV,kU} < \hat{w}_{IV, kl}^{k}$ and $\hat{w}_{IV,lU} < \hat{w}_{IV,kl}^{l}$. The DD estimates are the same in each comparison, that is, we have $\hat{D}_{kU}^{2 \times 2}=\hat{D}_{kl}^{2 \times 2,k}$ and $\hat{D}_{lU}^{2 \times 2}=\hat{D}_{kl}^{2 \times 2,l}$. However, the different initial exposure date yields different weights in the first stage, that is, we have $\hat{w}_{kU}<\hat{w}_{kl}^{k}$ and $\hat{w}_{lU}<\hat{w}_{kl}^{l}$, which make the difference above the two comparisons.\par
In this numerical example, the simple average of the Wald-DID estimates is $80$ and the weighted average is $100 \times \frac{3}{5}+60 \times \frac{2}{5}=84$ where the weight assigned to the Wald-DID estimate reflects the relative amount of the DID estimate of the treatment. The TWFEIV estimate, however, is $\hat{\beta}_{IV}=60 \times (0.28+0.40)+100 \times (0.12+0.20)=72.8$ because it assigns more weights on the smaller Wald-DID estimate.\par
Theorem \ref{sec3thm1} is a decomposition result for the TWFEIV estimator and not for the estimand. Related to the work in this paper, \cite{De_Chaisemartin2020-dw} decompose the TWFE estimand and present the issue regarding the use of this estimand under DID designs: some weights assigned to the causal parameters in this estimand can be potentially negative. In their appendix, the authors also decompose the TWFEIV estimand, and refer to the negative weight problem in this estimand. Specifically, they apply their decomposition theorem for the TWFE estimand to the numerator and the denominator of the TWFEIV estimand respectively, and conclude that this estimand identifies the local average treatment effect as in \cite{Imbens1994-qy} only if the effects of the instrument on the treatment and the outcome are homogeneous across groups and over time. In fact, the population coefficients on the instrument in the first stage and the reduced form regressions take the form of the TWFE estimand and their decomposition theorem for the TWFE estimand is also applicable to the analysis of the TWFEIV estimand. However, the way of their decomposition for the TWFEIV estimand has some drawbacks. First, they do not formally state the target parameter and identifying assumptions in DID-IV designs. Second, their decomposition for the TWFEIV estimand is not based on the target parameter in DID-IV designs. Finally, the sufficient conditions for this estimand to have its causal interpretation are not well explored.\par
In the following section, we explore the causal interpretation of the TWFEIV estimand under staggered DID-IV designs. In section \ref{sec3}, we first define the target parameter and identifying assumptions in staggered DID-IV designs. In section \ref{sec4}, based on the decomposition theorem for the TWFEIV estimator, we then provide the causal interpretation of the TWFEIV estimand under staggered DID-IV designs. Finally, we investigate the sufficient conditions for this estimand to attain its causal interpretation under staggered DID-IV designs.

\section{Staggered instrumented difference-in-differences}\label{sec3}
In this section, we formalize the staggered instrumented difference-in-differences (DID-IV), built on the recent work in \cite{Miyaji2023}. We first introduce the additional notation. We then define the target parameter and identifying assumptions in staggered DID-IV designs. 
\subsection{Notation}\label{sec3.1}
First, we introduce the potential outcomes framework. Let $Y_{i,t}(d,z)$ denote the potential outcome in period $t$ when unit $i$ receives the treatment path $d \in \mathcal{S}(D)$ and the instrument path $z \in \mathcal{S}(Z)$. Similarly, let $D_{i,t}(z)$ denote the potential treatment status in period $t$ when unit $i$ receives the instrument path $z \in \mathcal{S}(Z)$.\par
Assumption \ref{sec2as1} allows us to rewrite $D_{i,t}(z)$ by the initial adoption date $E_i=e$. Let $D_{i,t}^{e}$ denote the potential treatment status in period $t$ if unit $i$ is first exposed to the instrument in period $e$. Let $D_{i,t}^{\infty}$ denote the potential treatment status in period $t$ if unit $i$ is never exposed to the instrument. Hereafter, we call $D_{i,t}^{\infty}$ the "never exposed treatment". Since the adoption date of the instrument uniquely pins down one's instrument path, we can write the observed treatment status $D_{i,t}$ for unit $i$ at time $t$ as
\begin{align*}
D_{i,t}=D_{i,t}^{\infty}+\sum_{1 \leq e \leq T}(D_{i,t}^{e}-D_{i,t}^{\infty})\cdot \mathbf{1}\{E_i=e\}.
\end{align*}\par
We define $D_{i,t}-D_{i,t}^{\infty}$ to be the effect of the instrument on the treatment for unit $i$ at time $t$, which is the difference between the observed treatment status $D_{i,t}$ to the never exposed treatment status $D_{i,t}^{\infty}$. Hereafter, we refer to $D_{i,t}-D_{i,t}^{\infty}$ as the individual exposed effect in the first stage. In the DID literature, \cite{Callaway2021-wl} and \cite{Sun2021-rp} define the effect of the treatment on the outcome in the same fashion.\par
Next, we introduce the group variable which describes the type of unit $i$ at time $t$,  based on the reaction of potential treatment choices at time $t$ to the instrument path $z$. Let $G_{i,e,t} \equiv (D_{i,t}^{\infty},D_{i,t}^{e}) (t \geq e)$ be the group variable at time $t$ for unit $i$ and the initial exposure date $e$. Specifically, the first element $D_{i,t}^{\infty}$ represents the treatment status at time $t$ if unit $i$ is never exposed to the instrument $E_i=\infty$ and the second element $D_{i,t}^{e}$ represents the treatment status at time $t$ if unit $i$ starts exposed to the instrument at $E_i=e$. Following to the terminology in \cite{Imbens1994-qy}, we define $G_{i,e,t}=(0,0) \equiv NT_{e,t}$ to be the never-takers, $G_{i,e,t}=(1,1) \equiv AT_{e,t}$ to be the always-takers, $G_{i,e,t}=(0,1) \equiv CM_{e,t}$ to be the compliers and $G_{i,e,t}=(1,0) \equiv DF_{e,t}$ to be the defiers at time $t$ and the initial exposure date $e$.\par
Finally, we make a no carryover assumption on potential outcomes $Y_{i,t}(d,z)$. 
\begin{Assumption}[No carryover assumption]
\label{sec2as2}
\begin{align*}
\forall z \in \mathcal{S}(Z), \forall d\in \mathcal{S}(D), \forall t\in \{1,\dots,T\},Y_{i,t}(d,z)=Y_{i,t}(d_t,z),
\end{align*}
where $d=(d_1,\dots,d_T)$ is the generic element of the treatment path $D_i$.
\end{Assumption}
This assumption requires that potential outcomes $Y_{i,t}(d,z)$ depend only on the current treatment status $d_t$ and the instrument path $z$. In the DID literature, several recent papers impose this assumption with settings of a non-staggered treatment; see, e.g., \cite{De_Chaisemartin2020-dw} and \cite{Imai2021-dn}. Although it can be possible to weaken this assumption by introducing the treatment path $d$ in potential outcomes $Y_{i,t}(d,z)$, this requires the cumbersome notation and complicates the definition of our target parameter, thus is beyond the scope of this paper. 
\par
Henceforth, we keep Assumption \ref{sec2as1} and \ref{sec2as2}. In the next section, we define the target parameter in staggered DID-IV designs.

\subsection{Target parameter in staggered DID-IV designs}\label{sec3.2}
Our target parameter in staggered DID-IV designs is the cohort specific local average treatment effect on the treated (CLATT) defined below.
\begin{Def}
The cohort specific local average treatment effect on the treated (CLATT) at a given relative period $l$ from the initial adoption of the instrument is 
\begin{align*}
CLATT_{e,l}&=E[Y_{i,e+l}(1)-Y_{i,e+l}(0)|E_i=e, D_{i,e+l}^{e} > D_{i,e+l}^{\infty}]\\
&=E[Y_{i,e+l}(1)-Y_{i,e+l}(0)|E_i=e,CM_{e,e+l}].
\end{align*}
\end{Def}
This parameter measures the treatment effects at a given relative period $l$ from the initial instrument adoption date $E_i=e$, for those who belong to cohort $e$, and are the compliers $CM_{e,e+l}$, that is, who are induced to treatment by instrument at time $e+l$. Each CLATT$_{e,l}$ can potentially vary across cohorts and over time, as it depends on cohort $e$, relative period $l$, and the compliers $CM_{e,e+l}$.\par

\subsection{Identifying assumptions in staggered DID-IV designs}\label{sec3.3}
In this section, we state the identifying assumptions in staggered DID-IV designs based on \cite{Miyaji2023}.\par
\begin{Assumption}[Exclusion Restriction in multiple time periods]
\label{sec2as3}
\begin{align*}
\forall z \in \mathcal{S}(Z),\forall d_t \in \mathcal{S}(D_t),\forall t \in \{1,\dots,T\}, Y_{i,t}(d,z)=Y_{i,t}(d)\hspace{3mm}a.s.
\end{align*}
\end{Assumption}
Assumption \ref{sec2as3} requires that the path of the instrument does not directly affect the potential outcome for all time periods and its effects are only through treatment. Given Assumption \ref{sec2as2} and Assumption \ref{sec2as3}, we can write the potential outcome $Y_{i,t}(d,z)$ as $Y_{i,t}(d_t)=D_{i,t}Y_{i,t}(1)+(1-D_{i,t})Y_{i,t}(0)$.\par 
Here, we introduce the potential outcomes at time $t$ if unit $i$ is assigned to the instrument path $z \in \mathcal{S}(Z)$: 
\begin{align*}
Y_{i,t}(D_{i,t}(z)) \equiv D_{i,t}(z)Y_{i,t}(1)+(1-D_{i,t}(z))Y_{i,t}(0).
\end{align*}
Since the exposure timing $E_i$ completely determines the path of the instrument, we can write the potential outcomes for cohort $e$ and cohort $\infty$ as $Y_{i,t}(D_{i,t}^{e})$ and $Y_{i,t}(D_{i,t}^{\infty})$, respectively. The potential outcome $Y_{i,t}(D_{i,t}^{e})$ represents the outcome status at time $t$ if unit $i$ is first exposed to the instrument at time $e$ and the potential outcome $Y_{i,t}(D_{i,t}^{\infty})$ represents the outcome status at time $t$ if unit $i$ is never exposed to the instrument. Hereafter, we refer to $Y_{i,t}(D_{i,t}^{\infty})$ as the "never exposed outcome".\par
\begin{Assumption}[Monotonicity Assumption in multiple time periods]
\label{sec2as4}
\begin{align*}
Pr(D_{i,e+l}^{e} \geq D_{i,e+l}^{\infty})=1\hspace{2mm}\text{or}\hspace{2mm}Pr(D_{i,e+l}^{e} \leq D_{i,e+l}^{\infty})=1\hspace{2mm}\text{for all}\hspace{2mm}e\in \mathcal{S}(E_i)\hspace{2mm}\text{and for all}\hspace{2mm}l \geq 0.
\end{align*}
\end{Assumption}
This assumption requires that the instrument path affects the treatment adoption behavior in a monotone way for all relative periods after the initial exposure. Recall that we define $D_{i,t}-D_{i,t}^{\infty}$ to be the effect of the instrument on the treatment for unit $i$ at time $t$. Assumption \ref{sec2as4} requires that the individual exposed effect in the first stage is non-negative (or non-positive) for all $i$ and all the time periods after the initial exposure. This assumption implies that the group variable $G_{i,e,t} \equiv (D_{i,t}^{\infty},D_{i,t}^{e})$ can take three values with non-zero probability for all $e$ and all $t \geq e$. Hereafter, we consider the type of the monotonicity assumption that rules out the existence of the defiers $DF_{e,t}$ for all $t \geq e$ in any cohort $e$.\par
\begin{Assumption}[No anticipation in the first stage]
\label{sec2as5}
\begin{align*}
D_{i,e+l}^{e}=D_{i,e+l}^{\infty}\hspace{2mm}a.s.\hspace{3mm}\text{for all units $i$,}\hspace{3mm}\text{for all}\hspace{2mm}e\in \mathcal{S}(E_i)\hspace{2mm}\text{and for all}\hspace{2mm}l<0.
\end{align*}
\end{Assumption}
Assumption \ref{sec2as5} requires that the potential treatment choice for the treatment in any $l$ period before the initial exposure to the instrument is equal to the never exposed treatment. This assumption restricts the anticipatory behavior before the initial exposure in the first stage.\par
\begin{Assumption}[Parallel Trends Assumption in the treatment in multiple time periods]
\label{sec2as6}
\begin{align*}
\hspace{2mm}\text{For all}\hspace{2mm}s \neq t,\hspace{2mm} E[D_{i,t}^{\infty}-D_{i,s}^{\infty}|E_i=e]\hspace{3mm}\text{is same for all}\hspace{2mm}e\in \mathcal{S}(E_i).
\end{align*}
\end{Assumption}
Assumption \ref{sec2as6} is a parallel trends assumption in the treatment in multiple periods and multiple cohorts. This assumption requires that the trends of the treatment across cohorts would have followed the same path, on average, if there is no exposure to the instrument. Assumption \ref{sec2as6} is analogous to that of \cite{Callaway2021-wl} and \cite{Sun2021-rp} in DID designs: both papers impose the same type of the parallel trends assumption on untreated outcomes with settings of multiple periods and multiple cohorts. 
\begin{Assumption}[Parallel Trends Assumption in the outcome in multiple time periods]
\label{sec2as7}
\begin{align*}
\hspace{2mm}\text{For all}\hspace{2mm}s < t,\hspace{2mm} E[Y_{i,t}(D_{i,t}^{\infty})-Y_{i,s}(D_{i,s}^{\infty})|E_i=e]\hspace{3mm}\text{is same for all}\hspace{2mm}e\in \mathcal{S}(E_i).
\end{align*}
\end{Assumption}
Assumption \ref{sec2as7} is a parallel trends assumption in the outcome with settings of multiple periods and multiple cohorts. This assumption requires that the expectation of the never exposed outcome across cohorts would have followed the same evolution if the assignment of the instrument had not occurred. From the discussions in \cite{Miyaji2023}, we can interpret that this assumption requires the same expected time gain across cohorts and over time: the effects of time on outcome through treatment are the same on average across cohorts and over time.\par

\section{Causal interpretation of the TWFEIV estimand}\label{sec4}
In this section, we explore the causal interpretation of the TWFEIV estimand under staggered DID-IV designs. In section \ref{sec4.1}, we first define the main building block parameter in the first stage and reduced form regressions, respectively. In section \ref{sec4.2}, we then interpret the TWFEIV estimand under staggered DID-IV designs, and show that this estimand potentially fails to summarize the treatment effects. In section \ref{sec4.3}, given the negative result of using the TWFEIV estimand under staggered DID-IV designs, we describe the various restrictions on main building block parameter in each stage regression. In section \ref{sec4.4}, as a preparation, we then describe the causal interpretation of the denominator in the TWFEIV estimand under these restrictions. In section \ref{sec4.5}, we finally investigate the sufficient conditions for the TWFEIV estimand to attain its causal interpretation.

\subsection{Main building block parameter in each stage regression}\label{sec4.1}
As we already mentioned in section \ref{sec2}, the TWFEIV regression employs the TWFEIV regression twice in the first stage and reduced form regressions. In this section, we define the main building block parameter in each stage regression.\par 
In the first stage regression, our building block parameter is the average of individual exposed effect at a given relative period $l$ from the initial exposure to the instrument in cohort $e$. We call this the cohort specific average exposed effect on the treated in the first stage (CAET$_{e,l}^{1}$) defined below.
\begin{Def}
The cohort specific average exposed effect on the treated in the first stage (CAET$^{1}$) at a given relative period $l$ from the initial adoption of the instrument is 
\begin{align*}
CAET_{e,l}^{1}=E[D_{i,e+l}-D_{i,e+l}^{\infty}|E_i=e].
\end{align*}
\end{Def}
We use the superscript $1$ to make it clear that we define this parameter for the first stage regression. In the recent DID literature, \cite{Sun2021-rp} define their main building block parameter in staggered DID designs in a similar fashion and call it the cohort specific average treatment effect on the treated. \cite{Callaway2021-wl} call the same parameter the group-time average treatment effect.\par
If the treatment is binary and monotonicity assumption (Assumption \ref{sec2as4}) holds, the CAET$_{e,l}^{1}$ is equal to the share of the compliers CM$_{e,e+l}$ in cohort $e$ at period $e+l$:
\begin{align*}
CAET^{1}_{e,l}&=E[D_{i,e+l}^{e}-D_{i,e+l}^{\infty}|E_i=e]\\
&=Pr(CM_{e,e+l}|E_i=e).
\end{align*}\par
In the reduced form regression, our building block parameter is the average of individual effect of the instrument on the outcome through treatment at a given relative period $l$ from the initial exposure to the instrument in cohort $e$. We call this the cohort specific average intention to exposed effect on the treated in the reduced form (CAIET$_{e,l}$) defined below.
\begin{Def}
The cohort specific average intention to exposed effect on the treated in the reduced form (CAIET) at a given relative period $l$ from the initial adoption of the instrument is 
\begin{align*}
CAIET_{e,l}=E[Y_{i,e+l}(D_{i,e+l})-Y_{i,e+l}(D_{i,e+l}^{\infty})|E_i=e].
\end{align*}
\end{Def}
If we assume the identifying assumptions in staggered DID-IV designs (Assumptions \ref{sec2as1} to \ref{sec2as7}), this parameter is equal to a product of the CLATT$_{e,l}$ and CAET$_{e,l}^{1}$:
\begin{align}
CAIET_{e,l}&=E[Y_{i,e+l}(D^{e}_{i,e+l})-Y_{i,e+l}(D_{i,e+l}^{\infty})|E_i=e]\notag\\
&=E[(D^{e}_{i,e+l}-D_{i,e+l}^{\infty})(Y_{i,e+l}(1)-Y_{i,e+l}(0))|E_i=e]\notag\\
&=E[Y_{i,e+l}(1)-Y_{i,e+l}(0)|E_i=e,CM_{e,e+l}]\cdot Pr(CM_{e,e+l}|E_i=e)\notag\\
\label{sec4.1eq1}
&=CLATT_{e,l}\cdot CAET_{e,l}^{1}.
\end{align}
In other words, if we scale the CAIET$_{e,l}$ in the reduced form by the CAET$_{e,l}^{1}$ in the first stage, we obtain the CLATT$_{e,l}$, which is the reason why we call this the cohort specific average "intention to exposed effect" on the treated in the reduced form.\par
\subsection{Interpreting the TWFEIV estimand under staggered DID-IV designs}\label{sec4.2}
We now interpret the TWFEIV estimand under staggered DID-IV designs based on the DID-IV decomposition theorem derived in section \ref{sec2} and the main building block parameters defined in the previous section. This section presumes the monotonicity assumption (Assumption \ref{sec2as4}) to clarify the interpretation of each notation defined below.\par
First, we introduce the additional notation. Let CLATT$^{CM}_k(W)$ denote a weighted average of each CLATT$_{k,t}$ in the time window $W$ (with $T_W$ periods) where the weight reflects the relative amount of the exposed effect in the first stage in cohort $k$ at period $t$:
\begin{align*}
CLATT^{CM}_k(W) &\equiv \sum_{t \in W}\frac{CAET^{1}_{k,t}}{\sum_{t \in W}CAET^{1}_{k,t}}CLATT_{k,t}\\
&=\sum_{t \in W}\frac{Pr(CM_{k,t}|E_i=k)}{\sum_{t \in W}Pr(CM_{k,t}|E_i=k)}CLATT_{k,t}.
\end{align*}
The first equality holds because we have a binary treatment and assume the monotonicity assumption (Assumption \ref{sec2as4}). Each weight assigned to each $CLATT_{k,t}$ reflects the relative share of the compliers at period $t$ in cohort $k$ during the time window $W$. We call this the compliers weighted scheme. This would be one of the reasonable weighting schemes for two reasons. First, the weight is designed to be larger in the period when the proportion of the compliers is higher in cohort $k$. Second, the sum of the weight is one by construction: the proportion of the compliers in each period in cohort $k$ is divided by the total amount of the compliers in the time window $W$ in cohort $k$.\par 
We also define the similar notation $CLATT_k(W)$, in which the proportion of the compliers in cohort $k$ at period $t$ is divided by the time length $T_{W}$:
\begin{align*}
CLATT_k(W) &\equiv \frac{1}{T_W}\sum_{t \in W}CAET^{1}_{k,t}CLATT_{k,t}\\
&=\frac{1}{T_W}\sum_{t \in W}Pr(CM_{k,t}|E_i=k)CLATT_{k,t}.
\end{align*}
We call this the time-corrected weighting scheme. In contrast to $CLATT^{CM}_k(W)$, the weight assigned to each $CLATT_{k,t}$ can be inappropriate: each weight does not reflect the relative share of the compliers in cohort $k$ at period $t$. In addition, the sum of each weight is not equal to one in general.\par
Theorem \ref{sec4thm2} below shows the probability limit of the TWFEIV estimator $\hat{\beta}_{IV}$ under staggered DID-IV designs (Assumptions \ref{sec2as1}-\ref{sec2as7}).\par
\begin{Theorem}
\label{sec4thm2}
Suppose Assumptions \ref{sec2as1}-\ref{sec2as7} hold. Then, the TWFEIV estimand $\beta_{IV}$ consists of two terms:
\begin{align*}
\hat{\beta}_{IV}&=\bigg[\sum_{k \neq U}\hat{w}_{IV, kU}\hat{\beta}_{IV, kU}^{2\times2}+\sum_{k \neq U}\sum_{l >k}\hat{w}_{IV, kl}^{k}\hat{\beta}_{IV, kl}^{2\times2,k}+\hat{w}_{IV, kl}^{l}\hat{\beta}_{IV, kl}^{2\times2,l}\bigg]\\
&\xrightarrow{p} WCLATT-\Delta CLATT.
\end{align*}
where we define:
\begin{align*}
WCLATT &\equiv \sum_{k \neq U}w_{IV,kU}CLATT^{CM}_{k}(POST(k))+\sum_{k \neq U}\sum_{l >k}w_{IV,kl}^{k}CLATT^{CM}_{k}(MID(k,l))\\
&+\sum_{k \neq U}\sum_{l >k}\sigma_{IV,kl}^{l}\cdot CLATT_l(POST(l))\\
\Delta CLATT &\equiv \sum_{k \neq U}\sum_{l >k}\sigma_{IV,kl}^{l}\cdot \left[CLATT_k(POST(l))-CLATT_k(MID(k,l))\right].
\end{align*}
The weights $w_{IV,kU}$ and $w_{IV,kl}^{k}$ are the probability limit of $\hat{w}_{IV,kU}$ and $\hat{w}_{IV,kl}^{k}$, respectively. The weight $\sigma_{IV,kl}^{l}$ is the probability limit of $\frac{\hat{w}_{kl}^{l}}{\hat{C}^{D,Z}} \neq \frac{\hat{w}_{kl}^{l}\hat{D}_{kl}^{2\times2,l}}{\hat{C}^{D,Z}}=\hat{w}_{IV,kl}^{l}$. The specific expressions for each weight are shown in equations \eqref{sec2thm2weight1}, \eqref{sec2thm2weight2}, and \eqref{sec2thm2weight3} in Appendix \ref{ApeB}.
\end{Theorem}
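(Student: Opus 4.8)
The plan is to start from the decomposition of $\hat{\beta}_{IV}$ in Theorem \ref{sec3thm1} and pass to the probability limit term-by-term, using the identifying assumptions to replace each $2\times2$ Wald-DID estimator and each weight by its population counterpart. Since $\hat{\beta}_{IV}$ is a ratio of sample means of i.i.d.\ data, the continuous mapping theorem gives $\hat{\beta}_{IV}\xrightarrow{p}\beta_{IV}$ provided the denominator $\hat{C}^{D,Z}$ has nonzero limit; so it suffices to identify the limit of each building block. I would organize the argument around the three families of comparisons (Unexposed/Exposed, Exposed/Not Yet Exposed, Exposed/Exposed Shift) and handle the first two families first, since they are the ``clean'' ones.

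For the Unexposed/Exposed and Exposed/Not Yet Exposed designs, the key step is to show that the population analogue of the numerator of $\hat{\beta}_{IV,kU}^{2\times2}$ (resp.\ $\hat{\beta}_{IV,kl}^{2\times2,k}$) equals a sum of $CAIET_{k,t}$ over the relevant relative periods, and likewise the denominator equals the corresponding sum of $CAET^{1}_{k,t}$. This is where Assumptions \ref{sec2as5}--\ref{sec2as7} do the work: no anticipation (Assumption \ref{sec2as5}) kills the pre-period exposed effects, the two parallel-trends assumptions (Assumptions \ref{sec2as6} and \ref{sec2as7}) make the ``control'' cohort ($U$ or the not-yet-exposed $l$) difference out the never-exposed trends in both the first stage and the reduced form, and the no-carryover and exclusion restrictions (Assumptions \ref{sec2as2}, \ref{sec2as3}) together with monotonicity (Assumption \ref{sec2as4}) let me write the reduced-form contribution as $CAIET_{k,t}=CLATT_{k,t}\cdot CAET^{1}_{k,t}$ via the identity \eqref{sec4.1eq1}. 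Dividing numerator by denominator then produces exactly the compliers-weighted average $CLATT^{CM}_k(POST(k))$ (resp.\ $CLATT^{CM}_k(MID(k,l))$), because the complier shares $CAET^1_{k,t}$ that appear in the numerator are precisely the weights defining $CLATT^{CM}_k(\cdot)$. Collecting these with the limiting weights $w_{IV,kU}$ and $w_{IV,kl}^{k}$ gives the first two sums in $WCLATT$.

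The Exposed/Exposed Shift design is the main obstacle, and it is the source of both the ``time-corrected'' weighting scheme and the bias term $\Delta CLATT$. Here the ``control'' cohort $k$ is already exposed during $MID(k,l)$ and $POST(l)$, so its reduced-form and first-stage levels are not the never-exposed potentials but carry the cohort-$k$ exposed effects. When I form $\bar{D}_k^{POST(l)}-\bar{D}_k^{MID(k,l)}$ and $\bar y_k^{POST(l)}-\bar y_k^{MID(k,l)}$, parallel trends removes the never-exposed trend but leaves behind $CAET^1_{k,t}$ and $CAIET_{k,t}$ terms, i.e.\ differences of cohort-$k$ exposed effects across the two windows — these do not vanish under treatment-effect heterogeneity. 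So the limit of $\hat{\beta}_{IV,kl}^{2\times2,l}$ is not itself a clean weighted CLATT; rather, it is the limit of the \emph{numerator} and \emph{denominator} that behave well. The cleanest route is therefore not to take limits of the individual Wald-DID ratios in this family, but to go back to the aggregate identity $\hat{\beta}_{IV}=\hat C^{Y,Z}/\hat C^{D,Z}$, write $\hat C^{Y,Z}$ as the sum over designs of $\hat w_{\cdot}\,\hat Y^{2\times2}_{\cdot}$ and $\hat C^{D,Z}$ as the sum of $\hat w_{\cdot}\,\hat D^{2\times2}_{\cdot}$ (both already available from the proof of Theorem \ref{sec3thm1}), take probability limits of numerator and denominator separately, and only divide at the very end. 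In the numerator, the Exposed/Exposed Shift blocks contribute $\operatorname{plim}\hat w_{kl}^l$ times $[\,$(sum of $CAIET_{l,t}$ over $POST(l)$)$-$(difference of cohort-$k$ $CAIET$'s across the two windows)$\,]$, and dividing the whole thing by $\operatorname{plim}\hat C^{D,Z}$ is what introduces the normalization $\sigma^l_{IV,kl}=\operatorname{plim}(\hat w^l_{kl}/\hat C^{D,Z})$. Rewriting $\sum_t CAIET_{l,t} = T_{POST(l)}\cdot CLATT_l(POST(l))$ via the time-corrected scheme and $\sum_t CAIET_{k,t}=T_W\cdot CLATT_k(W)$ for the cohort-$k$ pieces, and separating the ``good'' $CLATT_l$ part into $WCLATT$ from the cohort-$k$ difference into $-\Delta CLATT$, yields the claimed expression. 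I expect the bookkeeping of the window lengths $T_W$ and the exact constants relating $\hat w^l_{kl}$, $\hat V^{Z,l}_{kl}$ and the time shares $\bar Z_k,\bar Z_l$ (so that the $T_W$ factors cancel correctly against the variance terms) to be the fiddly part, but it is routine; the conceptual content is entirely in recognizing that cohort $k$'s own exposed effects survive as a contamination term in the Exposed/Exposed Shift comparison.
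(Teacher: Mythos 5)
Your proposal is correct and follows essentially the same route as the paper's proof: term-by-term probability limits via the LLN and Slutsky's theorem, with no-anticipation and the two parallel-trends assumptions differencing out the control cohort, the identity $CAIET_{k,t}=CAET^{1}_{k,t}\cdot CLATT_{k,t}$ converting the reduced-form DIDs into complier-weighted CLATTs, and the cancellation of $\hat{D}_{kl}^{2\times2,l}$ between the weight and the Wald-DID denominator in the Exposed/Exposed Shift family producing the $\sigma_{IV,kl}^{l}$ normalization, the time-corrected $CLATT_l(POST(l))$ term, and the $\Delta CLATT$ contamination term. Your choice to take limits of the aggregate numerator and denominator and divide only at the end is algebraically identical to the paper's product-by-product treatment, since each $\hat{w}_{IV,kl}^{l}\hat{\beta}_{IV,kl}^{2\times2,l}$ already equals $\hat{w}_{kl}^{l}\hat{\beta}_{kl}^{2\times2,l}/\hat{C}^{D,Z}$.
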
\par
\begin{proof}
See Appendix \ref{ApeB}.
\end{proof}\par
Theorem \ref{sec4thm2} shows that the TWFEIV estimand $\beta_{IV}$ consists of two terms ($WCLATT$ and $\Delta CLATT$) and potentially fails to aggregate the treatment effects under staggered DID-IV designs.\par
The first term $WCLATT$ is a positively weighted average of each $CLATT_{k,t}$ for the post-exposed period in cohort $k$. We call this a weighted average cohort specific local average treatment effect on the treated ($WCLATT$) parameter. The first and the second terms in the $WCLATT$ use the compliers weighted scheme, but the third term in $WCLATT$ uses the time-corrected one.\par
Although $WCLATT$ can be a causal parameter, the amount of this parameter may be difficult to interpret in practice for two reasons. First, the weight $\sigma_{IV,kl}^{l}$ assigned to $ CLATT_l(POST(l))$ reflects only the sample share and the variation of the instrument, and does not reflect the variation of the treatment $D_{kl}^{2\times2,l}$ in the first stage. Because the other weights, $w_{IV,kU}$ and $w_{IV,kl}^{k}$ precisely reflect all the variations in each DID-IV design, this asymmetry can break the implication of the magnitude of this parameter in a given application. Second, the $CLATT_l(POST(l))$ in the third term is a weighted average of $CLATT_{k,t}$ for the post exposed periods in cohort $k$, but the weight assigned to each $CLATT_{k,t}$ seems not reasonable: it does not reflect the relative share of the compliers in period $t$ in cohort $k$ and the sum of the weight is not equal to one.\par 
The problem of the $WCLATT$ is due to the "bad comparisons" in the first stage TWFE regression: when we compare the evolution of the treatment in Exposed/Exposed Shift designs, we use already exposed cohorts as controls. In these comparisons, we should offset the DID estimator of the treatment in each weight in Exposed/Exposed Shift designs by the one appeared in the denominator of the corresponding Wald-DID estimator, which produces the weight $\sigma_{IV,kl}^{l}$ and  $CLATT_l(POST(l))$ in the third term.\par
The second term $\Delta CLATT$ is a weighted sum of the differences in the positively weighted average of each $CLATT_{k,t}$ from the exposed period $k$ to before period $l (k <l)$ and after period $l$ in the already exposed cohort $k$. This term fails to properly aggregate the treatment effects because the $CLATT_k(POST(l))$ is canceled out by the $CLATT_k(MID(k,l))$ in each cohort $k$. This problem arises due to the "bad comparisons" in the reduced form TWFE regression: when we compare the evolution of the outcome in Exposed/Exposed Shift designs, we use already exposed cohorts as controls. In these comparisons, we subtract their expected trends of unexposed potential outcomes and average intention exposed effects, which yields the $\Delta CLATT$.\par
Overall, this section shows that the TWFEIV estimand potentially fails to summarize the treatment effects under staggered DID-IV designs. In the next section, we first describe various restrictions on main building block parameters in the first stage and the reduced form regressions. Given these restrictions on exposed effect heterogeneity, we then explore the sufficient conditions for the TWFEIV estimand to be causally interpretable parameter.

\subsection{Restrictions on exposed effect heterogeneity}\label{sec4.3}
First, we describe the restrictions on the CAET$^{1}_{e,l}$ in the first stage regression. 
\begin{Assumption}[Exposed effect homogeneity across cohorts in the first stage]
\label{sec4as1}
For each relative period $l$, $CAET^{1}_{e,l}$ does not depend on cohort $e$ and is equal to $AET_{l}^{1}$.
\end{Assumption}
Assumption \ref{sec4as1} requires that the exposed effects in the first stage depend on only the relative time period $l$ after the initial exposure to the instrument and do not depend on the cohort $e$. This assumption does not exclude the dynamic effects of the instrument on the treatment, but requires that the exposed effects are the same across cohorts for all relative periods.

\begin{Assumption}[Stable exposed effect over time within cohort in the first stage]
\label{sec4as2}
For each cohort $e$, $CAET^{1}_{e,l}$ does not depend on the relative time period $l$ and is equal to $CAET^{1}_{e}$.
\end{Assumption}
Assumption \ref{sec4as2} rules out the dynamic effects of the instrument on the treatment within cohort $e$ in the first stage regression. Assumption \ref{sec4as2} permits the heterogeneous exposed effects across cohort $e$, but requires the homogeneous exposed effects over time after the initial adoption of the instrument within cohort $e$.\par
The recent DID literature imposes the similar restrictions as in Assumption \ref{sec4as1} and Assumption \ref{sec4as2} on treatment effects. \cite{Sun2021-rp} assume that "each cohort experiences the same path of treatment effects", which is in line with Assumption \ref{sec4as1}. \cite{Goodman-Bacon2021-ej} requires heterogeneous treatment effects to either be "constant over time but vary across units" or "vary over time but not across units". The former corresponds to Assumption \ref{sec4as2} and the latter corresponds to Assumption \ref{sec4as1}.\par
Next, we describe the restrictions on the CAIET$_{e,l}$ in the reduced form regression. Following to Assumption \ref{sec4as1} and Assumption \ref{sec4as2} on the  CAET$^{1}_{e,l}$, we consider Assumption \ref{sec4as3} and Assumption \ref{sec4as4} below.
\begin{Assumption}[Exposed effect homogeneity across cohorts in the reduced form]
\label{sec4as3}
For each relative period $l$, $CAIET_{e,l}$ does not depend on cohort $e$ and is equal to $AIET_{l}$.
\end{Assumption}  
\begin{Assumption}[Stable exposed effect over time within cohort in the reduced form]
\label{sec4as4}
For each cohort $e$, $CAIET_{e,l}$ does not depend on the relative time period $l$ and is equal to $CAIET_{e}$.
\end{Assumption} 
Assumption \ref{sec4as3} requires that the evolution of the average intention to exposed effect after the initial exposure is the same across cohorts. Assumption \ref{sec4as4} requires that the average intention to exposed effects are stable over time in all relative periods within cohort $e$.\par 
Note that given Assumption \ref{sec4as1} and Assumption \ref{sec4as3}, we have the following restriction on the CLATT$_{e,l}$, which follows from equation \eqref{sec4.1eq1} in section \ref{sec4.1}.
\begin{Assumption}[Treatment effect homogeneity across cohorts for $CLATT_{e,l}$]
\label{sec4as6}
For each relative period $l$, $CLATT_{e,l}$ does not depend on cohort $e$ and is equal to $LATT_{l}$.
\end{Assumption}
Similarly, given Assumption \ref{sec4as2} and Assumption \ref{sec4as4}, we have the following restriction on the CLATT$_{e,l}$.
\begin{Assumption}[Stable treatment effect over time within cohort for $CLATT_{e,l}$]
\label{sec4as7}
For each cohort $e$, $CLATT_{e,l}$ does not depend on the relative time period $l$ and is equal to $CLATT_{e}$.
\end{Assumption}

\subsection{The denominator in the TWFEIV estimand}\label{sec4.4}
In this section, we first interpret the denominator in the TWFEIV estimand under various restrictions considered in section \ref{sec4.3}. This section is a preparation for the next section, in which we analyze the TWFEIV estimand itself.\par
As we already noted, the denominator in the TWFEIV estimator (see equation \eqref{sec4eq3}), $\hat{C}^{D,Z}$ can be decomposed into a weighted average of all possible $2 \times 2$ DID estimators of the treatment. In the following discussion, we show that this estimand can potentially fail to aggregate the effects of the instrument on the treatment in the first stage regression without additional restrictions. We then briefly describe the interpretation of this estimand by imposing Assumption \ref{sec4as1} or Assumption \ref{sec4as2}, and state the implications.\par 
First, we introduce the additional notation. Let CAET$^{1}_k(W)$ denote an equally weighted average of the CAET$^{1}_{k,t}$ in the time window $W$ (with $T_W$ period length):
\begin{align*}
CAET^{1}_k(W) &\equiv \frac{1}{T_W}\sum_{t \in W}CAET^{1}_{k,t}.
\end{align*}
If we assume Assumption \ref{sec2as4} (monotonicity assumption), CAET$^{1}_k(W)$ is an equally weighted average of the fraction of the compliers in cohort $k$ in the time window $W$. For instance, the CAET$^{1}_k(POST(k))$ is an equally weighted average of the CAET$^{1}_{k}$ during the periods after the initial exposure date $k$ and rewritten as 
\begin{align*}
CAET^{1}_k(POST(k))&=\frac{1}{T-(k-1)}\sum_{t=k}^{T}CAET_{k,t}^{1}\\
&=\frac{1}{T-(k-1)}\sum_{t=k}^{T}Pr(CM_{k,t}|E_i=k).
\end{align*}\par
Lemma \ref{sec4lemma1} below shows the probability limit of the denominator $\hat{C}^{D,Z}$ under staggered DID-IV designs. This lemma is mainly based on the result of \cite{Goodman-Bacon2021-ej}, who shows the probability limit of the two-way fixed effects estimator under staggered DID designs. The slight difference here is that each weight assigned to each CAET$^{1}_{k}(W)$ in $C^{D,Z}$ is not divided by the probability limit of the grand mean $\frac{1}{NT}\sum_{i}\sum_{t}\Tilde{Z}_{it}$.
\begin{Lemma}
\label{sec4lemma1}
Suppose Assumptions \ref{sec2as1}-\ref{sec2as7} hold. Then, the probability limit of the denominator of the TWFEIV estimator, $C^{D,Z}$ consists of two terms:
\begin{align*}
\hat{C}^{D,Z}&=\sum_{k \neq U}\hat{w}_{kU}\hat{D}_{kU}^{2\times2}+\sum_{k \neq U}\sum_{l >k}[\hat{w}_{kl}^{k}\hat{D}_{kl}^{2\times2,k}+\hat{w}_{kl}^{l}\hat{D}_{kl}^{2\times2,l}]\\
&\xrightarrow{p} WCAET-\Delta CAET^{1}.
\end{align*}
where we define:
\begin{align*}
&WCAET \equiv \sum_{k \neq U}w_{kU}CAET_{k}^1(POST(k))+\sum_{k \neq U}\sum_{l > k}w_{kl}^{k}CAET_{k}^1(MID(k,l))+w_{kl}^{l}CAET_{l}^{1}(POST(l)),\\
&\Delta CAET^{1} \equiv \sum_{k \neq U}\sum_{l > k}w_{kl}^{l}\left[CAET_{k}^{1}(POST(l))-CAET_{k}^{1}(MID(k,l))\right].
\end{align*}
The weights $w_{kU}$,$w_{kl}^{k}$ and $w_{kl}^{l}$ are the probability limit of $\hat{w}_{kU},\hat{w}_{kl}^{k}$ and $\hat{w}_{kl}^{l}$ defined in section \ref{sec3} respectively, and are non-negative. The specific expressions in each weight are shown in equations \eqref{Lemma1weight1}-\eqref{Lemma1weight3} in Appendix \ref{ApeB}.
\end{Lemma}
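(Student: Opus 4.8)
The plan is to start from the finite-sample algebraic identity for $\hat{C}^{D,Z}$ already obtained in the course of proving Theorem \ref{sec3thm1} (Appendix \ref{ApeA}),
\[
\hat{C}^{D,Z}=\sum_{k \neq U}\hat{w}_{kU}\hat{D}_{kU}^{2\times2}+\sum_{k \neq U}\sum_{l >k}\big[\hat{w}_{kl}^{k}\hat{D}_{kl}^{2\times2,k}+\hat{w}_{kl}^{l}\hat{D}_{kl}^{2\times2,l}\big],
\]
which requires only the staggered-adoption Assumption \ref{sec2as1}, and to pass to probability limits term by term. First I would invoke the i.i.d. assumption and the weak law of large numbers so that every cohort-by-window sample mean $\bar{D}_{e}^{W}$ converges in probability to $E[\bar{D}_{e}^{W}]$, and, by the continuous mapping theorem together with a Slutsky argument for the ratios of time and sample shares, each weight $\hat{w}_{kU},\hat{w}_{kl}^{k},\hat{w}_{kl}^{l}$ converges to its population version $w_{kU},w_{kl}^{k},w_{kl}^{l}$; these limits are non-negative since they are products of squared shares and of variances of the double-demeaned instrument, just as in \cite{Goodman-Bacon2021-ej}. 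Unlike in the estimand analysis of Theorem \ref{sec4thm2}, these weights are not divided by the probability limit of the grand mean, which is why they enter $C^{D,Z}$ directly.

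The core step is to simplify each population DID contrast of the treatment, using the identification assumptions and the decomposition $D_{i,t}=D_{i,t}^{\infty}+(D_{i,t}-D_{i,t}^{\infty})$ in each cell. For the Unexposed/Exposed contrast, the window $PRE(k)$ lies entirely before cohort $k$'s exposure, so No anticipation in the first stage (Assumption \ref{sec2as5}) forces $D_{i,t}=D_{i,t}^{\infty}$ there, while cohort $U$ contributes a pure $D^{\infty}$ trend; averaging the pointwise Parallel Trends in the treatment (Assumption \ref{sec2as6}) over $PRE(k)$ and $POST(k)$ --- a short telescoping/averaging argument --- cancels the two cohorts' $D^{\infty}$ trends and leaves exactly $CAET^{1}_{k}(POST(k))$. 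The Exposed/Not Yet Exposed contrast is handled in the same way: on $MID(k,l)=[k,l-1]$ cohort $l$ is not yet exposed and on $PRE(k)$ neither cohort is, so Assumptions \ref{sec2as5}--\ref{sec2as6} reduce it to $CAET^{1}_{k}(MID(k,l))$.

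The delicate case, and the source of $\Delta CAET^{1}$, is the Exposed/Exposed Shift contrast $\hat{D}_{kl}^{2\times2,l}$: cohort $k$ is already exposed on \emph{both} $MID(k,l)$ and $POST(l)$, so its exposed effect does not telescope away, whereas cohort $l$ is unexposed on $MID(k,l)$ and exposed on $POST(l)$. Writing each cell as $D^{\infty}$ plus an exposed effect, the across-window $D^{\infty}$ differences of the two cohorts coincide by Assumption \ref{sec2as6} and cancel, and what remains is $CAET^{1}_{l}(POST(l))-\big[CAET^{1}_{k}(POST(l))-CAET^{1}_{k}(MID(k,l))\big]$. Substituting the three simplified contrasts back into the weighted sum, I would gather the positively-weighted own-cohort pieces into $WCAET$ and the leftover difference into $\Delta CAET^{1}$, which yields the stated decomposition; the explicit weight limits are those recorded in the statement of the lemma. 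The main obstacle I anticipate is purely careful bookkeeping --- classifying correctly the exposure status of each cohort on each window and checking that the pointwise parallel-trends condition transfers to window averages --- together with the Slutsky-type step ensuring the ratio weights are well defined in the limit.
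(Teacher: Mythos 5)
Your proposal is correct and follows essentially the same route as the paper's proof in Appendix B: start from the finite-sample weighted-sum identity for $\hat{C}^{D,Z}$ established in the proof of Theorem \ref{sec3thm1}, pass to probability limits of the weights via the LLN and Slutsky's theorem, and simplify each $2\times 2$ treatment contrast using Assumptions \ref{sec2as5} and \ref{sec2as6}, with the Exposed/Exposed Shift contrast converging to $CAET^{1}_{l}(POST(l))-\bigl[CAET^{1}_{k}(POST(l))-CAET^{1}_{k}(MID(k,l))\bigr]$ and thereby generating the $\Delta CAET^{1}$ term. Your identification of which assumption does what in each window, and of the bad-comparison case as the sole source of the residual term, matches the paper exactly.
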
\par
\begin{proof} See Appendix \ref{ApeB}.
\end{proof}\par
Lemma \ref{sec4lemma1} shows that we can decompose $C^{D,Z}$ into two terms. The first term is a positively weighted average of each CAET$_{k,t}^{1}$ during the periods after the initial exposure in exposed cohorts, allowing for its causal interpretation. Following the terminology in \cite{Goodman-Bacon2021-ej}, we call this a weighted average cohort specific exposed effect on the treated (WCAET) parameter.\par 
The second term $\Delta$CAET$^{1}$ is equal to the sum of the difference in the positively weighted average of exposed effect CAET$_{k,t}^{1}$ from the exposed period $k$ to before period $l$ ($k <l$) and after period $l$ in the already exposed cohort $k$. This term fails to properly aggregate the causal parameter in the first stage because some exposed effects are canceled out by other exposed effects.\par
Lemma \ref{sec4lemma1} implies that if we assume only Assumptions \ref{sec2as1}-\ref{sec2as7}, the probability limit of the denominator in the TWFEIV estimand, $C^{D,Z}$ generally fails to properly summarize the exposed effects in the first stage due to the second term $\Delta$CAET$^{1}$. This problem arises from the "bad comparisons" performed by the TWFE regression in the first stage: we treat the already exposed cohorts as control groups in the Exposed/ Exposed Shift designs. In these comparisons, we should subtract their expected trends of unexposed potential treatment choices and their expected exposed effects, which yields the second term $\Delta$CAET$^{1}$. In the DID literature, \cite{Borusyak2021-jv}, \cite{De_Chaisemartin2020-dw}, and \cite{Goodman-Bacon2021-ej} point out the same issue for the TWFE estimand in staggered DID designs.\par  
Based on the negative result shown in Lemma \ref{sec4lemma1}, we consider the restrictions on exposed effect heterogeneity in the first stage regression. The conclusion here is that $C^{D,Z}$ properly aggregates each $CAET_{k,t}^{1}$ only if Assumption \ref{sec4as2} holds, that is, the exposed effects are stable over time within cohort $e$. Because \cite{Goodman-Bacon2021-ej} have already made the same point for the TWFE estimand, we briefly summarize the interpretation of $C^{D,Z}$ under Assumption \ref{sec4as1} or Assumption \ref{sec4as2} in the following. For the more detailed discussions, see section 3.1 in \cite{Goodman-Bacon2021-ej}.
\subsubsection*{Interpreting $C^{D,Z}$ under Assumption \ref{sec4as1} only}
Even when Assumption \ref{sec4as1} holds, that is, the exposed effects are the same across cohorts but vary over time in the first stage, we have $\Delta$CAET$^{1} \neq 0$ in general. This implies that if we impose only Assumption \ref{sec4as1}, we cannot generally interpret the $C^{D,Z}$ as measuring the positively weighted average of exposed effects in the first stage.
\subsubsection*{Interpreting $C^{D,Z}$ under Assumption \ref{sec4as2} only}
If Assumption \ref{sec4as2} holds, that is, the exposed effects are stable over time within cohort $e$ in the first stage, we have $CAET_k^{1}(W)=CAET_k^{1}$. This implies that the second term $\Delta$CAET$^{1}$ is equal to zero:
\begin{align*}
\Delta CAET^{1} &=\sum_{k \neq U}\sum_{l > k}w_{kl}^{l}\left[CAET_{k}^{1}-CAET_{k}^{1}\right]\\
&=0.
\end{align*}
Thus, $C^{D,Z}$ simplifies to:
\begin{align*}
C^{D,Z}&=WCAET\\
&=\sum_{k \neq U}CAET^{1}_k\underbrace{\left[w_{kU}+\sum_{j=1}^{k-1}w_{jk}^{k}+\sum_{j=k+1}^{K}w_{kj}^{k}\right]}_{\equiv w_k}.
\end{align*}
$C^{D,Z}$ weights each CAET$^{1}_k$ positively across cohorts under Assumption \ref{sec4as2} only. We note, however, that each weight assigned to each CAET$^{1}_k$, $w_k$ is not equal to the sample share in cohort $k$, but is a function of the sample share and the timing of the initial exposure date. 
\vskip\baselineskip\par
In this section, we have considered whether the denominator in the TWFEIV estimand properly aggregates the exposed effects in the first stage. We have two implications. First, if we do not impose Assumption \ref{sec4as2}, the weight assigned to each $2 \times 2$ Wald-DID in the TWFEIV estimand may not be properly normalized because the numerator in each weight is divided by $C^{D,Z}$, and the denominator potentially fail to aggregate the exposed effects in the first stage. Second, if we do not impose Assumption \ref{sec4as2}, some weights assigned to $2 \times 2$ Wald-DID estimands can be potentially negative. This is because the DID estimand of the treatment forms the part of each weight and can be negative due to the "bad comparisons" in the first stage regression.\par
From the discussion so far, hereafter, we impose Assumption \ref{sec4as2} when we consider the restrictions on exposed effect heterogeneity in the first stage. 

\subsection{Interpreting the TWFEIV estimand under additional restrictions}\label{sec4.5}
We now describe the interpretation of the TWFEIV estimand under additional restrictions.
\subsubsection*{Interpretation under Assumption \ref{sec4as2} only}
 First, we consider imposing Assumption \ref{sec4as2} only, that is, we assume only the stable exposed effect over time in the first stage. If Assumption \ref{sec4as2} holds, the $CLATT^{CM}_k(W)$ simplifies to an equally weighted average of $CLATT_{k,t}$:
\begin{align*}
CLATT^{CM}_k(W) &=\sum_{t \in W}\frac{Pr(CM_{k,t}|E_i=k)}{\sum_{t \in W}Pr(CM_{k,t}|E_i=k)}CLATT_{k,t}\\
&=\frac{1}{T_W}\sum_{t \in W}CLATT_{k,t}\\
&\equiv CLATT^{eq}_k(W).
\end{align*}
The $CLATT^{eq}_k(W)$ weights each $CLATT_{k,t}$ equally in the time window $W$ and the weight sum to one by construction. We call this an equal weighting scheme.\par
Lemma \ref{sec4lemma2} presents the interpretation of the TWFEIV estimand under staggered DID-IV designs and Assumption \ref{sec4as2}.
\begin{Lemma}
\label{sec4lemma2}
Suppose Assumptions \ref{sec2as1}-\ref{sec2as7} hold. If Assumption \ref{sec4as2} holds additionally, the TWFEIV estimand $\beta_{IV}$ consists of two terms:
\begin{align*}
\beta_{IV}= WCLATT-\Delta CLATT.
\end{align*}
where we define:
\begin{align*}
WCLATT &\equiv \sum_{k \neq U}w_{IV,kU}CLATT^{eq}_{k}(POST(k))+\sum_{k \neq U}\sum_{l >k}w_{IV,kl}^{k}CLATT^{eq}_{k}(MID(k,l))\\
&+\sum_{k \neq U}\sum_{l >k}w_{IV, kl}^{l} CLATT^{eq}_l(POST(l)),\\
\Delta CLATT &\equiv \sum_{k \neq U}\sum_{l >k}\sigma_{IV,kl}^{l}\cdot \left[CLATT_k(POST(l))-CLATT_k(MID(k,l))\right].
\end{align*}
The weights $w_{IV,kU}$, $w_{IV,kl}^{k}$ and $w_{IV,kl}^{l}$ are the probability limit of $\hat{w}_{IV,kU}$, $\hat{w}_{IV,kl}^{k}$ and $\hat{w}_{IV,kl}^{l}$ respectively, and are non-negative. The specific expressions for these weights are shown in equations \eqref{lemma2weight1}, \eqref{lemma2weight2}, and \eqref{lemma2weight3} in Appendix \ref{ApeB}. The weight $\sigma_{IV,kl}^{l}$ is already defined in Theorem \ref{sec4thm2}.
\end{Lemma}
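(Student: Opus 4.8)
The plan is to obtain Lemma \ref{sec4lemma2} as a direct specialization of Theorem \ref{sec4thm2}: no new probability-limit computations are required, only simplifications of the objects that already appear there once Assumption \ref{sec4as2} is imposed. Theorem \ref{sec4thm2} gives $\beta_{IV} = WCLATT - \Delta CLATT$ with $\Delta CLATT = \sum_{k\neq U}\sum_{l>k}\sigma^l_{IV,kl}\big[CLATT_k(POST(l)) - CLATT_k(MID(k,l))\big]$, which is verbatim the $\Delta CLATT$ in the Lemma; hence there is nothing to do for that term, and the entire argument is confined to rewriting the three blocks of $WCLATT$. I would organize it around two reductions: (i) collapsing the compliers-weighted averages $CLATT^{CM}_k(\cdot)$ to the equal-weighted averages $CLATT^{eq}_k(\cdot)$, and (ii) absorbing the ``extra'' treatment-DID factor $\hat D^{2\times2,l}_{kl}$ so that the $\sigma^l_{IV,kl}$-weighting on $CLATT_l(POST(l))$ becomes a $w^l_{IV,kl}$-weighting on $CLATT^{eq}_l(POST(l))$.

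For reduction (i): under Assumption \ref{sec4as2} the complier share $CAET^1_{k,t} = CAET^1_k$ is constant in $t$, so in $CLATT^{CM}_k(W) = \sum_{t\in W}\frac{CAET^1_{k,t}}{\sum_{s\in W}CAET^1_{k,s}}CLATT_{k,t}$ each weight equals $1/T_W$; this is exactly the display preceding the Lemma, giving $CLATT^{CM}_k(W) = CLATT^{eq}_k(W)$. Substituting this into the first two blocks of the $WCLATT$ of Theorem \ref{sec4thm2} — which already carry the weights $w_{IV,kU}$ and $w^k_{IV,kl}$ — produces the first two blocks of the Lemma's $WCLATT$ with their weights unchanged.

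Reduction (ii) carries the real content. I would first show that, under Assumptions \ref{sec2as1}--\ref{sec2as7} together with Assumption \ref{sec4as2}, $\hat D^{2\times2,l}_{kl}\xrightarrow{p}CAET^1_l$. This follows from the treatment-side decomposition used in the proof of Lemma \ref{sec4lemma1}: parallel trends in the treatment (Assumption \ref{sec2as6}) and no anticipation in the first stage (Assumption \ref{sec2as5}) force $\hat D^{2\times2,l}_{kl}\xrightarrow{p}CAET^1_l(POST(l)) - \big[CAET^1_k(POST(l)) - CAET^1_k(MID(k,l))\big]$, and under Assumption \ref{sec4as2} the bracket vanishes (cohort $k$'s exposed effect is the same constant on both windows) while $CAET^1_l(POST(l)) = CAET^1_l$. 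By Slutsky this yields $w^l_{IV,kl} = \operatorname{plim}\frac{\hat w^l_{kl}\hat D^{2\times2,l}_{kl}}{\hat C^{D,Z}} = \sigma^l_{IV,kl}\,CAET^1_l$. On the target side, Assumption \ref{sec4as2} also gives $CLATT_l(POST(l)) = \frac{1}{T_W}\sum_{t\in POST(l)}CAET^1_{l,t}CLATT_{l,t} = CAET^1_l\,CLATT^{eq}_l(POST(l))$. Combining, the third block of $WCLATT$ becomes $\sum_{k\neq U}\sum_{l>k}\sigma^l_{IV,kl}CLATT_l(POST(l)) = \sum_{k\neq U}\sum_{l>k}\sigma^l_{IV,kl}CAET^1_l\,CLATT^{eq}_l(POST(l)) = \sum_{k\neq U}\sum_{l>k}w^l_{IV,kl}CLATT^{eq}_l(POST(l))$, which is exactly the third block in the Lemma.

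It remains to verify the sign and normalization claims. Each of $\hat w_{kU}$, $\hat w^k_{kl}$, $\hat w^l_{kl}$ is a product of (relative) sample shares and an instrument variance, hence non-negative; their accompanying treatment-DID factors converge, as in the proof of Lemma \ref{sec4lemma1} and under Assumption \ref{sec4as2}, to $CAET^1_k(POST(k)) = CAET^1_k$, $CAET^1_k(MID(k,l)) = CAET^1_k$, and $CAET^1_l$, each of which is a complier share and hence non-negative by the monotonicity assumption (Assumption \ref{sec2as4}); and the common denominator satisfies $\hat C^{D,Z}\xrightarrow{p}\sum_{k\neq U}CAET^1_k w_k > 0$ under first-stage relevance. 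Therefore $w_{IV,kU}$, $w^k_{IV,kl}$, $w^l_{IV,kl}$ are all non-negative, and being the probability limits of the Theorem \ref{sec3thm1} weights, which sum to one in every sample, they also sum to one in the limit. The explicit weight formulas \eqref{lemma2weight1}--\eqref{lemma2weight3} then follow by substituting the above plims of the treatment-DID factors into the weight expressions inherited from Lemma \ref{sec4lemma1}. The one genuinely substantive step is the identity $\hat D^{2\times2,l}_{kl}\xrightarrow{p}CAET^1_l$: it is what makes the spurious treatment-DID factor in $w^l_{IV,kl}$ collapse to a clean complier share and thereby reconciles the $\sigma$-weighted form of Theorem \ref{sec4thm2} with the $w$-weighted form stated in the Lemma; everything else is bookkeeping.
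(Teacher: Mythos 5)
Your proposal is correct and follows essentially the same route as the paper's proof: both hinge on the identities $CLATT^{CM}_k(W)=CLATT^{eq}_k(W)$ and $\hat D^{2\times2,l}_{kl}\xrightarrow{p}CAET^1_l$ under Assumption \ref{sec4as2}, which turn the $\sigma^l_{IV,kl}$-weighted term $CLATT_l(POST(l))=CAET^1_l\cdot CLATT^{eq}_l(POST(l))$ into the $w^l_{IV,kl}$-weighted term while leaving $\Delta CLATT$ untouched. The only cosmetic difference is that you specialize the already-assembled conclusion of Theorem \ref{sec4thm2}, whereas the paper re-derives the probability limit of each product $\hat w\cdot\hat\beta$ separately; the substance is identical.
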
\par
\begin{proof}
See Appendix \ref{ApeB}.
\end{proof}\par
Lemma \ref{sec4lemma2} shows that Assumption \ref{sec4as2} is not sufficient for the TWFEIV estimand to attain its causal interpretation. If the exposed effects in the first stage are stable over time, we can interpret the first term $WCLATT$ causally and its interpretation seems clear: this parameter is a positively weighted average of each $CLATT^{eq}_{k}(W)$ and each weight assigned to each $CLATT^{eq}_{k}(W)$ reflects all the variations in each DID-IV design. However, the second term $\Delta CLATT$ still remains, which contaminates the causal interpretation of the TWFEIV estimand.

\subsubsection*{Interpretation under Assumption \ref{sec4as2} and Assumption \ref{sec4as3}}
Next, we assume Assumption \ref{sec4as2} and Assumption \ref{sec4as3} additionally. Even in this case, we still have the second term $\Delta CLATT \neq 0$ in general. This implies that the TWFEIV estimand identifies $WCLATT-\Delta CLATT$, that is, this estimand does not generally attain its causal interpretation.

\subsubsection*{Interpretation under Assumption \ref{sec4as2} and Assumption \ref{sec4as4}}
As we already noted in section \ref{sec4.2}, if we assume Assumption \ref{sec4as2} and Assumption \ref{sec4as4} additionally, we have Assumption \ref{sec4as7}, that is, $CLATT_{e,t}=CLATT_{e}$ holds. Then, we obtain the following Lemma.
\begin{Lemma}
\label{sec4lemma3}
Suppose Assumptions \ref{sec2as1}-\ref{sec2as7} hold. In addition, if Assumption \ref{sec4as2} and Assumption \ref{sec4as4} hold, the TWFEIV estimand $\beta_{IV}$ is:
\begin{align*}
\beta_{IV}=\sum_{k \neq U}CLATT_{k}\underbrace{\Bigg[w_{IV,kU}+\sum_{j=1}^{k-1}w_{IV,jk}^{k}
+\sum_{j=k+1}^{K}w_{IV,kj}^{k}\Bigg]}_{\equiv w_{k,IV}}.
\end{align*}
where the weights $w_{IV,kU}$, $w_{IV,kj}^{k}$ and $w_{IV,jk}^{k}$ are the probability limit of $\hat{w}_{IV,kU}$, $\hat{w}_{IV,kj}^{k}$ and $\hat{w}_{IV,jk}^{k}$ respectively.\par
\end{Lemma}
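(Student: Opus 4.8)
The plan is to build directly on Lemma~\ref{sec4lemma2}, which already establishes $\beta_{IV}=WCLATT-\Delta CLATT$ under Assumptions~\ref{sec2as1}--\ref{sec2as7} together with Assumption~\ref{sec4as2}, and then to show that adding Assumption~\ref{sec4as4} both collapses every $CLATT^{eq}_{k}(W)$ block inside $WCLATT$ to the single number $CLATT_{k}$ and kills the correction term $\Delta CLATT$. The key preliminary observation, noted in section~\ref{sec4.3}, is that Assumptions~\ref{sec4as2} and~\ref{sec4as4} together imply Assumption~\ref{sec4as7}, i.e.\ $CLATT_{e,t}=CLATT_{e}$ for every cohort $e$ and every relative period.

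Given Assumption~\ref{sec4as7}, I would first simplify $WCLATT$. Since $CLATT^{eq}_{k}(W)=\frac{1}{T_W}\sum_{t\in W}CLATT_{k,t}$ and $CLATT_{k,t}=CLATT_{k}$ for all $t$, we get $CLATT^{eq}_{k}(W)=CLATT_{k}$ for any window $W$, so that $CLATT^{eq}_{k}(POST(k))=CLATT^{eq}_{k}(MID(k,l))=CLATT_{k}$ and $CLATT^{eq}_{l}(POST(l))=CLATT_{l}$. Next I would show $\Delta CLATT=0$: by definition $CLATT_{k}(W)=\frac{1}{T_W}\sum_{t\in W}CAET^{1}_{k,t}CLATT_{k,t}$, and now Assumption~\ref{sec4as2} gives $CAET^{1}_{k,t}=CAET^{1}_{k}$ while Assumption~\ref{sec4as7} gives $CLATT_{k,t}=CLATT_{k}$, hence $CLATT_{k}(W)=CAET^{1}_{k}\,CLATT_{k}$ for every window $W$; in particular $CLATT_{k}(POST(l))-CLATT_{k}(MID(k,l))=0$ for every ordered pair $k<l$, so the sum defining $\Delta CLATT$ vanishes term by term irrespective of the weights $\sigma_{IV,kl}^{l}$. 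Substituting these two facts into Lemma~\ref{sec4lemma2} yields
\begin{align*}
\beta_{IV}=\sum_{k\neq U}w_{IV,kU}\,CLATT_{k}+\sum_{k\neq U}\sum_{l>k}w_{IV,kl}^{k}\,CLATT_{k}+\sum_{k\neq U}\sum_{l>k}w_{IV,kl}^{l}\,CLATT_{l}.
\end{align*}

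The remaining step is to regroup this expression by cohort. Fix an exposed cohort $k$: the coefficient on $CLATT_{k}$ receives (i) the term $w_{IV,kU}$ from the first sum; (ii) the contribution $\sum_{l>k}w_{IV,kl}^{k}=\sum_{j=k+1}^{K}w_{IV,kj}^{k}$ from the second sum, where $k$ is the early-exposed cohort; and (iii) after relabelling the ordered pair $(k,l)$ of the third sum as $(j,k)$ with $j<k$, the contribution $\sum_{j<k}w_{IV,jk}^{k}=\sum_{j=1}^{k-1}w_{IV,jk}^{k}$, where $k$ is now the later-exposed cohort. Adding these gives precisely the bracketed weight $w_{k,IV}$ in the statement, which finishes the proof. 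The main obstacle is purely combinatorial rather than analytic: one must track that each cohort simultaneously plays the role of the early-exposed group in some $2\times2$ Wald-DID comparisons and of the shifted (later-exposed) group in others, so that the reindexing of the two double sums over ordered pairs of cohorts correctly collects all the weights attached to a given $CLATT_{k}$; once Assumption~\ref{sec4as7} is available, the collapse of the $CLATT^{eq}$ blocks and the vanishing of $\Delta CLATT$ are immediate.
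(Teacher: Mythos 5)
Your proposal is correct and follows essentially the same route as the paper: start from Lemma \ref{sec4lemma2}, use Assumption \ref{sec4as7} (implied by Assumptions \ref{sec4as2} and \ref{sec4as4}) to collapse each $CLATT^{eq}_{k}(W)$ to $CLATT_{k}$ and to make $\Delta CLATT$ vanish, then regroup the double sums by cohort. If anything, your justification that $\Delta CLATT=0$ is slightly more careful than the paper's, since you note explicitly that the window-independence of $CLATT_k(W)$ uses the constancy of $CAET^{1}_{k,t}$ from Assumption \ref{sec4as2} in addition to $CLATT_{k,t}=CLATT_k$.
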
\par
\begin{proof}
See Appendix \ref{ApeB}.
\end{proof}
If Assumption \ref{sec4as2} and Assumption \ref{sec4as4} are satisfied, the TWFEIV estimand is a positively weighted average of each $CLATT_k$ across exposed cohorts, which implies that we can interpret this estimand causally. However, at the same time, we also note that the weight $w_{k,IV}$ assigned to each $CLATT_k$ does not reflect only the cohort share and the fraction of the compliers, but is a function of the cohort share, the fraction of the compliers, and the timing of the initial exposure to the instrument.\par

\section{Extensions}\label{sec5}
This section briefly describes the extensions in section \ref{sec4}. We consider a non-binary, ordered treatment and unbalanced panel settings. It also includes the case when the adoption date of the instrument is randomized across units. For the proofs and the specific discussions, see Appendix \ref{ApeC}.
\subsubsection*{Non-binary, ordered treatment}
Up to now, we have considered only the case of a binary treatment. When treatment takes a finite number of ordered values, $D_{i,t} \in \{0,1,\dots,J\}$, our target parameter in staggered DID-IV design is the cohort specific average causal response on the treated (CACRT) defined below.
\begin{Def} The cohort specific average causal response on the treated (CACRT) at a given relative period $l$ from the initial adoption of the instrument is
\begin{align*}
CACRT_{e,l} \equiv \sum_{j=1}^{J}w^{e}_{e+l,j} \cdot E[Y_{i,e+l}(j)-Y_{i,e+l}(j-1)|E_i=e, D_{i,e+l}^{e} \geq j > D_{i,e+l}^{\infty}]
\end{align*}
where the weights $w^{e}_{e+l,j}$ are:
\begin{align*}
w^{e}_{e+l,j}=\frac{Pr(D_{i,e+l}^{e} \geq j > D_{i,e+l}^{\infty}|E_i=e)}{\sum_{j=1}^{J} Pr(D_{i,e+l}^{e} \geq j > D_{i,e+l}^{\infty}|E_i=e)}.
\end{align*}
\end{Def}
The CACRT is a weighted average of the effect of a unit increase in treatment on outcome, for those who are in cohort $e$ and induced to increase treatment by instrument at a relative period $l$ after the initial exposure. This parameter is similar to the average causal response (ACR) considered in \cite{Angrist1995-ij}, but the difference here is that there exist dynamic effects in the first stage, and each weight $w^{e}_{e+l,j}$ and the associated causal parameters in CACRT are conditioned on $E_i=1$.\par
If we have a non-binary, ordered treatment, one can show that we have Theorem \ref{sec4thm2} and Lemmas \ref{sec4lemma2}-\ref{sec4lemma3} in section \ref{sec4}, which replace $CLATT_{e,k}$ with $CACRT_{e,k}$. Note that our decomposition result for the TWFEIV estimator is unchanged under non-binary, ordered treatment settings.

\subsubsection*{Unbalanced panel case}
Throughout sections \ref{sec2} to \ref{sec4}, we have considered a balanced panel setting. If we assume an unbalanced panel (or repeated cross section) setting, we obtain the following theorem.
\begin{Theorem}
\label{sec5thm1}
    Suppose Assumptions \ref{sec2as1}-\ref{sec2as7} hold. If we assume a binary treatment and an unbalanced panel setting, the population regression coefficient $\beta_{IV}$ is a weighted average of each $CLATT_{e,t}$ in all relative periods after the initial exposure across cohorts with potentially some negative weights:
    \begin{align*}
    \beta_{IV}=\sum_{e}\sum_{t \geq e}w_{e,t}\cdot CLATT_{e,t}.
    \end{align*}
    where the weight $w_{e,t}$ is:
    \begin{align*}
    w_{e,t}=\frac{E[\hat{Z}_{i,t}|E_i=e]\cdot n_{e,t} \cdot CAET^{1}_{e,t}}{\sum_{e}\sum_{t \geq e}E[\hat{Z}_{i,t}|E_i=e]\cdot n_{e,t} \cdot CAET^{1}_{e,t}},
    \end{align*}
    where $E[\hat{Z}_{i,t}|E_i=e]$ is the population residuals from regression $Z_{i,t}$ on unit and time fixed effects in cohort $e$ and $n_{e,t}$ is the population share for cohort $e$ at time $t$. The weights sum to one.
\end{Theorem}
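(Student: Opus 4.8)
The plan is to reduce the claim to the population Frisch--Waugh--Lovell (FWL) identity and then strip from $Y_{i,t}$ and $D_{i,t}$ everything that is orthogonal to the partialled-out instrument. First I would write, by FWL applied to the population versions of the first-stage and reduced-form regressions \eqref{TWFEIV2}--\eqref{TWFEIV3} in the unbalanced design,
\[
\beta_{IV}=\frac{E[\hat Z_{i,t}\,Y_{i,t}]}{E[\hat Z_{i,t}\,D_{i,t}]},
\]
where $\hat Z_{i,t}$ is the population residual of $Z_{i,t}$ after projecting on the unit and time dummies over the observed cells, and the expectations are over the sampling distribution of observed $(i,t)$ pairs, so that $E[\hat Z_{i,t}W_{i,t}]=\sum_{e}\sum_{t} n_{e,t}\,E[\hat Z_{i,t}W_{i,t}\mid E_i=e,\ \text{time}=t]$ with $n_{e,t}$ the population share of cohort $e$ at time $t$. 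Assumption \ref{sec2as1} lets me index the relevant potential variables by the adoption date $E_i=e$.

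Next I would decompose the observed variables into never-exposed potentials plus exposed effects. Using Assumptions \ref{sec2as2}--\ref{sec2as3} (no carryover/exclusion) together with Assumption \ref{sec2as5} (no anticipation), all pre-exposure exposed effects vanish, so $D_{i,t}=D_{i,t}^{\infty}+\mathbf 1\{t\ge E_i\}(D_{i,t}^{E_i}-D_{i,t}^{\infty})$ and $Y_{i,t}=Y_{i,t}(D_{i,t}^{\infty})+\mathbf 1\{t\ge E_i\}(Y_{i,t}(D_{i,t}^{E_i})-Y_{i,t}(D_{i,t}^{\infty}))$. The key cancellation comes from the parallel-trends assumptions: Assumption \ref{sec2as6} gives $E[D_{i,t}^{\infty}\mid E_i=e]=\gamma_e+\psi_t$ and Assumption \ref{sec2as7} gives $E[Y_{i,t}(D_{i,t}^{\infty})\mid E_i=e]=\alpha_e+\delta_t$, i.e.\ both conditional means are additively separable in $(e,t)$ and hence lie in the span of the unit and time dummies, to which $\hat Z_{i,t}$ is orthogonal by construction; thus $E[\hat Z_{i,t}(\alpha_{E_i}+\delta_t)]=E[\hat Z_{i,t}(\gamma_{E_i}+\psi_t)]=0$. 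Applying the law of iterated expectations over the cells $(e,t)$ — using the i.i.d.\ assumption and the implicit exogeneity of the observation pattern, which makes the within-cell deviations of the never-exposed potentials mean-independent of $\hat Z_{i,t}$ — and recognizing $E[D_{i,t}^{e}-D_{i,t}^{\infty}\mid E_i=e]=CAET^{1}_{e,t}$ and $E[Y_{i,t}(D_{i,t}^{e})-Y_{i,t}(D_{i,t}^{\infty})\mid E_i=e]=CAIET_{e,t}$ leaves
\begin{align*}
E[\hat Z_{i,t}D_{i,t}] &=\sum_{e}\sum_{t\ge e} n_{e,t}\,E[\hat Z_{i,t}\mid E_i=e]\,CAET^{1}_{e,t},\\
E[\hat Z_{i,t}Y_{i,t}] &=\sum_{e}\sum_{t\ge e} n_{e,t}\,E[\hat Z_{i,t}\mid E_i=e]\,CAIET_{e,t}.
\end{align*}

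Finally I would invoke \eqref{sec4.1eq1}, $CAIET_{e,t}=CLATT_{e,t}\cdot CAET^{1}_{e,t}$, substitute into the numerator, and take the ratio; the common factor $\sum_{e}\sum_{t\ge e}n_{e,t}E[\hat Z_{i,t}\mid E_i=e]CAET^{1}_{e,t}$ in the denominator produces exactly the stated weights $w_{e,t}$, which sum to one by inspection. Some $w_{e,t}$ can be negative because $n_{e,t}\ge0$ and $CAET^{1}_{e,t}\ge0$ under the monotonicity assumption (Assumption \ref{sec2as4}), while the residual $E[\hat Z_{i,t}\mid E_i=e]$ — the unbalanced-panel analogue of the double-demeaned instrument $\Tilde Z_{i,t}$ — can take either sign, just as in the balanced case. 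The main obstacle is the second step in the unbalanced design: with missing cells the unit fixed-effect estimates are no longer simple within-unit means of $Z$ and can differ across units in the same cohort, so one must argue carefully that $E[\hat Z_{i,t}\mid E_i=e,\ \text{time}=t]$ is the relevant conditioning object and that the within-cell residuals of the never-exposed potential outcome and treatment are orthogonal to $\hat Z_{i,t}$; this amounts to treating the pattern of observed periods as exogenous with respect to potential outcomes and treatment, which I would state explicitly as a maintained condition of the unbalanced-panel setting.
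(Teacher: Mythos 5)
Your proposal follows essentially the same route as the paper's proof: the population FWL identity, the decomposition of $D_{i,t}$ and $Y_{i,t}$ into never-exposed potentials plus post-exposure exposed effects (via no carryover, exclusion, and no anticipation), elimination of the never-exposed terms through the orthogonality of $\hat Z_{i,t}$ to anything additively separable in cohort and time (which is exactly what the parallel-trends assumptions deliver, and which the paper implements by differencing against a reference cohort and period), and finally the identity $CAIET_{e,t}=CLATT_{e,t}\cdot CAET^{1}_{e,t}$ to produce the stated weights. The only substantive remark is that the issue you flag at the end — that in an unbalanced panel $\hat Z_{i,t}$ need not be constant within a cohort-time cell — is a point the paper's proof simply asserts away ("$\hat Z_{i,t}$ only varies across cohort and time level"), so your explicit maintained condition on the exogeneity of the observation pattern is, if anything, more careful than the published argument.
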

\begin{proof}
See Appendix \ref{ApeC}.
\end{proof}
Theorem \ref{sec5thm1} shows that the population regression coefficient $\beta_{IV}$ is a weighted average of all possible $CLATT_{e,t}$ across cohorts, but some weights can be negative. Theorem \ref{sec5thm1} is related to \cite{De_Chaisemartin2020-dw}, who show the decomposition theorem for the TWFEIV estimand when the assignment of the instrument is non-staggered and a no carry over assumption is satisfied in the first stage regression. Theorem \ref{sec5thm1} instead considers the case when the assignment of the instrument is staggered and there exist dynamic effects in the first stage. Theorem \ref{sec5thm1} assumes a binary treatment, but a non-binary, ordered treatment case is easy to extend: one can obtain the theorem which replaces $CLATT_{e,t}$ with $CACRT_{e,t}$.\par 
If one wants to check the validity of the TWFEIV estimator in a given application, one can estimate each weight by constructing the consistent estimator for $CAET^{1}_{e,t}$. If there does not exist a never exposed cohort, however, it is not feasible to obtain the consistent estimator for $CAET^{1}_{l,t}$ in the last exposed cohort $l=\max\{E_i\}$. In Appendix \ref{ApeC}, we provide another representation of the decomposition theorem, in which we can estimate each weight consistently and quantify the bias term arising from the bad comparisons performed by TWFEIV regressions.

\subsubsection*{Random assignment of the adoption date} 
In practice, researchers may use the TWFEIV regression when the adoption date of the instrument is randomized across units (e.g., Randomized control trial). In Appendix \ref{ApeC}, we consider the causal interpretation of the TWFEIV estimand under the random assignment assumption. In the DID literature, a similar issue is analyzed in \cite{Athey2022-uo}: they investigate the causal interpretation of the TWFE estimand when the adoption date of the treatment is randomized across units.\par
First, we define the random assignment assumption of the adoption date $E_i$.
\begin{Assumption}[Random assignment assumption of adoption date $E_i$]
\label{sec5as1}
For all $t \in \{1,\dots,T\}$ and all $z \in \mathcal{S}(Z)$, $E_i$ is independent of potential outcomes:
\begin{align*}
(Y_{i,t}(1),Y_{i,t}(0),D_{i,t}(z))  \mathop{\perp\!\!\!\!\perp}  E_i.
\end{align*}
\end{Assumption}
When the assignment of the adoption date is totally randomized, our target parameter is the local average treatment effect (LATE) defined below.
\begin{Def}
The local average treatment effect (LATE) at a given relative period $l$ from the initial adoption of the instrument is 
\begin{align*}
LATE_{e,l}=E[Y_{i,e+l}(1)-Y_{i,e+l}(0)|CM_{e,e+l}].
\end{align*}
\end{Def}\par
Unlike the CLATT, this parameter is not conditioned on the adoption date $E_i$ due to the independence assumption. The causal parameter in the first stage, $CAET^{1}_{e,l}$, is also simplified to the average exposed effect ($AE^{1}_{e,l}$) defined below:
\begin{align*}
CAET^{1}_{e,l}&=E[D_{i,e+l}-D_{i,e+l}^{\infty}]\\
&\equiv AE^{1}_{e,l}.
\end{align*}\par
If Assumptions \ref{sec2as1}- \ref{sec2as7} and Assumption \ref{sec5as1} hold, one can obtain the theorem and lemmas in section \ref{sec4}, which replace $CAET^{1}_{k,t}$ and $CLATT_{k,t}$ with $AE^{1}_{k,t}$ and $LATE_{k,t}$, respectively. This implies that even when the adoption date of the instrument is randomized, we cannot interpret the TWFEIV estimand causally in general, and the causal interpretation requires the stable exposed assumptions in both the first stage and reduced form regressions.

\section{Application}\label{sec6}
In this section, we illustrate our DID-IV decomposition theorem in the setting of \cite{Miller2019-ok}. We first explain our dataset. We then assess the plausibility of the staggered DID-IV identification strategy implicitly imposed by \cite{Miller2019-ok}. Finally, we present the DID-IV decomposition result and state the implication.\par
\cite{Miller2019-ok} study the effect of an increase in the share of female police officers on intimate partner homicide (IPH) rates among women in the United States between $1977$ and $1991$. The increase was in line with a shift in gender norms during these periods and there was growing interest in whether the female integration improved police quality in addressing violence against women.\par
To establish the causal relationship, \cite{Miller2019-ok} first regress the IPH rates on the lagged female officers' share with county and year fixed effects. In the second part of their analysis, \cite{Miller2019-ok} exploit "plausibly exogenous variation in female integration from externally imposed AA (affirmative action) following employment discrimination cases against particular departments in different years" across $255$ counties. Specifically, \cite{Miller2019-ok} use the two-way fixed effects instrumental variable regression, instrumenting the lagged female officers' share with the exposure years of AA plans.\par
\cite{Miller2019-ok} implicitly rely on staggered DID-IV designs to estimate the causal effects: \cite{Miller2019-ok} concern that "AA itself might have occurred following increasing trends" in the share of female officers or the IPH rates. To address this concern, \cite{Miller2019-ok} check the trends of these variables before AA introduction using event study regressions in the first stage and reduced form.\par
In this application, we slightly modify the authors' setting for simplicity. Specifically, unlike \cite{Miller2019-ok}, we use the staggered adoption of AA plans as our instrument instead of the exposure years. In the authors' setting, AA plans were terminated in some counties during the sample period, which is probably the reason why \cite{Miller2019-ok} use the exposure years of AA plans as their instrument. We instead drop such counties from our sample and make the instrument assignment staggered. Although it reduces our sample size, it allows us to have a clearer staggered DID-IV identification strategy. In addition, it enables us to apply our DID-IV decomposition theorem to the TWFEIV estimate in the authors' setting.

\subsubsection*{Data}\label{sec6.1}
\begin{table}[t]
\centering
\footnotesize
\begin{threeparttable}
\renewcommand{\arraystretch}{1.2}
\caption{The staggered AA introduction: exposure year, cohort sizes.}
\begin{tabular*}{11cm}{l@{\hspace{2cm}}l}
\hline
Start year of AA plans & Number of counties \\
\hline
Unexposed counties & 159\\
1976 & 6\\
1977 & 3\\
1978 & 3\\
1979 & 4\\
1980 & 3\\
1981 & 5\\
1982 & 4\\
1983 & 3\\
1984 & 2\\
1985 & 1\\
1986 & 1\\
1987 & 3\\
1988 & 1\\
1990 & 1\\
\hline
\end{tabular*}
\begin{tablenotes}[flushleft]\footnotesize
\item \textit{Notes}: This table presents the initial exposure year of AA plans and the number of counties in each year in our final sample.
\end{tablenotes}
\label{table1}
\end{threeparttable}
\vspace{5mm}
\begin{threeparttable}
\renewcommand{\arraystretch}{1.2}
\caption{Summary statistics}
\begin{tabular*}{16cm}{l@{\hspace{1cm}}cccc}
\hline
&& All counties & Unexposed counties & Exposed counties\\
\hline
IPH per $100000$ population & 1977-91 & 0.544 & 0.521 & 0.638\\
& 1977 & 0.549 & 0.526 & 0.641\\
& 1991 & 0.489 & 0.461 & 0.599\\
Lagged female officer share & 1977-91 & 0.053 & 0.050 & 0.066\\
& 1977 & 0.033 & 0.033 & 0.032\\
& 1991 & 0.077 & 0.071 & 0.101\\
Counties & & 199 & 159 & 40\\
Observations & & 2985 & 2385 & 600\\
\hline
\end{tabular*}
\begin{tablenotes}[flushleft]\footnotesize
\item \textit{Notes}: This table presents summary statistics on our final sample from $1977$ to $1991$. The sample consists of $199$ counties.
\end{tablenotes}
\label{table2}
\end{threeparttable}
\end{table}
The data come from \cite{Miller2019-ok}. Our final sample differs from their main analysis sample in two ways. First, unlike \cite{Miller2019-ok}, we only include the counties whose variables are observable for all sample periods. This restriction excludes $20$ counties and allows us to create the balanced panel data set. Second, as we already noted, we construct an instrument that takes one after the AA introduction. \cite{Miller2019-ok} use data on AA plans from \cite{Miller2012-fo} and define the instrument as the difference between the current year and the start year of AA introduction\footnote{As one can see in this construction, \cite{Miller2019-ok} create the lagged instrument in line with the lagged female officers' share. Therefore, we construct the lagged staggered instrument instead of the current one.}; see \cite{Miller2012-fo}, \cite{Miller2019-ok} for details. We identify the initial year of AA plans in each county, and discard the counties whose AA plans ended between $1976$ and $1990$ ($8$ counties dropped) and whose AA plans were already implemented before $1976$ ($23$ counties dropped). Table \ref{table1} shows the timing of AA adoption across $199$ counties between $1976$ and $1990$.\par
Summary statistics for county characteristics are reported in Table \ref{table2}. We have a smaller sample size, but otherwise have a similar sample to that of \cite{Miller2019-ok}. Counties are separated into exposed and unexposed counties based on whether the county experienced AA introduction. In both types of counties, the lagged female officers' share increased over time. However, it increases more in counties who are exposed to AA plans during sample periods. The IPH rates had downward trends in all counties, but it seems that there are no systematic differences in the trends between exposed and unexposed counties.\par

\subsubsection*{Assessing the identifying assumptions in staggered DID-IV design}
In this section, we discuss the validity of the staggered DID-IV identification strategy implicitly imposed by \cite{Miller2019-ok}. Note that in the authors' setting, our target parameter is the cohort specific average causal response on the treated (CACRT) as female officer share is a non-binary, ordered treatment. We therefore expect that we can identify each CACRT if the underlying staggered DID-IV identification strategy seems plausible, which we will check below. Here, we presume the no carry over assumption (Assumption \ref{sec2as2}).\par

 \subparagraph{Exclusion restriction (Assumption \ref{sec2as3}).}
 It would be plausible, given that the AA plans (instrument) did not affect IPH rates other than by increasing the female officers' share. This assumption may be violated for instance if the AA plans increased both the black and female officer shares and changes in IPH rates reflect both effects. \cite{Miller2019-ok} conduct the robustness check and confirm that this is not the case; see footnote $42$ in \cite{Miller2019-ok} for details.
 \subparagraph{Monotonicity assumption (Assumption \ref{sec2as4}).}
It would be automatically satisfied in the authors' setting: the AA plans (instrument) were imposed on departments with the intent to increase the share of female police officers. This ensures that the dynamic effects of the instrument on female police officers should be non-negative after the AA introduction.
 \subparagraph{No anticipation in the first stage (Assumption \ref{sec2as5}).}
It would be plausible that there is no anticipatory behavior, given that the treatment status, i.e., the female officers share before the AA plans is equal to the one in the absence of the AA introduction across counties. This assumption may be violated if the police departments in some counties had private knowledge about the probability of the AA introduction and manipulated their treatment status before the implementation.\par
\begin{figure}[t]
  \begin{minipage}[b]{0.5\columnwidth}
    \centering
    (a) First stage DID
    \includegraphics[width=\columnwidth]{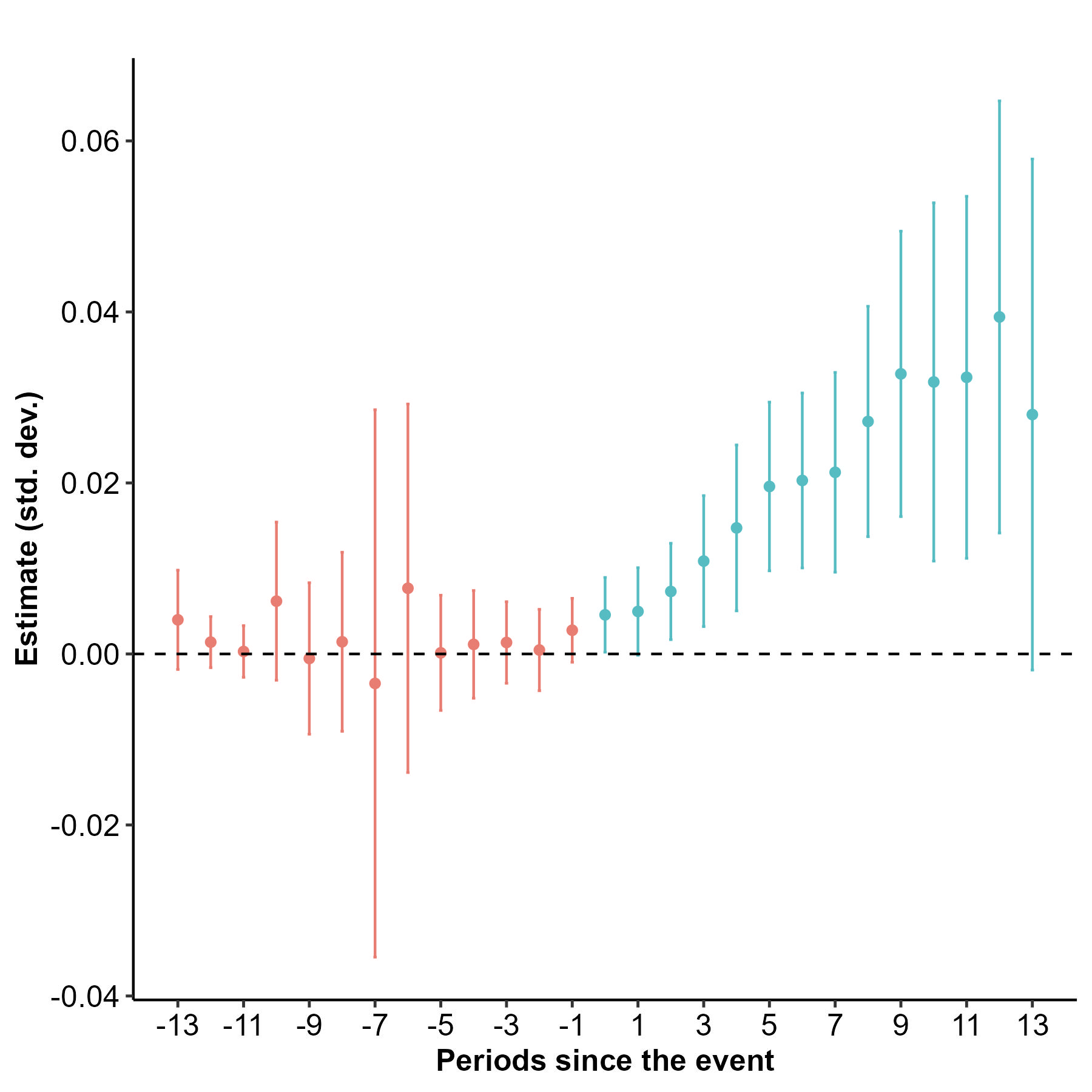}
  \end{minipage}
  \hspace{0.02\columnwidth} 
  \begin{minipage}[b]{0.5\columnwidth}
    \centering
    (b) Reduced form DID
    \includegraphics[width=\columnwidth]{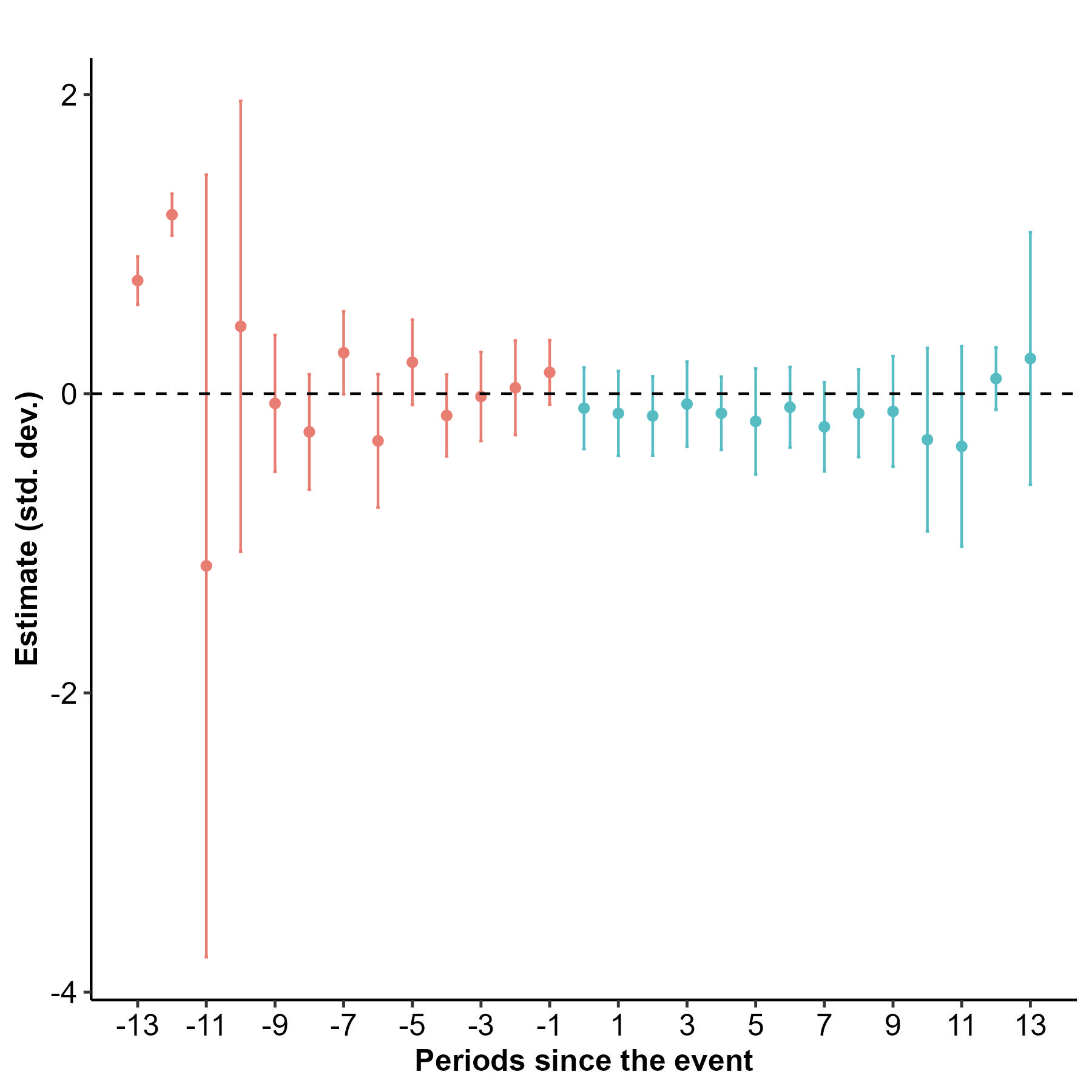}
  \end{minipage}
  \caption{Weighted average of the effects of AA introduction on female officer share and IPH rates in \cite{Miller2019-ok}. \textit{Notes}: The results for the effects of AA plans on the female officers share (Panel (a)) and on the IPH rates (Panel (b)) under the staggered DID-IV identification strategy. The red line represents the weighted average of the estimates with simultaneous $95\%$ confidence intervals for pre-exposed periods in both panels where the weight reflects cohort size in each period. The control group is a never exposed cohort and the reference period is $t-1$ for period $t$ estimate. These should be equal to zero under the null hypothesis that parallel trends assumptions in the treatment and the outcome hold. The blue line represents the weighted average of the estimates with simultaneous $95\%$ confidence intervals for post exposed periods in both panels where the weight reflects cohort size in each period. The control group is a never exposed cohort and the reference period is $t=-1$ for all post-exposed period estimates. All the standard errors are clustered at county level.}
  \label{Figure2}
\end{figure}\par
\vskip\baselineskip
Next, we assess the plausibility of the parallel trends assumptions in the treatment and the outcome. To do so, we apply the method proposed by \cite{Callaway2021-wl} to the first stage and reduced form, respectively\footnote{Unfortunately, in the presence of heterogeneous treatment effects, the coefficients on event study regression face a contamination bias shown by \cite{Sun2021-rp}.}. Specifically, we estimate the weighted average of the effects of the instrument on the treatment and outcome in each relative period where the weight reflects the cohort size. We depict the results in Figure \ref{Figure2}. The plots report estimates for the effects before and after AA plans with a simultaneous $95\%$ confidence interval in each stage. The confidence intervals account for clustering at the county level.\par
 
\subparagraph{Parallel trends assumption in the treatment (Assumption \ref{sec2as6}).}
It requires that if the AA plans had not occurred, the average time trends of the female officers share would have been the same across counties and over time. The pre-exposed estimates in Panel (a) in Figure \ref{Figure2} seem consistent with the parallel trends assumption in the treatment: the pre-exposed estimates around AA plans are not significantly different from zero. 

\subparagraph{Parallel trends assumption in the outcome (Assumption \ref{sec2as7}).}
It would be plausible if the AA plans 
had not been implemented, the average time trends of the IPH rates would have been the same across counties and over time. Panel (b) in Figure \ref{Figure2} presents that the pre-exposed estimates around AA introduction are not significantly different from zero, which indicates that the parallel trends assumption in the outcome is also plausible.\par
\vskip\baselineskip
Figure \ref{Figure2} also sheds light on the dynamic effects of the AA plans on the female officer share and IPH rates during the post-exposed periods. The figure indicates that the effect of the AA plans on the female officer share increases over time, whereas the effect on IPH rates through the female officer share has downward trends during the post-exposed periods. We note that the estimated effects in the reduced form are not scaled by the ones in the first stage, i.e., these estimates do not capture each CACRT after the AA shock.

\subsubsection*{Illustrating the weights in TWFEIV regression}
First, we estimate the two-way fixed effects instrumental variable regression in the authors' setting. To clearly illustrate the shortcomings of the TWFEIV regression, we modify the authors' specification in two ways: \cite{Miller2019-ok} include some covariates and weight their regression with county population, whereas we exclude such covariates and do not apply their weights to our regression.\par
The result is shown in Table \ref{table3}. The two-way fixed effects instrumental variable estimate is $-0.646$ and it is not significantly different from zero\footnote{Although \cite{Miller2019-ok} do not report the TWFEIV estimate without weights and covariates, when we run such a TWFEIV regression in their final analysis sample, the IV estimate is $-1.445$ and is not significantly different from zero. This implies that we reach the same conclusion as in \cite{Miller2019-ok} in our data.}. However, as we already noted in section \ref{sec4}, we cannot generally interpret the IV estimate as measuring a properly weighted average of each CACRT if the effect of the AA introduction on female officer share or IPH rates is not stable over time.\par
Our DID-IV decomposition theorem (Theorem \ref{sec3thm1}) allows us to visualize the source of variations in the three types of the DID-IV design: Unexposed/Exposed, Exposed/Not Yet Exposed, and Exposed/Exposed Shift designs.
Panel (a) in Figure \ref{Figure3} plots the weights and the corresponding Wald-DID estimates for all designs and Panels (b), (c), and (d) in Figure \ref{Figure3} plot them for each type of the DID-IV design, respectively. Table \ref{table4} reports the total weight, total Wald-DID estimate, and weighted average of Wald-DID estimates in each type of the DID-IV design. The total weight and total Wald-DID estimate are calculated by summing the weights and Wald-DID estimates respectively, and the weighted average of Wald-DID estimates is calculated by summing the products of the weight and the associated Wald-DID estimate. Summing all the weighted average of Wald-DID estimates yields the two-way fixed instrumental variable estimate ($-0.646$).\par
Panel (a) in Figure \ref{Figure3} shows that the weights are heavily assigned to the Wald-DID estimates in Unexposed/Exposed designs. This is due to the large sample size of the unexposed cohort in the authors' setting. Panels (b), (c), and (d) in Figure \ref{Figure3} highlight that some weights in each type can be negative: $2$ out of $14$ weights are negative in Unexposed/Exposed designs, $29$ out of $91$ weights are negative in Exposed/Not Yet Exposed designs and $50$ out of $91$ weights are negative in Exposed/Exposed Shift designs. The negative weights arise because some DID estimates of the treatment in the first stage are negative in each type of the DID-IV design.\par
The TWFEIV estimate suffers from a downward bias due to the bad comparisons arising from the Exposed/Exposed shift designs. As we already mentioned in section \ref{sec4}, the TWFEIV estimand potentially fails to summarize the causal effects if the effect of the instrument on the treatment or the outcome evolves over time. Table \ref{table4} indicates that the estimated bias occurring from the Exposed/Exposed shift designs is quantitatively not negligible: the weighted average of the Wald-DID estimates in the Exposed/Exposed shift designs is $-0.093$, which accounts for one-seventh of our IV estimate.
\begin{figure}[H]
\centering
\footnotesize
\includegraphics[width=0.8\columnwidth]{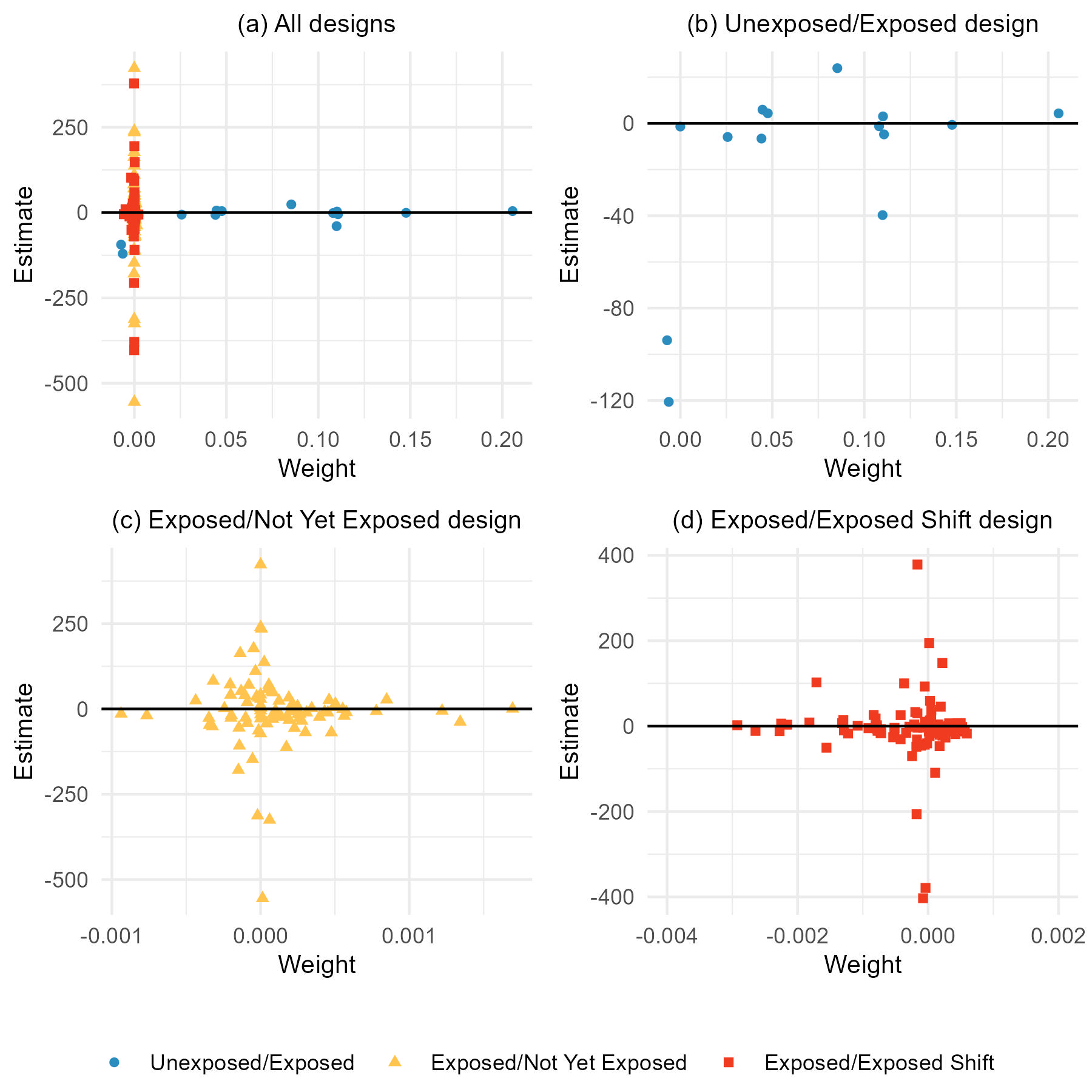}
\caption{Instrumented difference-in-differences decomposition result in the setting of \cite{Miller2019-ok}. \textit{Notes}: Panel (a) plots the weights and the corresponding Wald-DID estimates for all DID-IV designs and Panels (b), (c), and (d) plot them for each type of the DID-IV design, respectively. Unexposed/Exposed designs yield blue circles, Exposed/Not Yet Exposed designs yield yellow triangles and Exposed/Exposed Shift designs yield red squares.}
  \label{Figure3}
\end{figure}
\begin{table}[H]
\centering
\footnotesize
\begin{threeparttable}
\renewcommand{\arraystretch}{1.1}
\caption{Estimate for the effect of female officers share on IPH rates.}\label{table3}
\begin{tabular*}{14cm}{c@{\hspace{2cm}}c@{\hspace{1cm}}c@{\hspace{1cm}}c}
\hline
& Estimate & Standard Error & 95\% CI \\
\hline
TSLS with fixed effects & -0.646 & 3.284 & [-7.594, 6.301]\\
\hline
\end{tabular*}
\begin{tablenotes}[flushleft]\footnotesize
\item \textit{Notes}: Sample consists of $199$ counties. Confidence intervals account for clustering at the county level.
\end{tablenotes}
\end{threeparttable}
\begin{threeparttable}
\renewcommand{\arraystretch}{1.1}
\caption{Total weight, Total and Weighted WDD estimates in each type of the DID-IV design.}\label{table4}
\begin{tabular*}{16cm}{c@{\hspace{1cm}}ccc}
\hline
& Total weight & Total WDD estimate & Weighted WDD estimate\\
\hline
Unexposed/Exposed & 1.026 & -233.198 & -0.399\\
Exposed/Not Yet Exposed & 0.010 & -490.665 & -0.154\\
Exposed/Exposed Shift & -0.036 & -569.426 & -0.093\\
\hline
\end{tabular*}
\begin{tablenotes}[flushleft]\footnotesize
\item \textit{Notes}: This table presents the total weight, total Wald-DID estimate, and weighted average of Wald-DID estimates in each type of the DID-IV design: Unexposed/Exposed, Exposed/Not Yet Exposed, and Exposed/Exposed Shift designs.
\end{tablenotes}
\end{threeparttable}
\end{table}
\section{Alternative specifications}\label{sec7}
So far, we have considered simple TWFEIV regressions as in equation \eqref{intro1}. However, many studies routinely estimate various specifications, such as weighting or introducing covariates, to check the robustness of their findings. In this section, we extend our DID-IV decomposition theorem to the settings with weighting and covariates, and provide simple tools to examine how different specifications affect differences in estimates. We illustrate these by revisiting \cite{Miller2019-ok}.\par
The tools we provide here are based on \cite{Goodman-Bacon2021-ej}. Recall that our DID-IV decomposition theorem shows that the TWFEIV estimator can be written as the product of a vector of $2 \times 2$ Wald-DID estimators ($\hat{\bm{\beta}}_{IV}^{2 \times 2}$) and a vector of weights ($\bm{s}$), that is, $\hat{\beta}^{IV}=\bm{s'}\hat{\bm{\beta}}_{IV}^{2 \times 2}$. When a TWFEIV estimator generated from different specification ($\hat{\beta}_{IV,alt}$) can also be written as the product of a vector of $2 \times 2$ Wald-DID estimators ($\hat{\bm{\beta}}_{IV,alt}^{2 \times 2}$) and a vector of their associated weights ($\bm{s_{alt}}$), one can decompose the difference between the two specifications as
\begin{align*}
\hat{\beta}_{IV,alt}-\hat{\beta}_{IV}=\underbrace{\bm{s'}(\hat{\bm{\beta}}_{IV,alt}^{2 \times 2}-\hat{\bm{\beta}}_{IV}^{2 \times 2})}_{\text{Due to}\hspace{1mm}2 \times 2\hspace{1mm}\text{Wald-DIDs}}+\underbrace{(\bm{s'_{alt}}-\bm{s'})\hat{\bm{\beta}}_{IV}^{2 \times 2}}_{\text{Due to}\hspace{1mm}2 \times 2\hspace{1mm}\text{weights}}+\underbrace{(\bm{s'_{alt}}-\bm{s'})(\hat{\bm{\beta}}_{IV,alt}^{2 \times 2}-\hat{\bm{\beta}}_{IV}^{2 \times 2})}_{\text{Due to the interaction of the two}}.
\end{align*}
It takes the form of a Oaxaca-Blinder-Kitagawa decomposition (\cite{Oaxaca1973-zy}, \cite{Blinder1973-os}, \cite{Kitagawa1955-gz}) and indicates that the difference comes from changes in $2 \times 2$ Wald-DID estimators, changes in weights, and the interaction of the two. Dividing both sides by $\hat{\beta}_{IV,alt}-\hat{\beta}_{IV}$, one can measure the proportional contribution of each term on the difference.  Plotting each pair in ($\hat{\bm{\beta}}_{IV,alt}^{2 \times 2}, \hat{\bm{\beta}}_{IV}^{2 \times 2})$ and $(\bm{s'_{alt}}, \bm{s'})$, one can also examine which elements in each term have a significant impact on the difference. 
\subsection{Weighted TWFEIV regression}
When researchers use weighted TWFEIV regression instead of unweighted one, it potentially changes the influence of Wald-DID estimators ($\hat{\bm{\beta}}_{IV,WLS}^{2 \times 2}$) by replacing the DIDs of the treatment and the outcome with the weighted ones. It also potentially change the influence of weights ($\bm{s'_{WLS}}$) by replacing the sample share with the relative amount of the specified weight and the DIDs of the treatment with the weighted ones. Table \ref{table5} shows the result of our TWFEIV regression weighted by county population in \cite{Miller2019-ok}: the estimate changes from $-0.646$ to $-0.386$. The decomposition result indicates that the contribution of the changes in $2 \times 2$ Wald-DIDs is negative, whereas the contributions of the changes in weights and the interaction are positive.\par
Figure \ref{Figure4} plots the $2 \times 2$ Wald-DIDs and the associated weights in WLS against those in OLS. Panel (a) shows that most comparisons of the Wald-DID between OLS and WLS are located at the $45$-degree line, but some comparisons generated from Exposed/Not Yet Exposed and Exposed/Exposed Shift designs are away from the 45-degree line. In addition, this figure indicates that the Wald-DID generated from the comparison between $1978$ and $1991$ counties ($1991$ counties are the controls) is much more negative in WLS than in OLS, which drives the overall negative impact of the changes in $2 \times 2$ Wald-DIDs on the difference between the two specifications. Panel (b) shows that most comparisons of the decomposition weight between OLS and WLS are near the $45$-degree line and the origin, but some comparisons generated from Unexposed/Exposed designs are away from the 45-degree line and the origin. This figure also indicates that the decomposition weight generated from the comparison between $1982$ and unexposed counties is much more positive in WLS than in OLS, which causes the overall positive impact of the changes in weights on the difference between the two specifications. 
\begin{table}[t]
\centering
\footnotesize
\renewcommand{\arraystretch}{1.1}
\caption{Estimate for the effect of female officers share on IPH rates.}\label{table5}
\begin{tabular*}{16cm}{l@{\hspace{3cm}}c@{\hspace{2cm}}c@{\hspace{2cm}}c}
\hline
& $(1)$ & $(2)$ & $(3)$ \\
\hline
& Baseline & WLS & Covariates \\
\hline
Estimate & -0.646 & -0.386 & -0.868 \\
Standard Error & 3.284 & 2.452 & 3.968 \\
Difference from baseline & & 0.260 & -0.222\\
Difference comes from: & & &\\
\multicolumn{1}{c}{$ 2 \times 2$ Wald-DIDs} & & -4.048 & 0.370\\
\multicolumn{1}{c}{Weights} & & 2.341 & 16.503\\
\multicolumn{1}{c}{Interaction} & & 1.966 & -17.107\\
\multicolumn{1}{c}{Within term} & & 0 & 0.012\\
\hline
\end{tabular*}
\begin{tablenotes}[flushleft]\footnotesize
\item \textit{Notes}: This table presents TWFEIV estimates in the setting of \cite{Miller2019-ok}. Column $(1)$ is a simple TWFEIV estimate from Eq. \eqref{intro1}. Column $(2)$ is a TWFEIV estimate weighted by county population in $1977$. Column $(3)$ is a TWFEIV estimate with time-varying covariates which include the lagged local area controls, the county's non-IPH rate, and the state-level crack cocaine index. All the standard errors are clustered at county level.
\end{tablenotes}
\end{table}
\subsection{TWFEIV regression with time-varying covariates}
In most applications of thr DID-IV method, researchers typically estimate TWFEIV models that include time-varying covariates, in addition to the simple ones, based on the belief that it enhances the validity of the parallel trends assumptions in the first stage and reduced form regressions:
\begin{align}
\label{TWFEIV-covariate1}
    &Y_{i,t}=\phi_{i.}+\lambda_{t.}+\beta_{IV}^{X} D_{i,t}+\psi X_{i,t}+v_{i,t},\\
\label{TWFEIV-covariate2}
    &D_{i,t}=\gamma_{i.}+\zeta_{t.}+\pi^{X} Z_{i,t}+\tilde{\psi} X_{i,t}+\eta_{i,t}.
\end{align}
In this section, we derive a DID-IV decomposition result for the case when we introduce the time-varying covariates into TWFEIV regressions. Our decomposition result in this section is based on \cite{Goodman-Bacon2021-ej}, who decomposes TWFE estimators with time-varying covariates. Appendix \ref{ApeD} further considers the causal interpretation of the covariate-adjusted TWFEIV estimand under additional conditions.\par
First, consider the coefficient on instrument ($\alpha^{X}$) in the reduced form regression:
\begin{align}
\label{TWFEIV-covariate3}
    &Y_{i,t}=\phi_{i.}+\lambda_{t.}+\alpha^{X} Z_{i,t}+\xi X_{i,t}+v_{i,t}.
\end{align}
Let $\Tilde{Z}_{i,t}$ and $\Tilde{X}_{i,t}$ denote the double demeaning variables of $Z_{i,t}$ and $X_{i,t}$ respectively, obtained from regressing $Z_{i,t}$ and $X_{i,t}$ on time and unit fixed effects. Let $\Tilde{z}_{i,t}$ denote the residuals obtained from regressing $\Tilde{Z}_{i,t}$ on $\Tilde{X}_{i,t}$:
\begin{align*}
\Tilde{Z}_{i,t}=\overbrace{\hat{\Gamma}\Tilde{X}_{i,t}}^{\Tilde{p}_{i,t}}+\Tilde{z}_{i,t}.
\end{align*}
Here, we define the linear projection as $\Tilde{p}_{i,t} \equiv \hat{\Gamma}\Tilde{X}_{i,t}$. The specific expression for $\Tilde{z}_{i,t}$ is:
\begin{align*}
\Tilde{z}_{i,t}&=[(Z_{i,t}-\bar{Z_i})-(\hat{\Gamma}X_{i,t}-\hat{\Gamma}\bar{X_{i}})]-[(\bar{Z}_t-\bar{\bar{Z}})-(\hat{\Gamma}\bar{X}_{t}-\hat{\Gamma}\bar{\bar{X}})]\\
&\equiv (z_{i,t}-\bar{z}_{i})-(\bar{z}_{t}-\bar{\bar{z}}).
\end{align*}
By the FWL theorem, we then obtain the following expression for $\hat{\alpha}^{X}$:
\begin{align*}
\hat{\alpha}^{X}=\frac{\hat{C}(Y_{i,t},\Tilde{z}_{i,t})}{\hat{V}^{\Tilde{z}}}=\frac{\hat{C}(Y_{i,t},\Tilde{Z}_{i,t}-\Tilde{p}_{i,t})}{\hat{V}^{\Tilde{z}}},
\end{align*}
where $\hat{V}^{\Tilde{z}}$ is the variance of $\Tilde{z}_{i,t}$.
By symmetry, we can also express the first stage coefficient on instrument $\hat{\pi}^{X}$ as follows:
\begin{align*}
\hat{\pi}^{X}=\frac{\hat{C}(D_{i,t},\Tilde{z}_{i,t})}{\hat{V}^{\Tilde{z}}}=\frac{\hat{C}(D_{i,t},\Tilde{Z}_{i,t}-\Tilde{p}_{i,t})}{\hat{V}^{\Tilde{z}}}.
\end{align*}
Because the IV estimator $\hat{\beta}_{IV}^{X}$ is the ration between the first stage coefficient $\hat{\pi}^{X}$ and the reduced form coefficient $\hat{\alpha}^{X}$, we obtain the following expression for $\hat{\beta}_{IV}^{X}$:
\begin{align}
\label{TWFEIVcovariates_IV1}
\hat{\beta}_{IV}^{X}=\frac{\hat{C}(Y_{i,t},\Tilde{z}_{i,t})}{\hat{C}(D_{i,t},\Tilde{z}_{i,t})}=\frac{\hat{C}(Y_{i,t},\Tilde{Z}_{i,t}-\Tilde{p}_{i,t})}{\hat{C}(D_{i,t},\Tilde{Z}_{i,t}-\Tilde{p}_{i,t})}.
\end{align}
In contrast to the unconditional TWFEIV estimator $\hat{\beta}_{IV}$, the covariate-adjusted TWFEIV estimator exploits the variation in both $\Tilde{Z}_{i,t}$ and $\Tilde{p}_{i,t}$. $\Tilde{Z}_{i,t}$ varies at cohort and time level, but $\Tilde{p}_{i,t}$ varies at unit and time level because $X_{i,t}$ varies at unit and time level.\par
To decompose the covariate-adjusted TWFEIV estimator $\hat{\beta}_{IV}^{X}$, we first partition $\Tilde{z}_{i,t}$ into "within" and "between" terms as in \cite{Goodman-Bacon2021-ej}. Let $\bar{z}_{k,t}-\bar{z}_{k}=(\bar{Z}_{k,t}-\bar{Z}_{k})-(\hat{\Gamma}\bar{X}_{k,t}-\hat{\Gamma}\bar{X}_{k})$ be the average of $z_{i,t}-\bar{z}_{i}$ in cohort $k$. By adding and subtracting $\bar{z}_{k,t}-\bar{z}_{k}$, we can decompose $\Tilde{z}_{i,t}$ into two terms:
\begin{align}
\label{TWFEIVwithin1}
\Tilde{z}_{i,t}=\underbrace{[(z_{i,t}-\bar{z}_{i})-(\bar{z}_{k,t}-\bar{z}_{k})]}_{\Tilde{z}_{i(k),t}}+\underbrace{[(\bar{z}_{k,t}-\bar{z}_{k})-(\bar{z}_{t}-\bar{\bar{z}})]}_{\Tilde{z}_{k,t}}.
\end{align}
The first term $\Tilde{z}_{i(k),t}$ measures the deviation of $z_{i,t}-\bar{z}_{i}$ from the average $\bar{z}_{k,t}-\bar{z}_{k}$ in cohort $k$, which we call the within term of $\Tilde{z}_{i,t}$. The second term $\Tilde{z}_{k,t}$ measures the deviation of $\bar{z}_{k,t}-\bar{z}_{k}$ from the average $\bar{z}_{t}-\bar{\bar{z}}$ in whole sample, which we call the between term of $\Tilde{z}_{i,t}$. The within term $\Tilde{z}_{i(k),t}$ varies at unit and time level because of $\Tilde{p}_{i,t}$, whereas the between term $\Tilde{z}_{k,t}$ varies at cohort and time level.\par
By substituting \eqref{TWFEIVwithin1} into \eqref{TWFEIVcovariates_IV1}, we obtain
\begin{align}
\label{TWFEIVcovariates_IV2}
\hat{\beta}_{IV}^{X}=\frac{\hat{C}(Y_{i,t},\Tilde{z}_{i(k),t})+\hat{C}(Y_{i,t},\Tilde{z}_{k,t})}{\hat{C}(D_{i,t},\Tilde{z}_{i(k),t})+\hat{C}(D_{i,t},\Tilde{z}_{k,t})}&=\frac{\hat{V}_{w}^{z} \hat{\beta}_{w}^{p,y}+\hat{V}_{b}^{z} \hat{\beta}_{b}^{z,y}}{\hat{V}_{w}^{z} \hat{\beta}_{w}^{p,d}+\hat{V}_{b}^{z} \hat{\beta}_{b}^{z,d}}\\
&=\underbrace{\frac{\hat{C}_{w}^{D,\Tilde{z}}}{\hat{C}_{w}^{D,\Tilde{z}}+\hat{C}_{b}^{D,\Tilde{z}}}}_{\Omega}\cdot \underbrace{\frac{\hat{\beta}_{w}^{p,y}}{ \hat{\beta}_{w}^{p,d}}}_{\hat{\beta}_{w,IV}^{p}}+\underbrace{\frac{\hat{C}_{b}^{D,\Tilde{z}}}{\hat{C}_{w}^{D,\Tilde{z}}+\hat{C}_{b}^{D,\Tilde{z}}}}_{1-\Omega}\cdot \underbrace{\frac{\hat{\beta}_{b}^{z,y}}{\hat{\beta}_{b}^{z,d}}}_{\hat{\beta}_{b,IV}^{z}}.
\end{align}
We use the subscript $w$ to denote within components and the subscript $b$ to denote between components. $\hat{V}_{w}^{z}$ and $\hat{V}_{b}^{z}$ are the variances of $\Tilde{z}_{i(k),t}$ and $\Tilde{z}_{k,t}$, respectively. $\hat{C}_{w}^{D,\Tilde{z}}$ is the covariance between $D_{i,t}$ and $\Tilde{z}_{i(k),t}$, the within term of $\Tilde{z}_{i,t}$. $\hat{C}_{b}^{D,\Tilde{z}}$ is the covariance between $D_{i,t}$ and $\Tilde{z}_{k,t}$, the between term of $\Tilde{z}_{i,t}$. The weight $\Omega=\frac{\hat{C}_{w}^{D,\Tilde{z}}}{\hat{C}_{w}^{D,\Tilde{z}}+\hat{C}_{b}^{D,\Tilde{z}}}$ measures the relative amount of the within covariance $\hat{C}_{w}^{D,\Tilde{z}}$.\par 
$\hat{\beta}_{w}^{p,y} \equiv \frac{\hat{C}(Y_{i,t},\Tilde{z}_{i(k),t})}{\hat{V}_{w}^{z}}$ measures the relationship between $Y_{i,t}$ and $\Tilde{z}_{i(k),t}$. Similarly, $\hat{\beta}_{w}^{p,d} \equiv \frac{\hat{C}(D_{i,t},\Tilde{z}_{i(k),t})}{\hat{V}_{w}^{z}}$ measures the relationship between $D_{i,t}$ and $\Tilde{z}_{i(k),t}$. We call these the within coefficients in the first stage and reduced form regressions. $\hat{\beta}_{w,IV}^{p} \equiv \frac{\hat{\beta}_{w}^{p,y}}{\hat{\beta}_{w}^{p,d}}$ scales the within coefficient in the reduced form regression by the one in the first stage regression. We call this the within IV coefficient\footnote{One can obtain this coefficient by running an IV regression of the outcome on the treatment with $\tilde{z}_{i(k),t}$ as the excluded instrument.}. This IV coefficient arises because $\Tilde{z}_{i(k),t}$ varies at unit and time level. Similar to what \cite{Goodman-Bacon2021-ej} points out for the covariate-adjusted TWFE estimator, time-varying covariates bring a new source of identifying variation in the TWFEIV estimator, within variation of $X_{i,t}$ in each cohort.
\par
$\hat{\beta}_{b}^{z,y} \equiv \frac{\hat{C}(Y_{i,t},\Tilde{z}_{k,t})}{\hat{V}_{b}^{z}}$ measures the relationship between $Y_{i,t}$ and $\Tilde{z}_{k,t}$. Similarly $\hat{\beta}_{b}^{z,d} \equiv \frac{\hat{C}(D_{i,t},\Tilde{z}_{k,t})}{\hat{V}_{b}^{z}}$ measures the relationship between $D_{i,t}$ and $\Tilde{z}_{k,t}$. We call these the between coefficients in the first stage and reduced form regressions. $\hat{\beta}_{b,IV}^{z} \equiv \frac{\hat{\beta}_{b}^{z,y}}{\hat{\beta}_{b}^{z,d}}$ divides the between coefficient in the reduced form regression by the one in the first stage regression, and have the following specific expression:
\begin{align}
\label{TWFEIV_between1}
\hat{\beta}_{b,IV}^{z}=\frac{\hat{C}^{D,Z}\hat{\beta}_{IV}-\hat{C}^{p}_{b}\hat{\beta}_{b,IV}^{p}}{\hat{C}_{b}^{D,\Tilde{z}}}.
\end{align}
$\hat{C}^{D,Z}$ and $\hat{\beta}_{IV}$ are already defined in section \ref{sec3}. $\hat{C}^{p}_{b}$ is the covariance between $D_{i,t}$ and $\Tilde{p}_{k,t}$ (the between term of $\Tilde{p}_{i,t}$). $\hat{\beta}_{b,IV}^{p}$ is the estimator, obtained from an IV regression of $Y_{i,t}$ on $D_{i,t}$ with $\Tilde{p}_{k,t}$ as the excluded instrument. We call $\hat{\beta}_{b,IV}^{z}$ the between IV coefficient, which exploits the cohort and time level variation in $\Tilde{z}_{k,t}$. This IV coefficient is not equal to the unconditional TWFEIV coefficient $\hat{\beta}_{IV}$: $\hat{\beta}_{b,IV}^{z}$ subtracts the influence of  $\hat{\beta}_{b,IV}^{p}$ from the unconditional IV estimator $\hat{\beta}_{IV}$. This indicates that time-varying covariates $X_{i,t}$ changes the identifying variation at cohort and time level through $\Tilde{p}_{k,t}$, the between term of the linear projection $\Tilde{p}_{i,t}$.\par 
We can further decompose the between IV coefficient as follows: 
\begin{align}
\label{TWFEIV_between2}
\hat{\beta}_{b,IV}^{z}=\sum_{k}\sum_{l > k}\underbrace{(n_k+n_l)^2\frac{\hat{C}_{b,kl}^{D,\Tilde{z}}}{\hat{C}_{b}^{D,\Tilde{z}}}}_{s_{b,kl}}\underbrace{\left[\frac{\hat{C}^{D,Z}_{kl}\hat{\beta}_{IV,kl}^{2 \times 2}-\hat{C}^{p}_{b,kl}\hat{\beta}_{b,IV,kl}^{p}}{\hat{C}_{b,kl}^{D,\Tilde{z}}}\right]}_{\hat{\beta}_{b,IV,kl}^{z}}.
\end{align}
The proof is given in Appendix \ref{ApeD}. Each notation is similarly defined in $(k,l)$ cell subsamples. $\hat{\beta}_{b,IV,kl}^{z}$ and $s_{b,kl}$ are the between IV coefficient and the corresponding weight in $(k,l)$ cell subsamples. Equation \eqref{TWFEIV_between2} indicates that time-varying covariates $X_{i,t}$ affect the between IV coefficient $\hat{\beta}_{b,IV}^{z}$ by changing both the $2 \times 2$ between IV coefficient and the associated weight in each $(k,l)$ cell. \par
To sum up, combining \eqref{TWFEIV_between2} with \eqref{TWFEIVcovariates_IV2}, we can decompose the covariate-adjusted TWFEIV estimator $\hat{\beta}_{IV}^{X}$ as 
\begin{align*}
\hat{\beta}_{IV}^{X}=\Omega \hat{\beta}_{w,IV}^{p} + (1-\Omega)\underbrace{\sum_{k}\sum_{l > k}s_{b,kl}\hat{\beta}_{b,IV,kl}^{z}}_{\hat{\beta}_{b,IV}^{z}}.
\end{align*}
The weight $\Omega$ is assigned to the within IV coefficient $\hat{\beta}_{w,IV}^{p}$ and the weight $1-\Omega$ is assigned to the between IV coefficient $\hat{\beta}_{b,IV}^{z}$, which is equal to a weighted average of all possible $2 \times 2$ between IV coefficients $\hat{\beta}_{b,IV,kl}^{z}$ as in Theorem \ref{sec3thm1}.\par
Table \ref{table5} presents the result of our TWFEIV regression with time-varying covariates in \cite{Miller2019-ok}. We follow \cite{Miller2019-ok} and include the lagged local area controls, the county's non-IPH rate, and the state-level crack cocaine index; see \cite{Miller2019-ok} for details. The estimate changes from $-0.646$ to $-0.868$. The decomposition result shows that the contribution of the within term is positive but negligible, whereas the contribution of the between term is negative and substantial. Specifically, in the between term, the contribution of the changes in $2 \times 2$ Wald-DIDs and weights are positive, but these are offset by the negative contribution of the interaction. This result indicates that in \cite{Miller2019-ok}, the time-varying covariates affect the IV estimate mainly through the identifying variation in cohort and time level, that is, the between term of the linear projection $\Tilde{p}_{i,t}$.\par
\begin{figure}[H]
\centering
\begin{adjustbox}{width=\columnwidth, height=0.6\columnwidth, keepaspectratio}
    \includegraphics{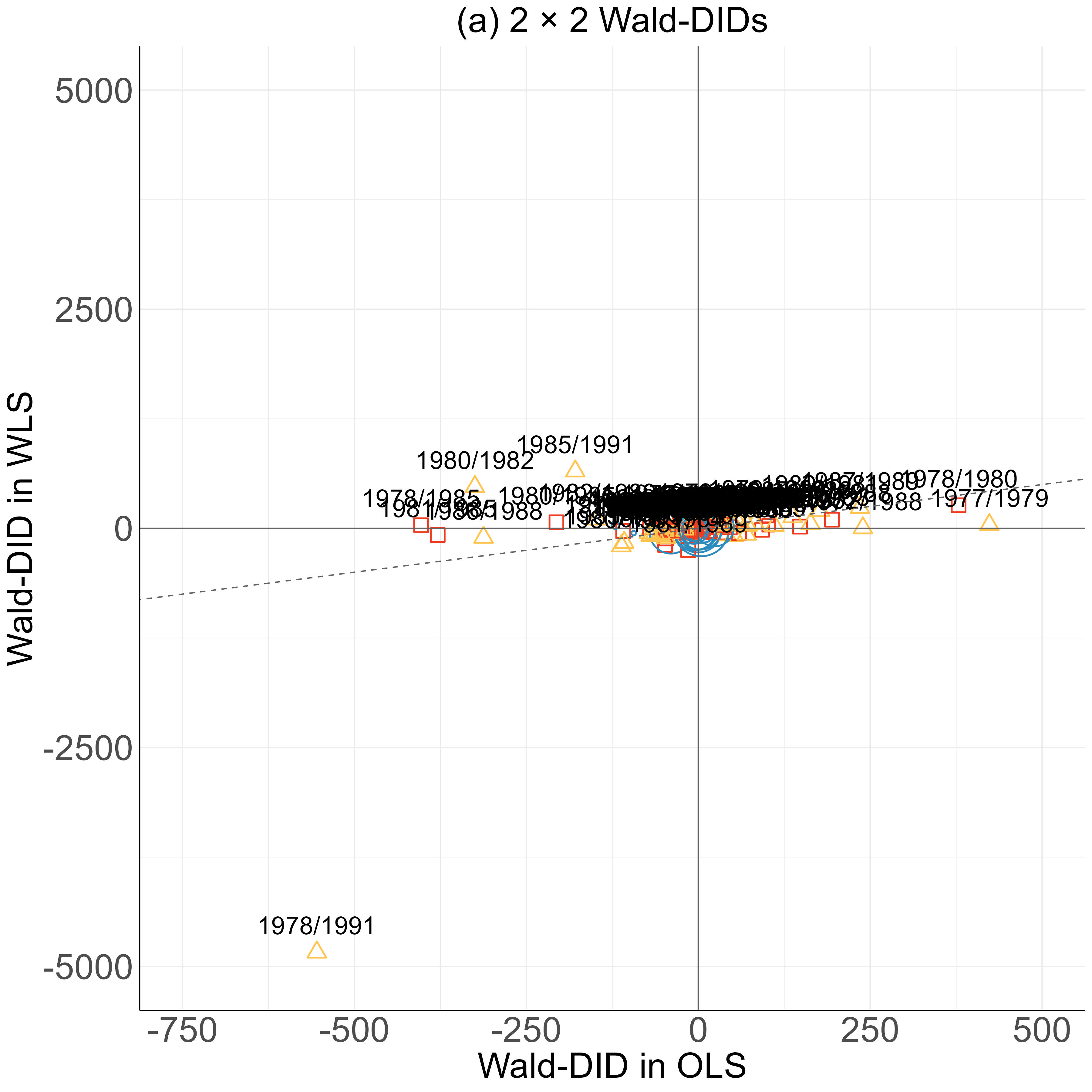}
    \end{adjustbox}
\begin{adjustbox}{width=\columnwidth, height=0.6\columnwidth, keepaspectratio}
    \includegraphics{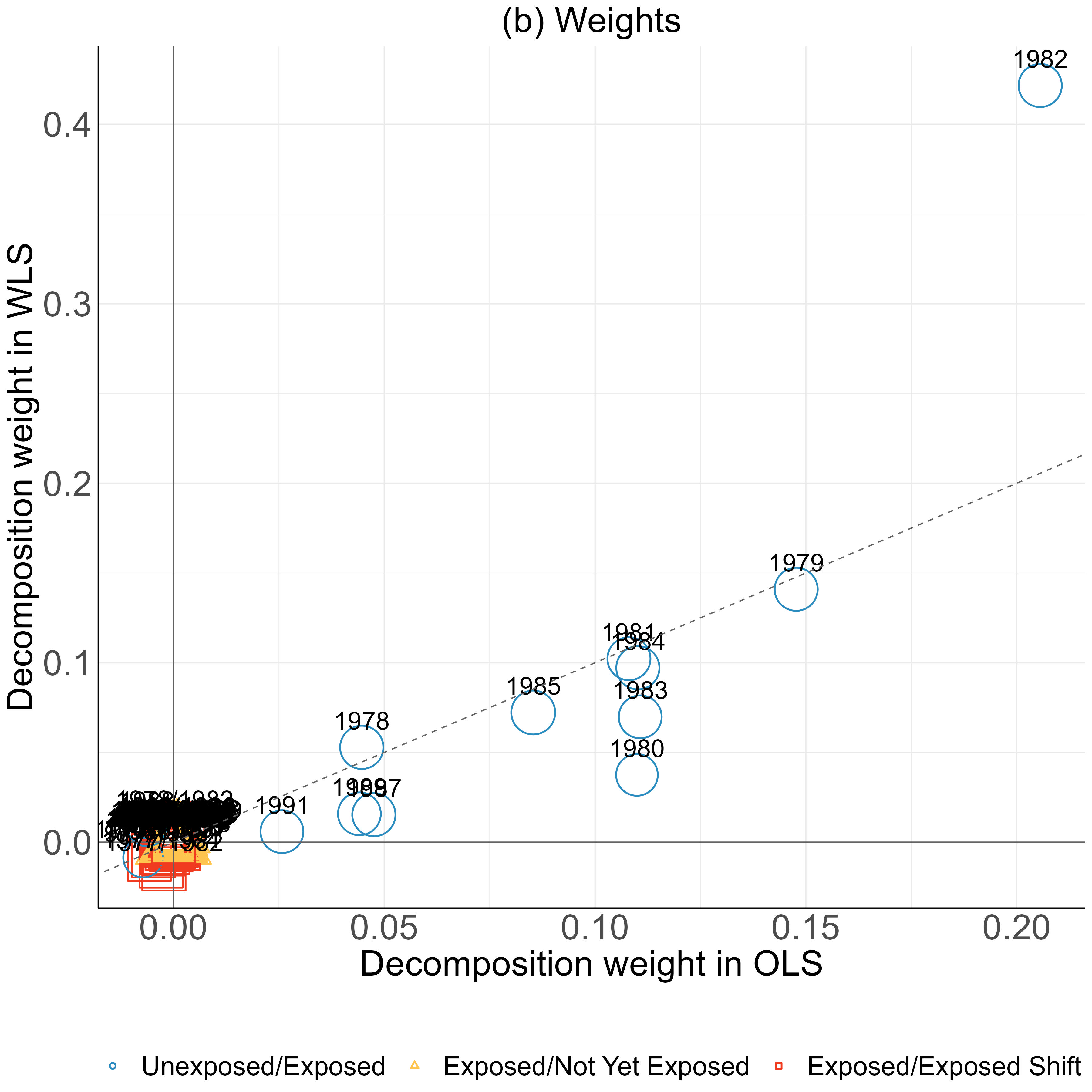}
    \end{adjustbox}
  \caption{Comparisons of $2 \times 2$ WaldDIDs and weights between OLS and WLS in the setting of \cite{Miller2019-ok}. \textit{Notes}: Panel (a) plots the $2 \times 2$ Wald-DIDs in WLS against those in OLS for all DID-IV designs. Panel (b) plots the decomposition weights in WLS against those in OLS for all DID-IV designs. In Panel (a), the size of each point is proportional to the corresponding weight in OLS. In Panel (b), the size of each point is proportional to the corresponding Wald-DID estimate in OLS. In both panels, the dotted lines represent 45-degree lines. In both panels, Unexposed/Exposed designs yield blue circles, Exposed/Not Yet Exposed designs yield yellow triangles, and Exposed/Exposed Shift designs yield red squares. In both panels, the dotted lines represent 45-degree lines.}
\label{Figure4}
  \end{figure}

\section{Conclusion}\label{sec8}
Many studies run two-way fixed effects instrumental variable (TWFEIV) regressions, leveraging variation occurring from the different timing of policy adoption across units as an instrument for the treatment. In this paper, we study the causal interpretation of the TWFEIV estimator in staggered DID-IV designs. We first show that in settings with the staggered adoption of the instrument across units, the TWFEIV estimator is equal to a weighted average of all possible $2 \times 2$ Wald-DID estimators arising from the three types of the DID-IV design: Unexposed/Exposed, Exposed/Not Yet Exposed, and Exposed/Exposed Shift designs. The weight assigned to each Wald-DID estimator is a function of the sample share, the variance of the instrument, and the DID estimator of the treatment in each DID-IV design.\par 
Based on the decomposition result, we then show that in staggered DID-IV designs, the TWFEIV estimand is equal to a weighted average of all possible cohort specific local average treatment effect on the treated parameters, but some weights can be negative. The negative weight problem arises due to the bad comparisons in the first and reduced form regressions: we use the already exposed units as controls. The TWFEIV estimand attains its causal interpretation if the effects of the instrument on the treatment and outcome are stable over time. The resulting causal parameter is a positively weighted average cohort specific local average treatment effect on the treated parameter.\par
Finally, we illustrate our findings with the setting of \cite{Miller2019-ok} who estimate the effect of female officers' share on the IPH rate, exploiting the timing variation of AA introduction across U.S. counties. We first assess the underlying staggered DID-IV identification strategy implicitly imposed by \cite{Miller2019-ok} and confirm its validity. We then apply our DID-IV decomposition theorem to the TWFEIV estimate, and find that the estimate suffers from the substantial downward bias arising from the bad comparisons in Exposed/Exposed shift DID-IV designs. We also decompose the difference between the two specifications and illustrate how different specifications affect the overall estimates in \cite{Miller2019-ok}.\par
Overall, this paper shows the negative result of using TWFEIV estimators in the presence of heterogeneous treatment effects in staggered DID-IV designs in more than two periods. This paper provides simple tools to evaluate how serious that concern is in a given application. Specifically, we demonstrate that the TWFEIV estimator is not robust to the time-varying exposed effects in the first stage and reduced form regressions. Our DID-IV decomposition theorem allows the empirical researchers to assess the impact of the bias term arising from the bad comparisons on their TWFEIV estimate. Recently, \cite{Miyaji2023} developed an alternative estimation method that is robust to treatment effects heterogeneity and proposes a weighting scheme to construct various summary measures in staggered DID-IV designs. Further developing alternative approaches and diagnostic tools will be a promising area for future work, facilitating the credibility of DID-IV design in practice. 
\newpage
\bibliography{reference} 

@ARTICLE{De_Chaisemartin2018-xe,
  title     = "Fuzzy {Differences-in-Differences}",
  author    = "de Chaisemartin, C and D'Haultf{\oe}uille, X",
  abstract  = "[Difference-in-differences (DID) is a method to evaluate the
               effect of a treatment. In its basic version, a ``control group''
               is untreated at two dates, whereas a ``treatment group'' becomes
               fully treated at the second date. However, in many applications
               of the DID method, the treatment rate only increases more in the
               treatment group. In such fuzzy designs, a popular estimator of
               the treatment effect is the DID of the outcome divided by the
               DID of the treatment. We show that this ratio identifies a local
               average treatment effect only if the effect of the treatment is
               stable over time, and if the effect of the treatment is the same
               in the treatment and in the control group. We then propose two
               alternative estimands that do not rely on any assumption on
               treatment effects, and that can be used when the treatment rate
               does not change over time in the control group. We prove that
               the corresponding estimators are asymptotically normal. Finally,
               we use our results to reassess the returns to schooling in
               Indonesia.]",
  journal   = "Rev. Econ. Stud.",
  publisher = "[Oxford University Press, The Review of Economic Studies, Ltd.]",
  volume    =  85,
  number    = "2 (303)",
  pages     = "999--1028",
  year      =  2018
}

@ARTICLE{Lundborg2017-mz,
  title     = "Can Women Have Children and a Career? {IV} Evidence from {IVF}
               Treatments",
  author    = "Lundborg, Petter and Plug, Erik and Rasmussen, Astrid Wurtz",
  abstract  = "This paper introduces a new IV strategy based on IVF (in vitro
               fertilization) induced fertility variation among childless women
               to estimate the causal effect of having children on their
               career. For this purpose, we use administrative data on IVF
               treated women in Denmark. Because observed chances of IVF
               success do not depend on labor market histories, IVF treatment
               success provides a plausible instrument for childbearing. Our IV
               estimates indicate that fertility effects on earnings are: (i)
               negative, large, and long-lasting; (ii) driven by fertility
               effects on hourly earnings and not so much on labor supply; and
               (iii) much stronger at the extensive margin than at the
               intensive margin.",
  journal   = "Am. Econ. Rev.",
  publisher = "Amer Economic Assoc",
  volume    =  107,
  number    =  6,
  pages     = "1611--1637",
  month     =  jun,
  year      =  2017,
  language  = "en"
}

@ARTICLE{Field2007-yc,
  title     = "Entitled to Work: Urban Property Rights and Labor Supply in Peru",
  author    = "Field, Erica",
  abstract  = "Abstract. Between 1996 and 2003, the Peruvian government issued
               property titles to over 1.2 million urban households, the
               largest titling program targeted at ur",
  journal   = "Q. J. Econ.",
  publisher = "Oxford Academic",
  volume    =  122,
  number    =  4,
  pages     = "1561--1602",
  month     =  nov,
  year      =  2007
}

@ARTICLE{Akerman2015-hh,
  title     = "The Skill Complementarity of Broadband Internet",
  author    = "Akerman, Anders and Gaarder, Ingvil and Mogstad, Magne",
  abstract  = "Abstract. Does adoption of broadband internet in firms enhance
               labor productivity and increase wages? Is this technological
               change skill biased or factor neutra",
  journal   = "Q. J. Econ.",
  publisher = "Oxford Academic",
  volume    =  130,
  number    =  4,
  pages     = "1781--1824",
  month     =  jul,
  year      =  2015,
  language  = "en"
}

@ARTICLE{Borusyak2021-jv,
  title         = "Revisiting Event Study Designs: Robust and Efficient
                   Estimation",
  author        = "Borusyak, Kirill and Jaravel, Xavier and Spiess, Jann",
  abstract      = "We develop a framework for difference-in-differences designs
                   with staggered treatment adoption and heterogeneous causal
                   effects. We show that conventional regression-based
                   estimators fail to provide unbiased estimates of relevant
                   estimands absent strong restrictions on treatment-effect
                   homogeneity. We then derive the efficient estimator
                   addressing this challenge, which takes an intuitive
                   ``imputation'' form when treatment-effect heterogeneity is
                   unrestricted. We characterize the asymptotic behavior of the
                   estimator, propose tools for inference, and develop tests
                   for identifying assumptions. Our method applies with
                   time-varying controls, in triple-difference designs, and
                   with certain non-binary treatments. We show the practical
                   relevance of our results in a simulation study and an
                   application. Studying the consumption response to tax
                   rebates in the United States, we find that the notional
                   marginal propensity to consume is between 8 and 11 percent
                   in the first quarter - about half as large as benchmark
                   estimates used to calibrate macroeconomic models - and
                   predominantly occurs in the first month after the rebate.",
  month         =  aug,
  year          =  2021,
  archivePrefix = "arXiv",
  primaryClass  = "econ.EM",
  eprint        = "2108.12419"
}

@ARTICLE{Bhuller2013-ki,
  title     = "Broadband Internet: An Information Superhighway to Sex Crime?",
  author    = "Bhuller, Manudeep and Havnes, Tarjei and Leuven, Edwin and
               Mogstad, Magne",
  abstract  = "Abstract. Does internet use trigger sex crime? We use unique
               Norwegian data on crime and internet adoption to shed light on
               this question. A public program with",
  journal   = "Rev. Econ. Stud.",
  publisher = "Oxford Academic",
  volume    =  80,
  number    =  4,
  pages     = "1237--1266",
  month     =  apr,
  year      =  2013,
  language  = "en"
}

@ARTICLE{Miyaji2023,
  title        = "Instrumented Difference-in-Differences with heterogeneous treatment effects",
  author       = "Miyaji,Sho",
  year         =  2024,
  journal   = "Unpublished Manuscript", 
}

@ARTICLE{chasemartin2010-ch,
  title     = "A note on instrumented difference in differences",
  author    = "de Chaisemartin",
  abstract  = "In this paper, we exploit the Swedish compulsory schooling
               reform in order to estimate the causal effect of parental
               education on son's outcomes. We use data from the Swedish
               enlistment register on the entire population of males and focus
               on outcomes such as cognitive skills, non-cognitive skills, and
               various dimensions of health at the age of 18. We find
               significant and positive effects of maternal education on sons'
               skills and health status. Although the reform had equally strong
               effects on father's education as on mother's education, we find
               little evidence that paternal education improves son's outcomes.",
  journal   = "Unpublished Manuscript",
  publisher = "",
  year      =  2010
}

@ARTICLE{Oaxaca1973-zy,
  title     = "{Male-Female} Wage Differentials in Urban Labor Markets",
  author    = "Oaxaca, Ronald",
  journal   = "Int. Econ. Rev.",
  publisher = "[Economics Department of the University of Pennsylvania, Wiley,
               Institute of Social and Economic Research, Osaka University]",
  volume    =  14,
  number    =  3,
  pages     = "693--709",
  month     =  oct,
  year      =  1973
}

@ARTICLE{Kitagawa1955-gz,
  title     = "Components of a Difference Between Two Rates",
  author    = "Kitagawa, Evelyn M",
  journal   = "J. Am. Stat. Assoc.",
  publisher = "[American Statistical Association, Taylor \& Francis, Ltd.]",
  volume    =  50,
  number    =  272,
  pages     = "1168--1194",
  month     =  dec,
  year      =  1955
}

@ARTICLE{Blinder1973-os,
  title     = "Wage Discrimination: Reduced Form and Structural Estimates",
  author    = "Blinder, Alan S",
  abstract  = "[Regressions explaining the wage rates of white males, black
               males, and white females are used to analyze the white-black
               wage differential among men and the male-female wage
               differential among whites. A distinction is drawn between
               reduced form and structural wage equations, and both are
               estimated. They are shown to have very different implications
               for analyzing the white-black and male-female wage
               differentials. When the two sets of estimates are synthesized,
               they jointly imply that 70 percent of the overall race
               differential and 100 percent of the overall sex differential are
               ultimately attributable to discrimination of various sorts.]",
  journal   = "J. Hum. Resour.",
  publisher = "[University of Wisconsin Press, Board of Regents of the
               University of Wisconsin System]",
  volume    =  8,
  number    =  4,
  pages     = "436--455",
  month     =  oct,
  year      =  1973
}

@MISC{Hudson2017-tm,
  title        = "Interpreting Instrumented Difference-in-Differences",
  author       = "Hudson, Sally and Hull, Peter and Liebersohn, Jack",
  year         =  2017,
  howpublished = "Available at \url{http://www.mit.edu/~liebers/DDIV.pdf}"
}

@ARTICLE{Sun2021-rp,
  title     = "Estimating dynamic treatment effects in event studies with
               heterogeneous treatment effects",
  author    = "Sun, Liyang and Abraham, Sarah",
  abstract  = "To estimate the dynamic effects of an absorbing treatment,
               researchers often use two-way fixed effects regressions that
               include leads and lags of the treatment. We show that in
               settings with variation in treatment timing across units, the
               coefficient on a given lead or lag can be contaminated by
               effects from other periods, and apparent pretrends can arise
               solely from treatment effects heterogeneity. We propose an
               alternative estimator that is free of contamination, and
               illustrate the relative shortcomings of two-way fixed effects
               regressions with leads and lags through an empirical
               application.",
  journal   = "J. Econom.",
  publisher = "Elsevier Science Publishers",
  volume    =  225,
  number    =  2,
  pages     = "175--199",
  month     =  dec,
  year      =  2021,
  keywords  = "Difference-in-differences; Two-way fixed effects; Pretrend test"
}

@ARTICLE{Imbens1994-qy,
  title     = "Identification and Estimation of Local Average Treatment Effects",
  author    = "Imbens, Guido W and Angrist, Joshua D",
  journal   = "Econometrica",
  publisher = "[Wiley, Econometric Society]",
  volume    =  62,
  number    =  2,
  pages     = "467--475",
  year      =  1994
}

@ARTICLE{Goodman-Bacon2021-ej,
  title    = "Difference-in-differences with variation in treatment timing",
  author   = "Goodman-Bacon, Andrew",
  abstract = "The canonical difference-in-differences (DD) estimator contains
              two time periods, ''pre'' and ''post'', and two groups,
              ''treatment'' and ''control''. Most DD applications, however,
              exploit variation across groups of units that receive treatment
              at different times. This paper shows that the two-way fixed
              effects estimator equals a weighted average of all possible
              two-group/two-period DD estimators in the data. A causal
              interpretation of two-way fixed effects DD estimates requires
              both a parallel trends assumption and treatment effects that are
              constant over time. I show how to decompose the difference
              between two specifications, and provide a new analysis of models
              that include time-varying controls.",
  journal  = "J. Econom.",
  volume   =  225,
  number   =  2,
  pages    = "254--277",
  month    =  dec,
  year     =  2021,
  keywords = "Difference-in-differences; Variation in treatment timing; Two-way
              fixed effects; Treatment effect heterogeneity"
}

@ARTICLE{De_Chaisemartin2020-dw,
  title     = "{Two-Way} Fixed Effects Estimators with Heterogeneous Treatment
               Effects",
  author    = "de Chaisemartin, Cl{\'e}ment and D'Haultf{\oe}uille, Xavier",
  abstract  = "Two-Way Fixed Effects Estimators with Heterogeneous Treatment
               Effects by Cl{\'e}ment de Chaisemartin and Xavier
               D'Haultf{\oe}uille. Published in volume 110, issue 9, pages
               2964-96 of American Economic Review, September 2020, Abstract:
               Linear regressions with period and group fixed effects are
               widely used...",
  journal   = "Am. Econ. Rev.",
  publisher = "American Economic Association",
  volume    =  110,
  number    =  9,
  pages     = "2964--2996",
  month     =  sep,
  year      =  2020,
  language  = "en"
}

@UNPUBLISHED{Blandhol2022-lk,
  title    = "When is {TSLS} Actually {LATE}?",
  author   = "Blandhol, Christine and Bonney, John and Mogstad, Magne and
              Torgovitsky, Alexander",
  abstract = "Linear instrumental variable estimators, such as two-stage least
              squares (TSLS), are commonly interpreted as estimating positively
              weighted averages of causal effects, referred to as local average
              treatment effects (LATEs). We examine whether the LATE
              interpretation actually applies to the types of TSLS
              specifications that are used in practice. We show that if the
              specification includes covariates -- which most empirical work
              does -- then the LATE interpretation does not apply in general.
              Instead, the TSLS estimator will in general reflect treatment
              effects for both compliers and always/nevertakers, and some of
              the treatment effects for the always/never-takers will
              necessarily be negatively weighted. We show that the only
              specifications that have a LATE interpretation are ``saturated''
              specifications that control for covariates nonparametrically,
              implying that such specifications are both sufficient and
              necessary for TSLS to have a LATE interpretation, at least
              without additional parametric assumptions. This result is
              concerning because, as we document, empirical researchers almost
              never control for covariates nonparametrically, and rarely
              discuss or justify parametric specifications of covariates. We
              develop a decomposition that quantifies the extent to which the
              usual LATE interpretation fails. We apply the decomposition to
              four empirical analyses and find strong evidence that the LATE
              interpretation of TSLS is far from accurate for the types of
              specifications actually used in practice.",
  journal  = "SSRN Electron. J.",
  month    =  feb,
  year     =  2022,
  language = "en"
}

@ARTICLE{Callaway2021-wl,
  title     = "{Difference-in-Differences} with multiple time periods",
  author    = "Callaway, Brantly and Sant'Anna, Pedro H C",
  abstract  = "In this article, we consider identification, estimation, and
               inference procedures for treatment effect parameters using
               Difference-in-Differences (DiD) with (i) multiple time periods,
               (ii) variation in treatment timing, and (iii) when the
               ``parallel trends assumption'' holds potentially only after
               conditioning on observed covariates. We show that a family of
               causal effect parameters are identified in staggered DiD setups,
               even if differences in observed characteristics create
               non-parallel outcome dynamics between groups. Our identification
               results allow one to use outcome regression, inverse probability
               weighting, or doubly-robust estimands. We also propose different
               aggregation schemes that can be used to highlight treatment
               effect heterogeneity across different dimensions as well as to
               summarize the overall effect of participating in the treatment.
               We establish the asymptotic properties of the proposed
               estimators and prove the validity of a computationally
               convenient bootstrap procedure to conduct asymptotically valid
               simultaneous (instead of pointwise) inference. Finally, we
               illustrate the relevance of our proposed tools by analyzing the
               effect of the minimum wage on teen employment from 2001--2007.
               Open-source software is available for implementing the proposed
               methods.",
  journal   = "J. Econom.",
  publisher = "Elsevier Science Publishers",
  volume    =  225,
  number    =  2,
  pages     = "200--230",
  month     =  dec,
  year      =  2021,
  keywords  = "Difference-in-Differences; Dynamic treatment effects; Doubly
               robust; Event study; Variation in treatment timing; Treatment
               effect heterogeneity; Semi-parametric"
}

@ARTICLE{Angrist1995-ij,
  title     = "{Two-Stage} Least Squares Estimation of Average Causal Effects
               in Models with Variable Treatment Intensity",
  author    = "Angrist, Joshua D and Imbens, Guido W",
  abstract  = "Two-stage least squares (TSLS) is widely used in econometrics to
               estimate parameters in systems of linear simultaneous equations
               and to solve problems of omitted-variables bias in
               single-equation estimation. We show here that TSLS can also be
               used to estimate the average causal effect of variable
               treatments such as drug dosage, hours of exam preparation,
               cigarette smoking, and years of schooling. The average causal
               effect in which we are interested is a conditional expectation
               of the difference between the outcomes of the treated and what
               these outcomes would have been in the absence of treatment.
               Given mild regularity assumptions, the probability limit of TSLS
               is a weighted average of per-unit average causal effects along
               the length of an appropriately defined causal response function.
               The weighting function is illustrated in an empirical example
               based on the relationship between schooling and earnings.",
  journal   = "J. Am. Stat. Assoc.",
  publisher = "Taylor \& Francis",
  volume    =  90,
  number    =  430,
  pages     = "431--442",
  month     =  jun,
  year      =  1995
}

@ARTICLE{Athey2022-uo,
  title     = "Design-based analysis in {Difference-In-Differences} settings
               with staggered adoption",
  author    = "Athey, Susan and Imbens, Guido W",
  abstract  = "In this paper we study estimation of and inference for average
               treatment effects in a setting with panel data. We focus on the
               staggered adoption setting where units, e.g, individuals, firms,
               or states, adopt the policy or treatment of interest at a
               particular point in time, and then remain exposed to this
               treatment at all times afterwards. We take a design perspective
               where we investigate the properties of estimators and procedures
               given assumptions on the assignment process. We show that under
               random assignment of the adoption date the standard
               Difference-In-Differences (DID) estimator is an unbiased
               estimator of a particular weighted average causal effect. We
               characterize the exact finite sample properties of this
               estimand, and show that the standard variance estimator is
               conservative.",
  journal   = "J. Econom.",
  publisher = "Elsevier Sequoia S.A",
  volume    =  226,
  number    =  1,
  pages     = "62--79",
  month     =  jan,
  year      =  2022,
  keywords  = "Staggered adoption design; Difference-In-Differences; Fixed
               effects; Randomization distribution"
}

@ARTICLE{Sloczynski2020-uk,
  title         = "When Should We (Not) Interpret Linear {IV} Estimands as
                   {LATE}?",
  author        = "S{\l}oczy{\'n}ski, Tymon",
  abstract      = "In this paper I revisit the interpretation of the linear
                   instrumental variables (IV) estimand as a weighted average
                   of conditional local average treatment effects (LATEs). I
                   focus on a practically relevant situation in which
                   additional covariates are required for identification while
                   the reduced-form and first-stage regressions implicitly
                   restrict the effects of the instrument to be homogeneous,
                   and are thus possibly misspecified. I show that the weights
                   on some conditional LATEs are negative and the IV estimand
                   is no longer interpretable as a causal effect under a weaker
                   version of monotonicity, i.e. when there are compliers but
                   no defiers at some covariate values and defiers but no
                   compliers elsewhere. The problem of negative weights
                   disappears in the overidentified specification of Angrist
                   and Imbens (1995) and in an alternative method, termed
                   ``reordered IV,'' that I also outline. Even if all weights
                   are positive, the IV estimand in the just identified
                   specification is not interpretable as the unconditional LATE
                   parameter unless the groups with different values of the
                   instrument are roughly equal sized. I illustrate my findings
                   in an application to causal effects of college education
                   using the college proximity instrument. The benchmark
                   estimates suggest that college attendance yields earnings
                   gains of about 60 log points, which is well outside the
                   range of estimates in the recent literature. I demonstrate,
                   however, that this result is driven by the presence of
                   negative weights. Corrected estimates indicate that
                   attending college causes earnings to be roughly 20\% higher.",
  month         =  nov,
  year          =  2020,
  archivePrefix = "arXiv",
  primaryClass  = "econ.EM",
  eprint        = "2011.06695"
}

@ARTICLE{Meghir2018-bk,
  title     = "Education and mortality: Evidence from a social experiment",
  author    = "Meghir, Costas and Palme, M{\aa}rten and Simeonova, Emilia",
  abstract  = "We examine the effects of a major Swedish educational reform,
               that increased the years of compulsory schooling, on mortality
               and health. Using the gradual phase-in of the reform between
               1949 and 1962 across municipalities, we estimate insignificant
               effects of the reform on mortality in the affected cohorts. From
               the confidence intervals we can rule out effects larger than
               1-1.4 months of increased life expectancy. We find no
               significant impacts on mortality for individuals of low SES
               backgrounds, on deaths that are more likely to be affected by
               behavior, on hospitalizations, and consumption of prescribed
               drugs.",
  journal   = "Am. Econ. J. Appl. Econ.",
  publisher = "American Economic Association",
  volume    =  10,
  number    =  2,
  pages     = "234--256",
  month     =  apr,
  year      =  2018
}

@ARTICLE{Oreopoulos2006-bn,
  title     = "Estimating Average and Local Average Treatment Effects of
               Education when Compulsory Schooling Laws Really Matter",
  author    = "Oreopoulos, Philip",
  abstract  = "Estimating Average and Local Average Treatment Effects of
               Education when Compulsory Schooling Laws Really Matter by Philip
               Oreopoulos. Published in volume 96, issue 1, pages 152-175 of
               American Economic Review, March 2006, Abstract: The change to
               the minimum school-leaving age in the United Kingdom...",
  journal   = "Am. Econ. Rev.",
  publisher = "American Economic Association",
  volume    =  96,
  number    =  1,
  pages     = "152--175",
  month     =  mar,
  year      =  2006
}

@ARTICLE{Black2005-aw,
  title     = "Why the Apple Doesn't Fall Far: Understanding Intergenerational
               Transmission of Human Capital",
  author    = "Black, Sandra E and Devereux, Paul J and Salvanes, Kjell G",
  journal   = "Am. Econ. Rev.",
  publisher = "American Economic Association",
  volume    =  95,
  number    =  1,
  pages     = "437--449",
  year      =  2005
}

@ARTICLE{Miller2012-fo,
  title     = "Does temporary affirmative action produce persistent effects? A
               study of black and female employment in law enforcement",
  author    = "Miller, Amalia R and Segal, Carmit",
  journal   = "Rev. Econ. Stat.",
  publisher = "MIT Press - Journals",
  volume    =  94,
  number    =  4,
  pages     = "1107--1125",
  month     =  nov,
  year      =  2012,
  language  = "en"
}

@ARTICLE{Lundborg2014-gm,
  title     = "Parental education and offspring outcomes: Evidence from the
               Swedish compulsory school reform",
  author    = "Lundborg, Petter and Nilsson, Anton and Rooth, Dan-Olof",
  abstract  = "In this paper, we exploit the Swedish compulsory schooling
               reform in order to estimate the causal effect of parental
               education on son's outcomes. We use data from the Swedish
               enlistment register on the entire population of males and focus
               on outcomes such as cognitive skills, non-cognitive skills, and
               various dimensions of health at the age of 18. We find
               significant and positive effects of maternal education on sons'
               skills and health status. Although the reform had equally strong
               effects on father's education as on mother's education, we find
               little evidence that paternal education improves son's outcomes.",
  journal   = "Am. Econ. J. Appl. Econ.",
  publisher = "American Economic Association",
  volume    =  6,
  number    =  1,
  pages     = "253--278",
  month     =  jan,
  year      =  2014
}

@ARTICLE{Imai2021-dn,
  title     = "On the Use of {Two-Way} Fixed Effects Regression Models for
               Causal Inference with Panel Data",
  author    = "Imai, Kosuke and Kim, In Song",
  abstract  = "The two-way linear fixed effects regression (2FE) has become a
               default method for estimating causal effects from panel data.
               Many applied researchers use the 2FE estimator to adjust for
               unobserved unit-specific and time-specific confounders at the
               same time. Unfortunately, we demonstrate that the ability of the
               2FE model to simultaneously adjust for these two types of
               unobserved confounders critically relies upon the assumption of
               linear additive effects. Another common justification for the
               use of the 2FE estimator is based on its equivalence to the
               difference-in-differences estimator under the simplest setting
               with two groups and two time periods. We show that this
               equivalence does not hold under more general settings commonly
               encountered in applied research. Instead, we prove that the
               multi-period difference-in-differences estimator is equivalent
               to the weighted 2FE estimator with some observations having
               negative weights. These analytical results imply that in
               contrast to the popular belief, the 2FE estimator does not
               represent a design-based, nonparametric estimation strategy for
               causal inference. Instead, its validity fundamentally rests on
               the modeling assumptions.",
  journal   = "Polit. Anal.",
  publisher = "Cambridge University Press",
  volume    =  29,
  number    =  3,
  pages     = "405--415",
  month     =  jul,
  year      =  2021,
  keywords  = "difference-in-differences; longitudinal data; matching;
               unobserved confounding; weighted least squares"
}

@ARTICLE{Miller2019-ok,
  title     = "Do female officers improve law enforcement quality? Effects on
               crime reporting and domestic violence",
  author    = "Miller, Amalia R and Segal, Carmit",
  abstract  = "AbstractWe study the impact of the integration of women in U.S.
               policing between the late 1970s and early 1990s on violent crime
               reporting and domestic violence (DV). Along these two key
               dimensions, we find that female officers improved police
               quality. Crime victimization data reveal that as female
               representation increases among officers in an area, violent
               crimes against women in that area, and especially DV, are
               reported to the police at significantly higher rates. There are
               no such effects for violent crimes against men or from increases
               in the female share of civilian police employees. Furthermore,
               increases in female officer shares are followed by significant
               declines in rates of intimate partner homicide and non-fatal
               domestic abuse. These effects are all consistent between fixed
               effects models with controls for economic and policy variables
               and models that focus exclusively on increases in female police
               employment driven by externally imposed affirmative action plans
               following litigation for employment discrimination.",
  journal   = "Rev. Econ. Stud.",
  publisher = "Oxford University Press (OUP)",
  volume    =  86,
  number    =  5,
  pages     = "2220--2247",
  month     =  oct,
  year      =  2019,
  copyright = "https://academic.oup.com/journals/pages/open\_access/funder\_policies/chorus/standard\_publication\_model",
  language  = "en"
}

@ARTICLE{Johnson2019-kb,
  title    = "Reducing Inequality through Dynamic Complementarity: Evidence
              from Head Start and Public School Spending",
  author   = "Johnson, Rucker C and Jackson, C Kirabo",
  abstract = "Reducing Inequality through Dynamic Complementarity: Evidence
              from Head Start and Public School Spending by Rucker C. Johnson
              and C. Kirabo Jackson. Published in volume 11, issue 4, pages
              310-49 of American Economic Journal: Economic Policy, November
              2019, Abstract: We compare the adult outcomes of...",
  journal  = "American Economic Journal: Economic Policy",
  volume   =  11,
  number   =  4,
  pages    = "310--349",
  month    =  nov,
  year     =  2019
}

@ARTICLE{Duflo2001-nh,
  title     = "Schooling and Labor Market Consequences of School Construction
               in Indonesia: Evidence from an Unusual Policy Experiment",
  author    = "Duflo, Esther",
  abstract  = "Schooling and Labor Market Consequences of School Construction
               in Indonesia: Evidence from an Unusual Policy Experiment by
               Esther Duflo. Published in volume 91, issue 4, pages 795-813 of
               American Economic Review, September 2001, Abstract: Between 1973
               and 1978, the Indonesian government engaged in o...",
  journal   = "Am. Econ. Rev.",
  publisher = "American Economic Association",
  volume    =  91,
  number    =  4,
  pages     = "795--813",
  month     =  sep,
  year      =  2001
}
\bibliographystyle{econ-jet}
\newpage
\appendix
\renewcommand{\thesection}{\Alph{section}}
\section{Proof of the theorem in section \ref{sec2}}\label{ApeA}
Before we proceed the proof of Theorem \ref{sec3thm1}, we provide Lemma \ref{ApeAlemma1} below. This lemma is shown by \cite{Goodman-Bacon2021-ej}.

\begin{Lemma}[Lemma $1$ in \cite{Goodman-Bacon2021-ej}]
\label{ApeAlemma1}
The sample covariance between a cohort and time specific variable $z_{kt}$ and a double demeaning variable $\Tilde{x}_{kt}=(x_{kt}-\bar{x}_k)-(\bar{x}_t-\bar{\bar{x}})$ is equal to a sum over every pair of observations of the period-by-period products of differences between cohorts in $z_{kt}$ and $\Tilde{x}_{kt}$.
\begin{align}
&\sum_{k}n_{k}\frac{1}{T}\sum_{t}z_{kt}\left[(x_{kt}-\bar{x}_{k})-(\bar{x}_t-\bar{\bar{x}})\right]\notag\\
\label{ApeCeq1}
=&\sum_{k}\sum_{l>k}n_{l}n_{k}\frac{1}{T}\sum_{t}(z_{kt}-z_{lt})[(x_{kt}-\bar{x}_{k})-(x_{lt}-\bar{x}_{l})]
\end{align}
\end{Lemma}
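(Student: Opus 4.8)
The plan is to prove \eqref{ApeCeq1} as a purely algebraic identity, reducing it to one ``polarization'' fact about weighted pairwise sums together with the defining property of double-demeaning. Throughout I use the notation $\bar{x}_t=\sum_k n_k x_{kt}$, $\bar{x}_k=\tfrac1T\sum_t x_{kt}$, $\bar{\bar{x}}=\sum_k n_k\bar{x}_k=\tfrac1T\sum_t\bar{x}_t$, the fact that $\sum_k n_k=1$, and I abbreviate $\tilde{x}_{kt}=(x_{kt}-\bar{x}_k)-(\bar{x}_t-\bar{\bar{x}})$.

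First I would record the only structural property of $\tilde{x}$ that the argument needs: for every fixed $t$, $\sum_k n_k\tilde{x}_{kt}=\sum_k n_k(x_{kt}-\bar{x}_k)-(\bar{x}_t-\bar{\bar{x}})=(\bar{x}_t-\bar{\bar{x}})-(\bar{x}_t-\bar{\bar{x}})=0$, since $\sum_k n_k x_{kt}=\bar{x}_t$, $\sum_k n_k\bar{x}_k=\bar{\bar{x}}$, and $\bar{x}_t-\bar{\bar{x}}$ does not depend on $k$. Next, because $\bar{x}_t-\bar{\bar{x}}$ is cohort-independent, $(x_{kt}-\bar{x}_k)-(x_{lt}-\bar{x}_l)=\tilde{x}_{kt}-\tilde{x}_{lt}$ for every pair $k,l$, so the right-hand side of \eqref{ApeCeq1} equals $\tfrac1T\sum_t\sum_k\sum_{l>k}n_k n_l(z_{kt}-z_{lt})(\tilde{x}_{kt}-\tilde{x}_{lt})$.

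The core step is a polarization identity: for arbitrary cohort-time arrays $a_{kt},b_{kt}$ and weights with $\sum_k n_k=1$, I claim that for each fixed $t$, $\sum_k\sum_{l>k}n_k n_l(a_{kt}-a_{lt})(b_{kt}-b_{lt})=\sum_k n_k a_{kt}b_{kt}-\bigl(\sum_k n_k a_{kt}\bigr)\bigl(\sum_k n_k b_{kt}\bigr)$. I would prove this by writing $\sum_{k<l}=\tfrac12\sum_{k\neq l}$ (legitimate since the summand is symmetric under $k\leftrightarrow l$), expanding the product, collecting the ``diagonal'' contributions $a_{kt}b_{kt}$ which appear with total weight $n_k\sum_{l\neq k}n_l=n_k(1-n_k)$, and rewriting the ``cross'' contributions via $\sum_{k\neq l}n_k n_l a_{kt}b_{lt}=\bigl(\sum_k n_k a_{kt}\bigr)\bigl(\sum_l n_l b_{lt}\bigr)-\sum_k n_k^2 a_{kt}b_{kt}$; the $n_k^2$ terms cancel, leaving the claimed expression. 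Applying this with $a_{kt}=z_{kt}$ and $b_{kt}=\tilde{x}_{kt}$, the subtracted product is $\bigl(\sum_k n_k z_{kt}\bigr)\bigl(\sum_k n_k\tilde{x}_{kt}\bigr)$, which vanishes by the first step; hence $\sum_k\sum_{l>k}n_k n_l(z_{kt}-z_{lt})(\tilde{x}_{kt}-\tilde{x}_{lt})=\sum_k n_k z_{kt}\tilde{x}_{kt}$ for every $t$, and summing over $t$ and dividing by $T$ gives \eqref{ApeCeq1}.

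The main obstacle is nothing conceptual but rather the index bookkeeping in the polarization step: keeping ordered versus unordered pairs straight and tracking which terms carry the factor $n_k^2$. Everything else reduces to $\sum_k n_k=1$ and the identity $\sum_k n_k\tilde{x}_{kt}=0$. Alternatively, since the statement coincides with Lemma~1 of \cite{Goodman-Bacon2021-ej}, one may simply cite that result.
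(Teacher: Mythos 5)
Your proposal is correct: the polarization identity is verified by the $\tfrac12\sum_{k\neq l}$ expansion exactly as you describe, and combined with $\sum_k n_k\tilde{x}_{kt}=0$ and the observation that the cohort-invariant term $\bar{x}_t-\bar{\bar{x}}$ drops out of pairwise differences, it yields \eqref{ApeCeq1}. Note that the paper itself does not prove this lemma — it simply defers to the proof of Lemma 1 in \cite{Goodman-Bacon2021-ej} — so your self-contained argument is, if anything, more complete than what appears here; it is also essentially the same pairwise-decomposition computation used in the original reference.
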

\begin{proof}
See the proof of Lemma $1$ in \cite{Goodman-Bacon2021-ej}.
\end{proof}

\subsection{Proof of Theorem \ref{sec3thm1}}
\begin{proof}
From the FWL theorem, the TWFEIV estimator $\hat{\beta}_{IV}$ is:
\begin{align}
\label{ApeCeq2}
\hat{\beta}_{IV}=\frac{\frac{1}{NT}\sum_{i}\sum_{t}\Tilde{Z}_{i,t}Y_{i,t}}{\frac{1}{NT}\sum_{i}\sum_{t}\Tilde{Z}_{i,t}D_{i,t}}
\end{align}
where $\Tilde{Z}_{i,t}$ is a double-demeaning variable.\par
First, we consider the numerator of \eqref{ApeCeq2}. In the following, we use $k(i)$ to express that unit $i$ belongs to cohort $k$. We define $\bar{R}_{k(i),t}$ to be the sample mean of the random variable $R_{i,t}$ in cohort $k$ at time $t$, and define $\bar{R}_{k(i)}$ to the average of $\bar{R}_{k(i),t}$ over time:
\begin{align*}
\bar{R}_{k(i),t}=\frac{\sum_{i}R_{i,t}\mathbf{1}\{E_i=k\}}{\sum_{i}\mathbf{1}\{E_i=k\}}\hspace{2mm}\text{and}\hspace{2mm}\bar{R}_{k(i)}=\frac{1}{T}\sum_{t=1}^T \bar{R}_{k(i),t}.
\end{align*}
For the numerator of \eqref{ApeCeq2}, by adding and subtracting $(\bar{Z}_{k(i),t}-\bar{Z}_{k(i)})$, we obtain
\begin{align*}
&\frac{1}{NT}\sum_{i}\sum_{t}\Tilde{Z}_{i,t}Y_{i,t}\notag\\
&=\frac{1}{NT}\sum_{i}\sum_{t}Y_{i,t}\left[(Z_{i,t}-\bar{Z}_i)-(\bar{Z}_t-\bar{\bar{Z}})\right]\notag\\
\label{2}
&=\frac{1}{NT}\sum_{i}\sum_{t}Y_{i,t}\left[\underbrace{(Z_{i,t}-\bar{Z}_i)-(\bar{Z}_{k(i),t}-\bar{Z}_{k(i)})}_{=0}+(\bar{Z}_{k(i),t}-\bar{Z}_{k(i)})-(\bar{Z}_t-\bar{\bar{Z}})\right]\\
&=\frac{1}{NT}\sum_{i}\sum_{t}Y_{i,t}\left[(\bar{Z}_{k(i),t}-\bar{Z}_{k(i)})-(\bar{Z}_t-\bar{\bar{Z}})\right]\\
&=\sum_{k}n_{k}\frac{1}{T}\sum_{t}\bar{Y}_{k,t}\left[(\bar{Z}_{k,t}-\bar{Z}_{k})-(\bar{Z}_t-\bar{\bar{Z}})\right],
\end{align*}
where the third equality follows from the fact that $Z_{i,t}=\bar{Z}_{k(i),t}$ and $\bar{Z}_i=\bar{Z}_{k(i)}$ because all the units in cohort $k$ have the same assignment of the instrument. The forth equality follows because the expression only depends on cohort $k$ and time $t$.\par
To further develop the expression, we use Lemma \ref{ApeAlemma1}:
\begin{equation}
\sum_{k}n_{k}\frac{1}{T}\sum_{t}\bar{Y}_{kt}[(\bar{Z}_{kt}-\bar{Z}_{k})-(\bar{Z}_t-\bar{\bar{Z}})]\notag\\
\end{equation}
\begin{equation}
\label{eq330}
=\sum_{k}\sum_{l > k}n_{l}n_{k}\frac{1}{T}\sum_{t}(\bar{Y}_{kt}-\bar{Y}_{lt})[(\bar{Z}_{kt}-\bar{Z}_{k})-(\bar{Z}_{lt}-\bar{Z}_{l})].
\end{equation}
Next, we consider all possible expressions of \eqref{eq330}.
When $e=U$, that is, cohort $e$ is never exposed cohort, we have $\bar{Z}_{Ut}-\bar{Z}_{U}=0$. From this observation, for the pair $(k,U)$, we have:
\begin{align*}
&\frac{1}{T}\sum_{t}(\bar{Y}_{kt}-\bar{Y}_{Ut})\left[(\bar{Z}_{kt}-\bar{Z}_{k})-(\bar{Z}_{Ut}-\bar{Z}_{U})\right]\\
&=-\frac{1}{T}\sum_{t < k}(\bar{Y}_{kt}-\bar{Y}_{Ut})\bar{Z}_{k}+\frac{1}{T}\sum_{t \geq k}(\bar{Y}_{kt}-\bar{Y}_{Ut})(1-\bar{Z}_{k})\\
&=\left[(\bar{Y}_{kt}^{POST(k)}-\bar{Y}_{kt}^{PRE(k)})-(\bar{Y}_{Ut}^{POST(k)}-\bar{Y}_{Ut}^{PRE(k)})\right]\bar{Z}_{k}(1-\bar{Z}_{k}).
\end{align*}
By the similar argument, for the pair $(k,l)$ where $k < l <T$, we obtain
\begin{align*}
&=-\frac{1}{T}\sum_{t < k}(\bar{Y}_{kt}-\bar{Y}_{lt})(\bar{Z}_{k}-\bar{Z}_{l})+\frac{1}{T}\sum_{t \in [k,l)}(\bar{Y}_{kt}-\bar{Y}_{lt})(1-\bar{Z}_{k}+\bar{Z}_{l})-\frac{1}{T}\sum_{t \geq l}(\bar{Y}_{kt}-\bar{Y}_{lt})(\bar{Z}_{k}-\bar{Z}_{l})\\
&=-\left[(\bar{Y}_{kt}^{PRE(k)}-\bar{Y}_{lt}^{PRE(k)})\right](\bar{Z}_{k}-\bar{Z}_{l})(1-\bar{Z}_{k})+\left[(\bar{Y}_{kt}^{MID(k,l)}-\bar{Y}_{lt}^{MID(k,l)})\right](\bar{Z}_{k}-\bar{Z}_{l})(1-\bar{Z}_{k}+\bar{Z}_{l})\\
&-\left[(\bar{Y}_{kt}^{POST(l)}-\bar{Y}_{lt}^{POST(l)})\right](\bar{Z}_{k}-\bar{Z}_{l})\bar{Z}_{l}\\
&=\left[(\bar{Y}_{kt}^{MID(k,l)}-\bar{Y}_{kt}^{PRE(k)})-(\bar{Y}_{lt}^{MID(k,l)}-\bar{Y}_{lt}^{PRE(k)})\right](\bar{Z}_{k}-\bar{Z}_{l})(1-\bar{Z}_{k})\\
&+\left[(\bar{Y}_{lt}^{POST(l)}-\bar{Y}_{lt}^{MID(k,l)})-(\bar{Y}_{kt}^{POST(l)}-\bar{Y}_{kt}^{MID(k,l)})\right](\bar{Z}_{k}-\bar{Z}_{l})\bar{Z}_{l}.
\end{align*}
To sum up, for the numerator of \eqref{ApeCeq2}, we have
\begin{align}
&\frac{1}{NT}\sum_{i}\sum_{t}\Tilde{Z}_{it}Y_{it}\notag\\
&=\Big[\sum_{k \neq U}(n_k+n_u)^2n_{kU}(1-n_{kU})\bar{Z}_{k}(1-\bar{Z}_{k})\hat{\beta}_{kU}^{2\times2}\notag\\
&+\sum_{k \neq U}\sum_{l >k}[((n_k+n_l)(1-\bar{Z}_l))^2n_{kl}(1-n_{kl})\left(\frac{\bar{Z}_{k}-\bar{Z}_{l}}{1-\bar{Z}_{l}}\right)\left(\frac{1-\bar{Z}_{k}}{1-\bar{Z}_{l}}\right)\hat{\beta}_{kl}^{2\times2,k}\notag\\
&+((n_k+n_l)\bar{Z}_k)^2n_{kl}(1-n_{kl})\left(\frac{\bar{Z}_{l}}{\bar{Z}_{k}}\right)\left(\frac{\bar{Z}_{k}-\bar{Z}_{l}}{\bar{Z}_{k}}\right)\hat{\beta}_{kl}^{2\times2,l}]\Big]\notag\\
\label{ApeCeq4}
&=\bigg[\sum_{k \neq U}(n_k+n_u)^2\hat{V}_{kU}^{Z}\hat{\beta}_{kU}^{2\times2}+\sum_{k \neq U}\sum_{l >k}[((n_k+n_l)(1-\bar{Z}_l))^2\hat{V}_{kl}^{Z,k}\hat{\beta}_{kl}^{2\times2,k}+((n_k+n_l)\bar{Z}_k)^2\hat{V}_{kl}^{Z,l}\hat{\beta}_{kl}^{2\times2,l}\bigg].
\end{align}\par
Next, we consider the denominator of \eqref{ApeCeq2}. We note that the structure of the denominator is completely same as the one of the numerator in \eqref{ApeCeq2}. Therefore, by the completely same calculations, we obtain
\begin{align}
&\frac{1}{NT}\sum_{i}\sum_{t}\Tilde{Z}_{it}D_{it}\notag\\
&\equiv \hat{C}^{D,Z}\notag\\
\label{ApeCeq5}
&=\bigg[\sum_{k \neq U}(n_k+n_u)^2\hat{V}_{kU}^{Z}\hat{D}_{kU}^{2\times2}+\sum_{k \neq U}\sum_{l >k}[((n_k+n_l)(1-\bar{Z}_l))^2\hat{V}_{kl}^{Z,k}\hat{D}_{kl}^{2\times2,k}+((n_k+n_l)\bar{Z}_k)^2\hat{V}_{kl}^{Z,l}\hat{D}_{kl}^{2\times2,l}\bigg].
\end{align}
Combining \eqref{ApeCeq4} with \eqref{ApeCeq5}, we obtain
\begin{align*}
&\hat{\beta}_{IV}\\
&=\frac{\bigg[\sum_{k \neq U}(n_k+n_u)^2\hat{V}_{kU}^{Z}\hat{\beta}_{kU}^{2\times2}+\sum_{k \neq U}\sum_{l >k}[((n_k+n_l)(1-\bar{Z}_l))^2\hat{V}_{kl}^{Z,k}\hat{\beta}_{kl}^{2\times2,k}+((n_k+n_l)\bar{Z}_k)^2\hat{V}_{kl}^{Z,l}\hat{\beta}_{kl}^{2\times2,l}\bigg]}{\bigg[\sum_{k \neq U}(n_k+n_u)^2\hat{V}_{kU}^{Z}\hat{D}_{kU}^{2\times2}+\sum_{k \neq U}\sum_{l >k}[((n_k+n_l)(1-\bar{Z}_l))^2\hat{V}_{kl}^{Z,k}\hat{D}_{kl}^{2\times2,k}+((n_k+n_l)\bar{Z}_k)^2\hat{V}_{kl}^{Z,l}\hat{D}_{kl}^{2\times2,l}\bigg]}\\
&=\frac{\bigg[\sum_{k \neq U}(n_k+n_u)^2\hat{V}_{kU}^{Z}\hat{\beta}_{kU}^{2\times2}+\sum_{k \neq U}\sum_{l >k}[((n_k+n_l)(1-\bar{Z}_l))^2\hat{V}_{kl}^{Z,k}\hat{\beta}_{kl}^{2\times2,k}+((n_k+n_l)\bar{Z}_k)^2\hat{V}_{kl}^{Z,l}\hat{\beta}_{kl}^{2\times2,l}\bigg]}{\hat{C}^{D,Z}}\\
&=\bigg[\sum_{k \neq U}\hat{w}_{IV, kU}\hat{\beta}_{IV, kU}^{2\times2}+\sum_{k \neq U}\sum_{l >k}\hat{w}_{IV, kl}^{k}\hat{\beta}_{IV, kl}^{2\times2,k}+\hat{w}_{IV, kl}^{l}\hat{\beta}_{IV, kl}^{2\times2,l}\bigg],
\end{align*}
where the weights are:
\begin{align*}
\hat{w}_{IV, kU}=\frac{(n_k+n_u)^2\hat{V}_{kU}^{Z}\hat{D}_{kU}^{2 \times 2}}{\hat{C}^{D,Z}},
\end{align*}
\begin{align*}
\hat{w}_{IV, kl}^{k}=\frac{((n_k+n_l)(1-\bar{Z}_l))^2 \hat{V}_{kl}^{Z,k}\hat{D}_{kl}^{2 \times 2,k}}{\hat{C}^{D,Z}},
\end{align*}
\begin{align*}
\hat{w}_{IV, kl}^{l}=\frac{((n_k+n_l)\bar{Z}_k)^2\hat{V}_{kl}^{Z,l}\hat{D}_{kl}^{2\times2,l}}{\hat{C}^{D,Z}}.
\end{align*}
We note that $\hat{C}^{D,Z}$ is the sum of the numerator in each weight as one can see in \eqref{ApeCeq5}. This implies that the weights sum to one:
\begin{align*}
\sum_{k \neq U}w_{IV, kU}+\sum_{k \neq U}\sum_{l >k}[w_{IV, kl}^{k}+w_{IV, kl}^{l}]=1.
\end{align*}\par
Completing the proof.
\end{proof}
\setcounter{Proof of Lemma}{1}

\section{Proofs of the theorem and lemma in section \ref{sec4}.}\label{ApeB}
In this section, we first prove Lemma \ref{sec4lemma1} as a preparation.
\subsection{Proof of Lemma \ref{sec4lemma1}}
\begin{proof}
As one can see in the proof of Theorem \ref{sec3thm1}, we have the following expression for the numerator of the TWFEIV estimand:
\begin{align*}
\hat{C}^{D,Z}=\sum_{k \neq U}(n_k+n_u)^2 \hat{V}_{kU}^Z\hat{D}_{kU}^{2 \times 2}+\sum_{k \neq U}\sum_{l > k}\left[((n_k+n_l)(1-\bar{Z}_l))^2\hat{V}_{kl}^{Z,k}\hat{D}_{kl}^{2 \times 2,k}+((n_k+n_l)\bar{Z}_k)^2\hat{V}_{kl}^{Z,l}\hat{D}_{kl}^{2 \times 2,l}\right].
\end{align*}\par
We fix $T$ and consider $N \rightarrow \infty$. We first derive the probability limit of $(n_k+n_u)^2 \hat{V}_{kU}^Z\hat{D}_{kU}^{2 \times 2}$.
By definition, we can rewrite $(n_k+n_u)^2 \hat{V}_{kU}^Z\hat{D}_{kU}^{2 \times 2}$ as follows:
\begin{align*}
(n_k+n_u)^2 \hat{V}_{kU}^Z\hat{D}_{kU}^{2 \times 2}
&=n_k n_u \bar{Z}_k(1-\bar{Z}_k)\cdot\left[\left(\bar{D}_{k}^{POST(k)}-\bar{D}_{k}^{PRE(k)}\right)-\left(\bar{D}_{U}^{POST(k)}-\bar{D}_{U}^{PRE(k)}\right)\right].
\end{align*}
By the law of large number (LLN), as $N \rightarrow \infty$, we obtain
\begin{align}
&\left(\bar{D}_{k}^{POST(k)}-\bar{D}_{k}^{PRE(k)}\right)-\left(\bar{D}_{U}^{POST(k)}-\bar{D}_{U}^{PRE(k)}\right) \notag\\
&\xrightarrow{p} \frac{1}{T-(k-1)}\left[\sum_{t=k}^{T}E[D_{it}|E_i=k]-E[D_{it}|E_i=U]\right]-\frac{1}{k-1}\left[\sum_{t=1}^{k-1}E[D_{it}|E_i=k]-E[D_{it}|E_i=U]\right]\notag\\
&=\frac{1}{T-(k-1)}\left[\sum_{t=k}^{T}E[D_{it}^{k}-D_{it}^{\infty}|E_i=k]\right]\notag\\
&+\frac{1}{T-(k-1)}\left[\sum_{t=k}^{T}E[D_{it}^{\infty}|E_i=k]-\sum_{t=k}^{T}E[D_{it}^{\infty}|E_i=U]\right]\notag\\
&-\frac{1}{k-1}\left[\sum_{t=1}^{k-1}E[D_{it}^{\infty}|E_i=k]-E[D_{it}^{\infty}|E_i=U]\right]\notag\\
&=\frac{1}{T-(k-1)}\left[\sum_{t=k}^{T}E[D_{it}^{k}-D_{it}^{\infty}|E_i=k]\right]\notag\\
\label{ApeCeq6}
&=CAET_{k}^1(POST(k)).
\end{align}
The second equality follows from the simple algebra and Assumption \ref{sec2as5} (No anticipation for the first stage). The third equality follows from Assumption \ref{sec2as6} (Parallel trend assumption in the treatment).\par
Next, we consider the probability limit of $n_k n_u \bar{Z}_k(1-\bar{Z}_k)$. By the LLN and the Slutsky's theorem, we obtain
\begin{align}
\label{ApeCeq7}
n_k n_u \bar{Z}_k(1-\bar{Z}_k) \xrightarrow{p} Pr(E_i=k)Pr(E_i=U)\frac{T-(k-1)}{T}\frac{k-1}{T}.
\end{align}
Combining the result \eqref{ApeCeq6} with \eqref{ApeCeq7}, by the Slutsky's theorem, we have
\begin{align}
(n_k+n_u)^2 \hat{V}_{kU}^Z\hat{D}_{kU}^{2 \times 2}&\xrightarrow{p} Pr(E_i=k)Pr(E_i=U)\frac{T-(k-1)}{T}\frac{k-1}{T}CAET_{k,t}^1(POST(k))\notag\\
\label{ApeCeq8}
&=w_{kU}CAET_{k}^1(POST(k)).
\end{align}
where the weight $w_{kU}$ is:
\begin{align*}
w_{kU}=Pr(E_i=k)Pr(E_i=U)\frac{T-(k-1)}{T}\frac{k-1}{T}.
\end{align*}
By the completely same calculations, we also have
\begin{align}
\label{ApeCeq9}
((n_k+n_l)(1-\bar{Z}_l))^2\hat{V}_{kl}^{Z,k}\hat{D}_{kl}^{2 \times 2,k} \xrightarrow{p} w_{kl}^{k}CAET_{k}^1(MID(k,l)).
\end{align}
where the weight $w_{kl}^{k}$ is:
\begin{align*}
w_{kl}^{k}=Pr(E_i=k)Pr(E_i=l)\frac{k-1}{T}\frac{l-k}{T}.
\end{align*}
Next, we consider the probability limit of $((n_k+n_l)\bar{Z}_k)^2\hat{V}_{kl}^{Z,l}\hat{D}_{kl}^{2 \times 2,l}$. By definition, we have
\begin{align*}
((n_k+n_l)\bar{Z}_k)^2\hat{V}_{kl}^{Z,l}\hat{D}_{kl}^{2 \times 2,l}&=
n_k n_l(\bar{Z}_k-\bar{Z}_l)\bar{Z}_l\cdot\left[\left(\bar{D}_{l}^{POST(l)}-\bar{D}_{l}^{MID(k,l)}\right)-\left(\bar{D}_{k}^{POST(l)}-\bar{D}_{k}^{MID(k,l)}\right)\right].
\end{align*}
By the law of large number (LLN), as $N \rightarrow \infty$, we have
\begin{align}
&\left(\bar{D}_{l}^{POST(l)}-\bar{D}_{l}^{MID(k,l)}\right)-\left(\bar{D}_{k}^{POST(l)}-\bar{D}_{k}^{MID(k,l)}\right) \notag\\
&\xrightarrow{p} \frac{1}{T-(l-1)}\left[\sum_{t=l}^{T}E[D_{it}|E_i=l]-E[D_{it}|E_i=k]\right]-\frac{1}{l-k}\left[\sum_{t=k}^{l-1}E[D_{it}|E_i=l]-E[D_{it}|E_i=k]\right]\notag\\
&=\frac{1}{T-(l-1)}\left[\sum_{t=l}^{T}E[D_{it}^{l}-D_{it}^{\infty}|E_i=l]\right]-\frac{1}{T-(l-1)}\left[\sum_{t=l}^{T}E[D_{it}^{k}-D_{it}^{\infty}|E_i=k]\right]\notag\\
&+\frac{1}{T-(l-1)}\left[\sum_{t=l}^{T}E[D_{it}^{\infty}|E_i=l]-\sum_{t=l}^{T}E[D_{it}^{\infty}|E_i=k]\right]\notag\\
&+\frac{1}{l-k}\left[\sum_{t=k}^{l-1}E[D_{it}^{k}|E_i=k]-E[D_{it}^{\infty}|E_i=k]\right]\notag\\
&+\frac{1}{l-k}\left[\sum_{t=k}^{l-1}E[D_{it}^{\infty}|E_i=k]-E[D_{it}^{\infty}|E_i=l]\right]\notag\\
&=\frac{1}{T-(l-1)}\left[\sum_{t=l}^{T}E[D_{it}^{l}-D_{it}^{\infty}|E_i=l]\right]-\frac{1}{T-(l-1)}\left[\sum_{t=l}^{T}E[D_{it}^{k}-D_{it}^{\infty}|E_i=k]\right]\notag\\
&+\frac{1}{l-k}\left[\sum_{t=k}^{l-1}E[D_{it}^{k}|E_i=k]-E[D_{it}^{\infty}|E_i=k]\right]\notag\\
\label{ApeCeq10}
&=CAET_{l}^{1}(POST(l))-CAET_{k}^{1}(POST(l))+CAET_{k}^{1}(MID(k,l))
\end{align}
The second equality follows from the simple algebra and Assumption \ref{sec2as5}. The third equality follows from Assumption.\par
Note that the LLN and the Slutsky's theorem implies
\begin{align}
\label{ApeCeq11}
n_k n_l(\bar{Z}_k-\bar{Z}_l)\bar{Z}_l \xrightarrow{p} Pr(E_i=k)Pr(E_i=l)\frac{T-(l-1)}{T}\frac{l-k}{T}.
\end{align}
From the result of \eqref{ApeCeq10} with \eqref{ApeCeq11}, we obtain
\begin{align}
&((n_k+n_l)\bar{Z}_k)^2\hat{V}_{kl}^{Z,l}\hat{D}_{kl}^{2 \times 2,l}\notag\\
\label{ApeCeq12}
\xrightarrow{p}& w_{kl}^{l}\left[CAET_{l}^{1}(POST(l))-CAET_{k}^{1}(POST(l))+CAET_{k}^{1}(MID(k,l))\right].
\end{align}
where the weight $w_{kl}^{l}$ is:
\begin{align*}
w_{kl}^{l}=Pr(E_i=k)Pr(E_i=l)\frac{T-(l-1)}{T}\frac{l-k}{T}.
\end{align*}
To sum up, by combining \eqref{ApeCeq8},\eqref{ApeCeq9} with \eqref{ApeCeq12}, we obtain
\begin{align*}
\hat{C}^{D,Z}&=\sum_{k \neq U}(n_k+n_u)^2 \hat{V}_{kU}^Z\hat{D}_{kU}^{2 \times 2}+\sum_{k \neq U}\sum_{l > k}\left[((n_k+n_l)(1-\bar{Z}_l))^2\hat{V}_{kl}^{Z,k}\hat{D}_{kl}^{2 \times 2,k}+((n_k+n_l)\bar{Z}_k)^2\hat{V}_{kl}^{Z,l}\hat{D}_{kl}^{2 \times 2,l}\right]\\
&\xrightarrow{p}  WCAET-\Delta CAET^{1}.
\end{align*}
where we define
\begin{align*}
&WCAET \equiv \sum_{k \neq U}w_{kU}CAET_{k}^1(POST(k))+\sum_{k \neq U}\sum_{l > k}w_{kl}^{k}CAET_{k}^1(MID(k,l))+w_{kl}^{l}CAET_{l}^{1}(POST(l)),\\
&\Delta CAET^{1} \equiv w_{kl}^{l}\left[CAET_{k}^{1}(POST(l))-CAET_{k}^{1}(MID(k,l))\right].
\end{align*}
and the weights are:
\begin{align}
\label{Lemma1weight1}
&w_{kU}=Pr(E_i=k)Pr(E_i=U)\frac{T-(k-1)}{T}\frac{k-1}{T},\\
\label{Lemma1weight2}
&w_{kl}^{k}=Pr(E_i=k)Pr(E_i=l)\frac{k-1}{T}\frac{l-k}{T},\\
\label{Lemma1weight3}
&w_{kl}^{l}=Pr(E_i=k)Pr(E_i=l)\frac{T-(l-1)}{T}\frac{l-k}{T}.
\end{align}\par
Completing the proof.\par
\end{proof}
\subsection{Preparation for the proof of Theorem \ref{sec4thm2}.}
Before we present the proof of Theorem \ref{sec4thm2}, we show the following lemma.
\begin{Lemma}
\label{ApeBlemma2}
Suppose Assumptions \ref{sec2as1}-\ref{sec2as7} hold. If the treatment is binary, for all $k, l \in \{1,\dots,T\} (k \leq l)$, we have
\begin{align*}
\left[\sum_{t=k}^{l}E[Y_{i,t}(D_{i,t}^{k})-Y_{i,t}(D_{i,t}^{\infty})|E_i=k]\right]&=\sum_{t=k}^{l}E[D_{i,t}^{k}-D_{i,t}^{\infty}|E_i=k]\cdot E[Y_{i,t}(1)-Y_{i,t}(0)|E_i=k,CM_{k,t}]\\
&\equiv \sum_{t=k}^{l}CAET^{1}_{k,t}\cdot CLATT_{k,t}.
\end{align*}
\end{Lemma}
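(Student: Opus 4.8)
The plan is to prove the claimed identity \emph{term by term} in $t$ and then sum. Concretely, for each fixed absolute period $t$ with $k\le t\le l$ I will show
\[
E\bigl[Y_{i,t}(D_{i,t}^{k})-Y_{i,t}(D_{i,t}^{\infty})\mid E_i=k\bigr]=CAET^{1}_{k,t}\cdot CLATT_{k,t},
\]
and then add these equalities over $t=k,\dots,l$ using linearity of the expectation. This is nothing more than the factorization \eqref{sec4.1eq1} read one (absolute) period at a time along the post-exposure window of cohort $k$, so the argument essentially repeats the derivation of \eqref{sec4.1eq1} without the summation.

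First I would invoke the no-carryover assumption (Assumption \ref{sec2as2}) and the exclusion restriction (Assumption \ref{sec2as3}), under which the objects $Y_{i,t}(1),Y_{i,t}(0)$ are well defined and the potential outcome along an instrument path admits the switching representation $Y_{i,t}(D_{i,t}^{z})=D_{i,t}^{z}Y_{i,t}(1)+(1-D_{i,t}^{z})Y_{i,t}(0)$; specialising $z$ to the paths induced by $E_i=k$ and $E_i=\infty$ gives $Y_{i,t}(D_{i,t}^{k})=D_{i,t}^{k}Y_{i,t}(1)+(1-D_{i,t}^{k})Y_{i,t}(0)$ and likewise for $D_{i,t}^{\infty}$. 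Subtracting the two yields the pointwise decomposition
\[
Y_{i,t}(D_{i,t}^{k})-Y_{i,t}(D_{i,t}^{\infty})=(D_{i,t}^{k}-D_{i,t}^{\infty})\bigl(Y_{i,t}(1)-Y_{i,t}(0)\bigr)\qquad a.s.
\]
Next I would apply the monotonicity assumption (Assumption \ref{sec2as4}) in its defier-excluding form: for $t\ge k$ it rules out $DF_{k,t}$, so with a binary treatment $D_{i,t}^{k}-D_{i,t}^{\infty}\in\{0,1\}$ almost surely, and $D_{i,t}^{k}-D_{i,t}^{\infty}=\mathbf{1}\{CM_{k,t}\}$ since $D_{i,t}^{k}>D_{i,t}^{\infty}$ is exactly the complier event $G_{i,k,t}=(0,1)$. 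Hence the product above equals $\mathbf{1}\{CM_{k,t}\}\bigl(Y_{i,t}(1)-Y_{i,t}(0)\bigr)$.

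Then I would take the conditional expectation given $E_i=k$ and split on the complier event: since $E[\mathbf{1}\{CM_{k,t}\}X\mid E_i=k]=Pr(CM_{k,t}\mid E_i=k)\cdot E[X\mid E_i=k,CM_{k,t}]$, I obtain
\[
E\bigl[Y_{i,t}(D_{i,t}^{k})-Y_{i,t}(D_{i,t}^{\infty})\mid E_i=k\bigr]=Pr(CM_{k,t}\mid E_i=k)\cdot E\bigl[Y_{i,t}(1)-Y_{i,t}(0)\mid E_i=k,CM_{k,t}\bigr].
\]
The second factor is $CLATT_{k,t}$ by definition. For the first factor, observe that for units with $E_i=k$ the observed treatment satisfies $D_{i,t}=D_{i,t}^{k}$, so $CAET^{1}_{k,t}=E[D_{i,t}-D_{i,t}^{\infty}\mid E_i=k]=E[D_{i,t}^{k}-D_{i,t}^{\infty}\mid E_i=k]=Pr(CM_{k,t}\mid E_i=k)$, the last step again using Assumption \ref{sec2as4}. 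This gives the pointwise identity; summing over $t=k,\dots,l$ finishes the proof.

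I do not expect a real obstacle: the statement is just \eqref{sec4.1eq1} applied coordinatewise, and no other content is needed. The only points requiring care are (i) invoking monotonicity in the \emph{no-defiers} direction so that $D_{i,t}^{k}-D_{i,t}^{\infty}$ is literally $\mathbf{1}\{CM_{k,t}\}$ (and the analogous identification of $CAET^{1}_{k,t}$ with the complier share), and (ii) the bookkeeping between the absolute-period notation $CAET^{1}_{k,t},CLATT_{k,t}$ used in the statement and the relative-period notation $CAET^{1}_{e,l},CLATT_{e,l},CM_{e,e+l}$ used in the definitions, which agree under $t\leftrightarrow e+l$ with $e=k$. Note that Assumptions \ref{sec2as1}, \ref{sec2as5}, \ref{sec2as6} and \ref{sec2as7} are maintained in the surrounding analysis but are not actually used in this particular lemma.
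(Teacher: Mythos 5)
Your proposal is correct and follows essentially the same route as the paper: apply the switching representation $Y_{i,t}(D_{i,t}^{z})=D_{i,t}^{z}Y_{i,t}(1)+(1-D_{i,t}^{z})Y_{i,t}(0)$ (from Assumptions \ref{sec2as2}--\ref{sec2as3}) to get $Y_{i,t}(D_{i,t}^{k})-Y_{i,t}(D_{i,t}^{\infty})=(D_{i,t}^{k}-D_{i,t}^{\infty})(Y_{i,t}(1)-Y_{i,t}(0))$, then use the no-defiers form of monotonicity (Assumption \ref{sec2as4}) to write the conditional expectation as $Pr(CM_{k,t}\mid E_i=k)\cdot CLATT_{k,t}$ and sum over $t$. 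Your observation that Assumptions \ref{sec2as5}--\ref{sec2as7} are not needed matches the paper, which invokes only Assumptions \ref{sec2as1}--\ref{sec2as4} in its proof (Assumption \ref{sec2as1} being what makes the cohort-indexed potential treatments $D_{i,t}^{k}$, $D_{i,t}^{\infty}$ well defined).
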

\begin{proof}
Because we assume a binary treatment, we have
\begin{align*}
\left[\sum_{t=k}^{l}E[Y_{i,t}(D_{i,t}^{k})-Y_{i,t}(D_{i,t}^{\infty})|E_i=k]\right]&=\left[\sum_{t=k}^{l}E[(D_{i,t}^{k}-D_{i,t}^{\infty})\cdot(Y_{i,t}(1)-Y_{i,t}(0))|E_i=k]\right]\\
&=\sum_{t=k}^{l}E[D_{i,t}^{k}-D_{i,t}^{\infty}|E_i=k]\cdot E[Y_{i,t}(1)-Y_{i,t}(0)|E_i=k,CM_{k,t}],
\end{align*}
where the first equality holds from Assumptions \ref{sec2as1}-\ref{sec2as3} and the second equality holds from Assumption \ref{sec2as4}.
\end{proof}
\subsection{Proof of Theorem \ref{sec4thm2}.}
\begin{proof}
We fix $T$ and consider $N \rightarrow \infty$. We first derive the probability limit of $\hat{w}_{IV, kU}\hat{\beta}_{IV, kU}^{2\times2}$.
We note that $\hat{w}_{IV, kU}\hat{\beta}_{IV, kU}^{2\times2}$ is written as follows:
\begin{align*}
\hat{w}_{IV, kU}\hat{\beta}_{IV, kU}^{2\times2}&=\frac{(n_k+n_u)^2\hat{V}_{kU}^{Z}\hat{D}_{kU}^{2\times2}}{\hat{C}^{D,Z}}\cdot\frac{\hat{\beta}_{kU}^{2\times2}}{\hat{D}_{kU}^{2\times2}}.
\end{align*}
We first consider the probability limit of $\displaystyle\frac{\hat{\beta}_{kU}^{2\times2}}{\hat{D}_{kU}^{2\times2}}$. Recall that we have already derived the probability limit of the denominator in the proof of Lemma \ref{sec4lemma1}:
\begin{align}
\label{ApeCeq13}
\hat{D}_{kU}^{2\times2} \xrightarrow{p} \frac{1}{T-(k-1)}\sum_{t=k}^T CAET^{1}_{k,t}.
\end{align}
We consider the numerator $\hat{\beta}_{kU}^{2\times2}$. By the law of large number (LLN), as $N \rightarrow \infty$, we obtain
\begin{align}
&\left(\bar{Y}_{k}^{POST(k)}-\bar{Y}_{k}^{PRE(k)}\right)-\left(\bar{Y}_{U}^{POST(k)}-\bar{Y}_{U}^{PRE(k)}\right) \notag\\
&\xrightarrow{p} \frac{1}{T-(k-1)}\left[\sum_{t=k}^{T}E[Y_{i,t}|E_i=k]-E[Y_{i,t}|E_i=U]\right]-\frac{1}{k-1}\left[\sum_{t=1}^{k-1}E[Y_{i,t}|E_i=k]-E[Y_{i,t}|E_i=U]\right]\notag\\
&=\frac{1}{T-(k-1)}\left[\sum_{t=k}^{T}E[Y_{i,t}(D_{i,t}^{k})-Y_{i,t}(D_{i,t}^{\infty})|E_i=k]\right]\notag\\
&+\frac{1}{T-(k-1)}\left[\sum_{t=k}^{T}E[Y_{i,t}(D_{i,t}^{\infty})|E_i=k]-\sum_{t=k}^{T}E[Y_{i,t}(D_{i,t}^{\infty})|E_i=U]\right]\notag\\
&-\frac{1}{k-1}\left[\sum_{t=1}^{k-1}E[Y_{i,t}(D_{i,t}^{\infty})|E_i=k]-E[Y_{i,t}(D_{i,t}^{\infty})|E_i=U]\right]\notag\\
&=\frac{1}{T-(k-1)}\left[\sum_{t=k}^{T}E[Y_{i,t}(D_{i,t}^{k})-Y_{i,t}(D_{i,t}^{\infty})|E_i=k]\right]\notag\\
\label{ApeCeq14}
&=\frac{1}{T-(k-1)}\sum_{t=k}^{T}CAET^{1}_{k,t}\cdot CLATT_{k,t}.
\end{align}
The first equality follows from the simple manipulation, Assumption \ref{sec2as3} (Exclusion restriction in multiple time periods) and Assumption \ref{sec2as5}. The second equality follows from Assumption \ref{sec2as7} (Parallel trend assumption in the outcome). The final equality follows from Lemma \ref{ApeBlemma2}.\par
Combining the result \eqref{ApeCeq13} with \eqref{ApeCeq14}, we obtain
\begin{align}
\displaystyle\frac{\hat{\beta}_{kU}^{2\times2}}{\hat{D}_{kU}^{2\times2}} &\xrightarrow{p} \sum_{t=k}^T \frac{CAET^{1}_{k,t}}{\sum_{t=k}^T CAET^{1}_{k,t}}\cdot CLATT_{k,t}\notag\\
\label{ApeCeq15}
&=CLATT^{CM}_k(POST(k)).
\end{align}
Next, we consider the probability limit of $\displaystyle\frac{(n_k+n_u)^2\hat{V}_{kU}^{Z}\hat{D}_{kU}^{2\times2}}{\hat{C}^{D,Z}}$. By the LLN and the Slutsky's theorem, we obtain
\begin{align}
\label{ApeCeq16}
\frac{(n_k+n_u)^2\hat{V}_{kU}^{Z}\hat{D}_{kU}^{2\times2}}{\hat{C}^{D,Z}} \xrightarrow{p} \frac{Pr(E_i=k)Pr(E_i=U)}{C^{D,Z}}\frac{T-(k-1)}{T}\frac{k-1}{T}CAET^1_{k}(POST(k)).
\end{align}
Here $C^{D,Z}$ is the probability limit of $\hat{C}^{D,Z}$ and its specific expression is already derived in Lemma \ref{sec4lemma1}.\par
Combining the result \eqref{ApeCeq15} with \eqref{ApeCeq16}, by the Slutsky's theorem, we have
\begin{align}
\label{ApeCeq17}
\hat{w}_{IV, kU}\hat{\beta}_{IV, kU}^{2\times2}\xrightarrow{p} 
w_{IV,kU}CLATT^{CM}_{k}(POST(k)).
\end{align}
where the weight $w_{IV,kU}$ is:
\begin{align*}
w_{IV,kU}=\frac{Pr(E_i=k)Pr(E_i=U)}{C^{D,Z}}\frac{T-(k-1)}{T}\frac{k-1}{T}\cdot CAET^1_{k}(POST(k)).
\end{align*}
By the completely same argument, we also obtain
\begin{align}
\label{ApeCeq18}
\hat{w}_{IV, kl}^{k}\hat{\beta}_{IV, kl}^{2\times2,k} \xrightarrow{p} w_{IV,kl}^{k}CLATT^{CM}_{k}(MID(k,l)).
\end{align}
where the weight $w_{IV,kl}^{k}$ is:
\begin{align*}
w_{IV,kl}^{k}=\frac{Pr(E_i=k)Pr(E_i=l)}{C^{D,Z}}\frac{k-1}{T}\frac{l-k}{T}\cdot CAET_{k}^1(MID(k,l)).
\end{align*}
Next, we derive the probability limit of $\hat{w}_{IV, kl}^{l}\hat{\beta}_{IV, kl}^{2\times2,l}$. Recall that $\hat{w}_{IV, kl}^{l}\hat{\beta}_{IV, kl}^{2\times2,l}$ is:
\begin{align*}
\hat{w}_{IV, kl}^{l}\hat{\beta}_{IV, kl}^{2\times2,l}&=\frac{((n_k+n_l)\bar{Z}_k)^2\hat{V}_{kl}^{Z,l}\hat{D}_{kl}^{2\times2,l}}{\hat{C}^{D,Z}}\cdot\frac{\hat{\beta}_{kl}^{2\times2,l}}{\hat{D}_{kl}^{2\times2,l}}.
\end{align*}
First note that in the proof of Lemma \ref{sec4lemma1}, we have already derived the probability limit of $\hat{D}_{kl}^{2\times2,l}$:
\begin{align}
\label{ApeCeq19}
\hat{D}_{kl}^{2\times2,l} &\xrightarrow{p} CAET_{l}^1(POST(l))-\left[CAET_{k}^1(POST(l))-CAET_{k}^1(MID(k,l))\right]\\
&\equiv D_{kl}^{2\times2,l}\notag.
\end{align}
Here, to ease the notation, we define $D_{kl}^{2\times2,l}$ to be the probability limit of $\hat{D}_{kl}^{2\times2,l}$.\par
Next, we consider the probability limit of $\hat{\beta}_{kl}^{2\times2,l}$.\par
By the law of large number (LLN), as $N \rightarrow \infty$, we have
\begin{align}
&\hat{\beta}_{kl}^{2\times2,l}=\left(\bar{Y}_{l}^{POST(l)}-\bar{Y}_{l}^{MID(k,l)}\right)-\left(\bar{Y}_{k}^{POST(l)}-\bar{Y}_{k}^{MID(k,l)}\right) \notag\\
&\xrightarrow{p} \frac{1}{T-(l-1)}\left[\sum_{t=l}^{T}E[Y_{i,t}|E_i=l]-E[Y_{i,t}|E_i=k]\right]-\frac{1}{l-k}\left[\sum_{t=k}^{l-1}E[Y_{i,t}|E_i=l]-E[Y_{i,t}|E_i=k]\right]\notag\\
&=\frac{1}{T-(l-1)}\left[\sum_{t=l}^{T}E[Y_{i,t}(D_{i,t}^{l})-Y_{i,t}(D_{i,t}^{\infty})|E_i=l]\right]-\frac{1}{T-(l-1)}\left[\sum_{t=l}^{T}E[Y_{i,t}(D_{i,t}^{k})-Y_{i,t}(D_{i,t}^{\infty})|E_i=k]\right]\notag\\
&+\frac{1}{T-(l-1)}\left[\sum_{t=l}^{T}E[Y_{i,t}(D_{i,t}^{\infty})|E_i=l]-\sum_{t=l}^{T}E[Y_{i,t}(D_{i,t}^{\infty})|E_i=k]\right]\notag\\
&+\frac{1}{l-k}\left[\sum_{t=k}^{l-1}E[Y_{i,t}(D_{i,t}^{k})|E_i=k]-E[Y_{i,t}(D_{i,t}^{\infty})|E_i=k]\right]\notag\\
&+\frac{1}{l-k}\left[\sum_{t=k}^{l-1}E[Y_{i,t}(D_{i,t}^{\infty})|E_i=k]-E[Y_{i,t}(D_{i,t}^{\infty})|E_i=l]\right]\notag\\
&=\frac{1}{T-(l-1)}\left[\sum_{t=l}^{T}E[Y_{i,t}(D_{i,t}^{l})-Y_{i,t}(D_{i,t}^{\infty})|E_i=l]\right]-\frac{1}{T-(l-1)}\left[\sum_{t=l}^{T}E[Y_{i,t}(D_{i,t}^{k})-Y_{i,t}(D_{i,t}^{\infty})|E_i=k]\right]\notag\\
&+\frac{1}{l-k}\left[\sum_{t=k}^{l-1}E[Y_{i,t}(D_{i,t}^{k})|E_i=k]-E[Y_{i,t}(D_{i,t}^{\infty})|E_i=k]\right]\notag\\
&=\frac{1}{T-(l-1)}\sum_{t=l}^{T}CAET_{l,t}^1\cdot CLATT_{l,t}\notag\\
\label{ApeCeq20}
&-\left[\frac{1}{T-(l-1)}\sum_{t=l}^{T}CAET_{k,t}^1\cdot CLATT_{k,t}-\frac{1}{l-k}\sum_{t=k}^{l-1}CAET_{k,t}^1\cdot CLATT_{k,t}\right].
\end{align}
The first equality follows from the simple algebra, Assumption \ref{sec2as3}, and Assumption \ref{sec2as5}. The second equality follows from Assumption \ref{sec2as7}. The final equality follows from Lemma \ref{ApeBlemma2}.\par
Note that the LLN and the Slutsky's theorem yields
\begin{align}
\label{ApeCeq21}
\frac{((n_k+n_l)\bar{Z}_k)^2\hat{V}_{kl}^{Z,l}\hat{D}_{kl}^{2\times2,l}}{\hat{C}^{D,Z}} &\xrightarrow{p} \frac{Pr(E_i=k)Pr(E_i=l)}{C^{D,Z}}\frac{T-(l-1)}{T}\frac{l-k}{T}\cdot D_{kl}^{2\times2,l}
\end{align}
From the results of \eqref{ApeCeq19} and \eqref{ApeCeq20} with \eqref{ApeCeq21}, we obtain
\begin{align}
\hat{w}_{IV, kl}^{l}\hat{\beta}_{IV, kl}^{2\times2,l} &\xrightarrow{p} \sigma_{IV,kl}^{l}\cdot \Bigg\{\frac{1}{T-(l-1)}\sum_{t=l}^{T}CAET_{l,t}^1\cdot CLATT_{l,t}\notag\\
&-\left[\frac{1}{T-(l-1)}\sum_{t=l}^{T}CAET_{k,t}^1\cdot CLATT_{k,t}-\frac{1}{l-k}\sum_{t=k}^{l-1}CAET_{k,t}^1\cdot CLATT_{k,t}\right]\Bigg\}\notag\\
&=\sigma_{IV,kl}^{l}\cdot\Bigg\{CLATT_l(POST(l))\notag\\
\label{ApeCeq22}
&-\left[CLATT_k(POST(l))-CLATT_k(MID(k,l))\right]\Bigg\}.
\end{align}
where the weight $\sigma_{IV,kl}^{l}$ is:
\begin{align}
\label{sigma1}
\sigma_{IV,kl}^{l}=\frac{Pr(E_i=k)Pr(E_i=l)}{C^{D,Z}}\frac{T-(l-1)}{T}\frac{l-k}{T}.
\end{align}
To sum up, by combining \eqref{ApeCeq17},\eqref{ApeCeq18} with \eqref{ApeCeq22}, we obtain
\begin{align*}
\hat{\beta}_{IV}&=\bigg[\sum_{k \neq U}w_{IV, kU}\hat{\beta}_{IV, kU}^{2\times2}+\sum_{k \neq U}\sum_{l >k}w_{IV, kl}^{k}\hat{\beta}_{IV, kl}^{2\times2,k}+w_{IV, kl}^{l}\hat{\beta}_{IV, kl}^{2\times2,l}\bigg]\\
&\xrightarrow{p} WCLATT-\Delta CLATT.
\end{align*}
where we define
\begin{align*}
WCLATT &\equiv \sum_{k \neq U}w_{IV,kU}CLATT^{CM}_{k}(POST(k))+\sum_{k \neq U}\sum_{l >k}w_{IV,kl}^{k}CLATT^{CM}_{k}(MID(k,l))\\
&+\sum_{k \neq U}\sum_{l >k}\sigma_{IV,kl}^{l}\cdot CLATT_l(POST(l)),\\
\Delta CLATT &\equiv \sum_{k \neq U}\sum_{l >k}\sigma_{IV,kl}^{l}\cdot \left[CLATT_k(POST(l))-CLATT_k(MID(k,l))\right].
\end{align*}
and the weights are:
\begin{align}
\label{sec2thm2weight1}
&w_{IV,kU}=\frac{Pr(E_i=k)Pr(E_i=U)}{C^{D,Z}}\frac{T-(k-1)}{T}\frac{k-1}{T}\cdot CAET^1_{k}(POST(k)),\\
\label{sec2thm2weight2}
&w_{IV,kl}^{k}=\frac{Pr(E_i=k)Pr(E_i=l)}{C^{D,Z}}\frac{k-1}{T}\frac{l-k}{T}\cdot CAET_{k}^1(MID(k,l)),\\
\label{sec2thm2weight3}
&\sigma_{IV,kl}^{l}=\frac{Pr(E_i=k)Pr(E_i=l)}{C^{D,Z}}\frac{T-(l-1)}{T}\frac{l-k}{T}.
\end{align}\par
Completing the proof.
\end{proof}

\subsection{Proof of Lemma \ref{sec4lemma2}}
\begin{proof}
We first simplify $CLATT_{k}^{CM}(W)$ and $D_{kl}^{2 \times 2,l}$ (defined in \eqref{ApeCeq19}) under Assumption \ref{sec4as2}. If we assume Assumption \ref{sec4as2}, $CLATT_{k}^{CM}(W)$ is:
\begin{align*}
CLATT^{CM}_k(W) &=\sum_{t \in W}\frac{Pr(CM_{k,t}|E_i=k)}{\sum_{t \in W}Pr(CM_{k,t}|E_i=k)}CLATT_{k,t}\\
&=\frac{1}{T_W}\sum_{t \in W}CLATT_{k,t}\\
&\equiv CLATT^{eq}_k(W).
\end{align*}
In addition, $D_{kl}^{2 \times 2,l}$ is:
\begin{align*}
D_{kl}^{2 \times 2,l}&=CAET_{l}^1(POST(l))-\left[CAET_{k}^1(POST(l))-CAET_{k}^1(MID(k,l))\right]\\
&=CAET_{l}^1.
\end{align*}
because we have $CAET_k^1(W)=CAET_k^1$.\par
We then rewrite the probability limit of $\hat{w}_{IV, kU}\hat{\beta}_{IV, kU}^{2\times2}$, $\hat{w}_{IV, kl}^{k}\hat{\beta}_{IV, kl}^{2\times2,k}$ and $\hat{w}_{IV, kl}^{l}\hat{\beta}_{IV, kl}^{2\times2,l}$ respectively. First, the probability limit of $\hat{w}_{IV, kU}\hat{\beta}_{IV, kU}^{2\times2}$ and $\hat{w}_{IV, kl}^{k}\hat{\beta}_{IV, kl}^{2\times2,k}$ is simplified to: 
\begin{align}
\label{ApeCeq23}
\hat{w}_{IV, kU}\hat{\beta}_{IV, kU}^{2\times2}&\xrightarrow{p} 
w_{IV,kU}CLATT^{CM}_{k}(POST(k))\notag\\
&=w_{IV,kU}CLATT^{eq}_{k}(POST(k)).
\end{align}
\begin{align}
\label{ApeCeq24}
\hat{w}_{IV, kl}^{k}\hat{\beta}_{IV, kl}^{2\times2,k} &\xrightarrow{p} 
w_{IV,kl}^{k}CLATT^{CM}_{k}(MID(k,l))\notag\\
&=w_{IV,kl}^{k}CLATT^{eq}_{k}(MID(k,l)).
\end{align}
where the weights $w_{IV,kU}$, $w_{IV,kl}^{k}$ are:
\begin{align}
\label{lemma2weight1}
&w_{IV,kU}=\frac{Pr(E_i=k)Pr(E_i=U)}{C^{D,Z}}\frac{T-(k-1)}{T}\frac{k-1}{T}\cdot CAET^1_{k},\\
\label{lemma2weight2}
&w_{IV,kl}^{k}=\frac{Pr(E_i=k)Pr(E_i=l)}{C^{D,Z}}\frac{k-1}{T}\frac{l-k}{T}\cdot CAET^1_{k}.
\end{align}
Next, we reconsider the probability limit of $\hat{w}_{IV, kl}^{l}\hat{\beta}_{IV, kl}^{2\times2,l}$.\par First, we note that the probability limit of $\hat{w}_{IV, kl}^{l}$ is simplified to:
\begin{align}
\hat{w}_{IV, kl}^{l}=\frac{((n_k+n_l)\bar{Z}_k)^2\hat{V}_{kl}^{Z,l}\hat{D}_{kl}^{2\times2,l}}{\hat{C}^{D,Z}} &\xrightarrow{p} \frac{Pr(E_i=k)Pr(E_i=l)}{C^{D,Z}}\frac{T-(l-1)}{T}\frac{l-k}{T}\cdot D_{kl}^{2\times2,l}\notag\\
\label{ApeCeq25}
&=\frac{Pr(E_i=k)Pr(E_i=l)}{C^{D,Z}}\frac{T-(l-1)}{T}\frac{l-k}{T}\cdot CAET_{l}^1.
\end{align}
Here the second equality follows from $D_{kl}^{2 \times 2,l}=CAET_{l}^1$.\par
Second, the probability limit of $\hat{\beta}_{IV, kl}^{2\times2,l}$ simply reduces to:
\begin{align}
\hat{\beta}_{IV, kl}^{2\times2,l}&=\frac{\hat{\beta}_{kl}^{2\times2,l}}{\hat{D}_{kl}^{2\times2,l}}\notag\\
&\xrightarrow{p} \frac{1}{CAET_{l}^1}\cdot \frac{1}{T-(l-1)}\sum_{t=l}^{T}CAET_{l}^1\cdot CLATT_{l,t}\notag\\
&-\frac{1}{CATT_{l}^1}\cdot\left[\frac{1}{T-(l-1)}\sum_{t=l}^{T}CAET_{k}^1\cdot CLATT_{k,t}-\frac{1}{l-k}\sum_{t=k}^{l-1}CAET_{k}^1\cdot CLATT_{k,t}\right]\notag\\
&=CLATT^{eq}_{l}(POST(l))\notag\\
\label{ApeCeq26}
&-\frac{1}{CAET_{l}^1}\cdot\left[CLATT_k(POST(l))-CLATT_k(MID(k,l))\right].
\end{align}
where we use \eqref{ApeCeq20} and $D_{kl}^{2 \times 2,l}=CAET_{l}^1$.\par
Combining the result \eqref{ApeCeq26} with \eqref{ApeCeq25}, by the Slutsky's theorem, we have
\begin{align}
\label{ApeCeq27}
\hat{w}_{IV, kl}^{l}\hat{\beta}_{IV, kl}^{2\times2,l} &\xrightarrow{p} w_{IV, kl}^{l}CLATT^{eq}_{l}(POST(l))-\sigma_{IV, kl}^{l}\cdot\left[CLATT_k(POST(l))-CLATT_k(MID(k,l))\right].
\end{align}
where the weight $w_{IV, kl}^{l}$ is:
\begin{align}
\label{lemma2weight3}
w_{IV, kl}^{l}=\frac{Pr(E_i=k)Pr(E_i=l)}{C^{D,Z}}\frac{T-(l-1)}{T}\frac{l-k}{T}\cdot CAET_{l}^1,
\end{align}
and $\sigma_{IV, kl}^{l}$ is already defined in \eqref{sigma1}.\par
Finally, from the result \eqref{ApeCeq23} and \eqref{ApeCeq24} with \eqref{ApeCeq27}, we obtain
\begin{align*}
\hat{\beta}_{IV}&=\bigg[\sum_{k \neq U}\hat{w}_{IV, kU}\hat{\beta}_{IV, kU}^{2\times2}+\sum_{k \neq U}\sum_{l >k}\hat{w}_{IV, kl}^{k}\hat{\beta}_{IV, kl}^{2\times2,k}+\hat{w}_{IV, kl}^{l}\hat{\beta}_{IV, kl}^{2\times2,l}\bigg]\\
&\xrightarrow{p} WCLATT-\Delta CLATT.
\end{align*}
where we define:
\begin{align*}
WCLATT &\equiv \sum_{k \neq U}w_{IV,kU}CLATT^{eq}_{k}(POST(k))+\sum_{k \neq U}\sum_{l >k}w_{IV,kl}^{k}CLATT^{eq}_{k}(MID(k,l))\\
&+\sum_{k \neq U}\sum_{l >k}w_{IV, kl}^{l} CLATT^{eq}_l(POST(l)),\\
\Delta CLATT &\equiv \sum_{k \neq U}\sum_{l >k}\sigma_{IV,kl}^{l}\cdot \left[CLATT_k(POST(l))-CLATT_k(MID(k,l))\right].
\end{align*}\par
Completing the proof.
\end{proof}
\subsection{Proof of Lemma \ref{sec4lemma3}}
\begin{proof}
First, we show $\Delta CLATT=0$. Under Assumption \ref{sec4as2} and Assumption \ref{sec4as4}, we have $CLATT_{e,t}=CLATT_{e}$. This implies:
\begin{align*}
\Delta CLATT &= \sum_{k \neq U}\sum_{l >k}\sigma_{IV,kl}^{l}\cdot \left[CLATT_k(POST(l))-CLATT_k(MID(k,l))\right]\\
&=0.
\end{align*}
because we have $CLATT_k(POST(l))-CLATT_k(MID(k,l))=0$.\par
Next, we consider $WCLATT$. Since we have $CLATT^{eq}_{k}(W)=CLATT_{k}$, we obtain:
\begin{align*}
WCLATT &= \sum_{k \neq U}w_{IV,kU}CLATT_{k}+\sum_{k \neq U}\sum_{l >k}w_{IV,kl}^{k}CLATT_{k}+\sum_{k \neq U}\sum_{l >k}w_{IV, kl}^{l} CLATT_l\\
&=\sum_{k \neq U}CLATT_{k}\Bigg[w_{IV,kU}+\sum_{j=1}^{k-1}w_{IV,jk}^{k}
+\sum_{j=k+1}^{K}w_{IV,kj}^{k}\Bigg].
\end{align*}\par
Completing the proof.
\end{proof}

\section{Extensions in section \ref{sec5}}\label{ApeC}
\subsection{Non-binary, ordered treatment}
This subsection considers a non binary, ordered treatment. We show Lemma \ref{ApeClemma1} below that is analogous to Lemma \ref{ApeBlemma2} in a binary treatment. If we use Lemma \ref{ApeClemma1} instead of Lemma \ref{ApeBlemma2} in the proof of Theorem \ref{sec4thm2} and Lemmas \ref{sec4lemma2}-\ref{sec4lemma3}, we obtain the theorem and the lemmas which replace $CLATT_{e,k}$ with $CACRT_{e,k}$.
\begin{Lemma}
\label{ApeClemma1}
Suppose Assumptions \ref{sec2as1}-\ref{sec2as7} hold. If treatment is a non-binary, ordered, for all $k, l \in \{1,\dots,T\} (k \leq l)$, we have
\begin{align*}
&\left[\sum_{t=k}^{l}E[Y_{i,t}(D_{i,t}^{k})-Y_{i,t}(D_{i,t}^{\infty})|E_i=k]\right]\\
&=\sum_{t=k}^{l}E[D_{i,t}^{k}-D_{i,t}^{\infty}|E_i=k]\cdot \sum_{j=1}^{J}w^{k}_{t,j} \cdot E[Y_{i,t}(j)-Y_{i,t}(j-1)|E_i=k, D_{i,t}^{k} \geq j > D_{i,t}^{\infty}]\\
&\equiv \sum_{t=k}^{l}CATT^{1}_{k,t}\cdot CACRT_{k,t}.
\end{align*}
where the weight $w^{k}_{t,j}$ is:
\begin{align*}
w^{k}_{t,j}=\frac{Pr(D_{i,t}^{k} \geq j > D_{i,t}^{\infty}|E_i=k)}{\sum_{j=1}^{J} Pr(D_{i,t}^{k} \geq j > D_{i,t}^{\infty}|E_i=k)}.
\end{align*}
\end{Lemma}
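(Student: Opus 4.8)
The plan is to mimic the binary-treatment argument in Lemma \ref{ApeBlemma2}, replacing the two-point decomposition of $Y_{i,t}(1)-Y_{i,t}(0)$ by an Angrist--Imbens style telescoping identity for ordered treatments. Fix a relative period $t$ with $k\le t\le l$. By Assumptions \ref{sec2as2}--\ref{sec2as3} I can work with potential outcomes $Y_{i,t}(d)$ indexed only by the current treatment level $d\in\{0,1,\dots,J\}$, so that $Y_{i,t}(D_{i,t}^{k})$ and $Y_{i,t}(D_{i,t}^{\infty})$ are well defined. By the monotonicity assumption (Assumption \ref{sec2as4}), which in the convention adopted after that assumption rules out defiers, I may assume $D_{i,t}^{k}\ge D_{i,t}^{\infty}$ almost surely in cohort $k$.

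First I would establish the pointwise telescoping identity
\begin{align*}
Y_{i,t}(D_{i,t}^{k})-Y_{i,t}(D_{i,t}^{\infty})=\sum_{j=1}^{J}\left(Y_{i,t}(j)-Y_{i,t}(j-1)\right)\mathbf{1}\{D_{i,t}^{\infty}<j\le D_{i,t}^{k}\},
\end{align*}
which holds on the event $\{D_{i,t}^{k}\ge D_{i,t}^{\infty}\}$ because the right-hand sum collapses after cancellation to $Y_{i,t}(D_{i,t}^{k})-Y_{i,t}(D_{i,t}^{\infty})$. Taking the conditional expectation given $E_i=k$ and splitting each summand into a conditional mean times a probability gives
\begin{align*}
E[Y_{i,t}(D_{i,t}^{k})-Y_{i,t}(D_{i,t}^{\infty})\mid E_i=k]=\sum_{j=1}^{J}E[Y_{i,t}(j)-Y_{i,t}(j-1)\mid E_i=k,\,D_{i,t}^{k}\ge j>D_{i,t}^{\infty}]\cdot Pr(D_{i,t}^{k}\ge j>D_{i,t}^{\infty}\mid E_i=k).
\end{align*}

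Next I would record the companion counting identity for the first stage: on $\{D_{i,t}^{k}\ge D_{i,t}^{\infty}\}$ one has $D_{i,t}^{k}-D_{i,t}^{\infty}=\sum_{j=1}^{J}\mathbf{1}\{D_{i,t}^{\infty}<j\le D_{i,t}^{k}\}$, hence taking expectations
\begin{align*}
CAET^{1}_{k,t}=E[D_{i,t}^{k}-D_{i,t}^{\infty}\mid E_i=k]=\sum_{j=1}^{J}Pr(D_{i,t}^{k}\ge j>D_{i,t}^{\infty}\mid E_i=k).
\end{align*}
Multiplying and dividing the right-hand side of the outcome display by this quantity, the weights $w^{k}_{t,j}$ of the statement appear exactly, and the display becomes $CAET^{1}_{k,t}\cdot CACRT_{k,t}$. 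Summing over $t=k,\dots,l$ then completes the argument.

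I do not anticipate a genuine obstacle here; the only points requiring care are (i) confirming the direction of monotonicity so that the telescoping sum runs over a nonempty index range, which is guaranteed by the no-defiers convention, and (ii) checking that the weights $w^{k}_{t,j}$ are well defined and sum to one, which is immediate from the first-stage counting identity since $CAET^{1}_{k,t}=\sum_{j}Pr(D_{i,t}^{k}\ge j>D_{i,t}^{\infty}\mid E_i=k)$. Once Lemma \ref{ApeClemma1} is in hand, substituting it for Lemma \ref{ApeBlemma2} in the proofs of Theorem \ref{sec4thm2} and Lemmas \ref{sec4lemma2}--\ref{sec4lemma3} yields the corresponding statements with $CLATT_{e,t}$ replaced by $CACRT_{e,t}$, since Lemma \ref{ApeBlemma2} is the only place where binariness of the treatment is invoked.
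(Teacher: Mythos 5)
Your proposal is correct and follows essentially the same route as the paper: the appendix proof likewise rests on the Angrist--Imbens telescoping identity for the outcome difference and the companion counting identity $E[D_{i,t}^{k}-D_{i,t}^{\infty}\mid E_i=k]=\sum_{j=1}^{J}Pr(D_{i,t}^{k}\ge j>D_{i,t}^{\infty}\mid E_i=k)$, then combines the two to produce the weights $w^{k}_{t,j}$. You simply spell out the pointwise cancellation step that the paper leaves implicit under ``by the similar argument in the proof of Lemma \ref{ApeBlemma2}.''
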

\begin{proof}
By the similar argument in the proof of lemma \ref{ApeBlemma2}, one can show that
\begin{align}
    &\left[\sum_{t=k}^{l}E[Y_{i,t}(D_{i,t}^{k})-Y_{i,t}(D_{i,t}^{\infty})|E_i=k]\right]\notag\\
    \label{ApeCnon1}
    &=\left[\sum_{t=k}^{l}\sum_{j=1}^{J}Pr(D_{i,t}^{k} \geq j > D_{i,t}^{\infty}|E_i=k) \cdot E[Y_{i,t}(j)-Y_{i,t}(j-1)|E_i=k, D_{i,t}^{k} \geq j > D_{i,t}^{\infty}]\right]\\
    \notag\\
    &E[D_{i,t}^{k}-D_{i,t}^{\infty}|E_i=k]\notag\\
    \label{ApeCnon2}
    &=\sum_{j=1}^{J}Pr(D_{i,t}^{k} \geq j > D_{i,t}^{\infty}|E_i=k).
\end{align}
Combining the result \eqref{ApeCnon1} with \eqref{ApeCnon2}, we obtain the desired result.
\end{proof}
\subsection{Unbalanced panel case}
In this section, we consider an unbalanced setting. We use the notation for a panel data setting, but the discussions and the results are the same if we consider an unbalanced repeated cross section setting.\par
\subsubsection*{Proof of Theorem \ref{sec5thm1}}
Let $N_{e,t}$ be the sample size for cohort $e$ at time $t$ and $N=\sum_{e}\sum_{t}N_{e,t}$ be the total number of observations. We consider the following two way fixed effects instrumental variable regression:
\begin{align*}
    &Y_{i,t}=\mu_{i.}+\delta_{t.}+\alpha Z_{i,t}+\epsilon_{i,t},\\
    &D_{i,t}=\gamma_{i.}+\zeta_{t.}+\pi Z_{i,t}+\eta_{i,t}.
\end{align*}
We define $\hat{Z}_{i,t}$ to be the residuals from regression ${Z}_{i,t}$ on the time and individual fixed effects.\par
From the FWL theorem, the TWFEIV estimator $\hat{\beta}_{IV}$ is:
\begin{align*}
\hat{\beta}_{IV}&=\frac{\sum_{i}\sum_{t}\hat{Z}_{i,t}Y_{i,t}}{\sum_{i}\sum_{t}\hat{Z}_{i,t}D_{i,t}}\\
&=\frac{\sum_{e}\sum_{t}N_{e,t}\frac{1}{N_{e,t}}\sum_{i}^{N_{e,t}}\hat{Z}_{e(i),t}Y_{e(i),t}}{\sum_{e}\sum_{t}N_{e,t}\frac{1}{N_{e,t}}\sum_{i}^{N_{e,t}}\hat{Z}_{e(i),t}D_{e(i),t}}\\
&=\frac{\sum_{e}\sum_{t}N_{e,t}\hat{Z}_{e,t}\frac{1}{N_{e,t}}\sum_{i}^{N_{e,t}}Y_{e(i),t}}{\sum_{e}\sum_{t}N_{e,t}\hat{Z}_{e,t}\frac{1}{N_{e,t}}\sum_{i}^{N_{e,t}}D_{e(i),t}},
\end{align*}
where the third equality follows from the fact that $\hat{Z}_{i,t}$ only varies across cohort and time level.\par
We note that by the definition of $\hat{Z}_{e,t}$, we have
\begin{align}
\label{ApeDeq1}
&\sum_{t}N_{e,t}\hat{Z}_{e,t}=0\hspace{3mm}\text{for all}\hspace{3mm}e \in \mathcal{S}(E_i),\\
\label{ApeDeq2}
&\sum_{e}N_{e,t}\hat{Z}_{e,t}=0\hspace{3mm}\text{for all}\hspace{3mm}t \in \{1,\dots,T\}.
\end{align}\par
To ease the notation, we define the sample mean for a random variable $R_{i,t}$ in cohort $e$ at time $t$ as follows:
\begin{align*}
R_{e,t} \equiv \frac{1}{N_{e,t}}\sum_{i}^{N_{e,t}}R_{e(i),t}.
\end{align*}
Here, we note that we can express $Y_{e,t}$ in the following:
\begin{align}
Y_{e,t}&=\frac{1}{N_{e,t}}\sum_{i}^{N_{e,t}}Y_{e(i),t}\notag\\
&=\frac{1}{N_{e,t}}\sum_{i}^{N_{e,t}}[Y_{e(i),t}(D_{i,t}^{\infty})+Z_{e,t}\cdot(Y_{e(i),t}(D_{i,t}^{e})-Y_{e(i),t}(D_{i,t}^{\infty}))]\notag\\
\label{ApeDeq3}
&=Y_{e,t}(D_{i,t}^{\infty})+Z_{e,t}\cdot(Y_{e,t}(D_{i,t}^{e})-Y_{e,t}(D_{i,t}^{\infty})).
\end{align}
where the second equality follows from Assumptions \ref{sec2as1}-\ref{sec2as3} and Assumption \ref{sec2as5}.\par
First, we consider the probability limit of the numerator in the TWFEIV estimator. By using \eqref{ApeDeq1} and \eqref{ApeDeq2}, we obtain
\begin{align}
\sum_{e}\sum_{t}N_{e,t}\hat{Z}_{e,t}\frac{1}{N_{e,t}}\sum_{i}^{N_{e,t}}Y_{e(i),t}&=\sum_{e}\sum_{t}N_{e,t}\hat{Z}_{e,t}Y_{e,t}\notag\\
\label{ApeDeq4}
&=\sum_{e}\sum_{t}N_{e,t}\hat{Z}_{e,t}\left[Y_{e,t}-Y_{e,1}-(Y_{1,t}-Y_{1,1})\right].
\end{align}
To further develop the expression, we use \eqref{ApeDeq3}:
\begin{align}
Y_{e,t}-Y_{e,1}-(Y_{1,t}-Y_{1,1})&=Y_{e,t}(D_{i,t}^{\infty})-Y_{e,1}(D_{i,1}^{\infty})-[Y_{1,t}(D_{i,t}^{\infty})-Y_{1,1}(D_{i,1}^{\infty})]\notag\\
&+Z_{e,t}\cdot(Y_{e,t}(D_{i,t}^{e})-Y_{e,t}(D_{i,t}^{\infty}))-Z_{e,1}\cdot(Y_{e,1}(D_{i,1}^{e})-Y_{e,1}(D_{i,1}^{\infty}))\notag\\
\label{ApeDeq5}
&-[Z_{1,t}\cdot(Y_{1,t}(D_{i,t}^{1})-Y_{1,t}(D_{i,t}^{\infty}))-Z_{1,1}\cdot(Y_{1,1}(D_{i,1}^{e})-Y_{1,1}(D_{i,1}^{\infty}))]
\end{align}
Substituting \eqref{ApeDeq5} into \eqref{ApeDeq4}, we obtain:
\begin{align}
&\sum_{e}\sum_{t}N_{e,t}\hat{Z}_{e,t}\left[Y_{e,t}-Y_{e,1}-(Y_{1,t}-Y_{1,1})\right]\notag\\
&=\sum_{e}\sum_{t}N_{e,t}\hat{Z}_{e,t}\left[Y_{e,t}(D_{i,t}^{\infty})-Y_{e,1}(D_{i,1}^{\infty})-[Y_{1,t}(D_{i,t}^{\infty})-Y_{1,1}(D_{i,1}^{\infty})]\right]\notag\\
\label{ApeDeq6}
&+\sum_{e}\sum_{t}N_{e,t}\hat{Z}_{e,t}Z_{e,t}\cdot(Y_{e,t}(D_{i,t}^{e})-Y_{e,t}(D_{i,t}^{\infty})),
\end{align}
where the second equality holds from \eqref{ApeDeq1} and \eqref{ApeDeq2}.\par
From \eqref{ApeDeq6}, as $N \rightarrow \infty$ , we obtain
\begin{align}
&\sum_{e}\sum_{t}N_{e,t}\hat{Z}_{e,t}\left[Y_{e,t}-Y_{e,1}-(Y_{1,t}-Y_{1,1})\right]\notag\\
&\xrightarrow{p} \sum_{e}\sum_{t}E[\hat{Z}_{i,t}|E_i=e]\cdot n_{e,t} \cdot \Bigg\{E[Y_{e,t}(D_{i,t}^{\infty})|E_i=e]-E[Y_{e,1}(D_{i,1}^{\infty})|E_i=e]\notag\\
&-\left(E[Y_{1,t}(D_{i,t}^{\infty})|E_i=1]-E[Y_{1,1}(D_{i,1}^{\infty})|E_i=1]\right)\Bigg\}\notag\\
&+\sum_{e}\sum_{t}E[\hat{Z}_{i,t}|E_i=e]\cdot n_{e,t} \cdot E[Z_{e,t}\cdot(Y_{e,t}(D_{i,t}^{e})-Y_{e,t}(D_{i,t}^{\infty}))|E_i=e]\notag\\
&=\sum_{e}\sum_{t \geq e}E[\hat{Z}_{i,t}|E_i=e]\cdot n_{e,t} \cdot E[(Y_{e,t}(D_{i,t}^{e})-Y_{e,t}(D_{i,t}^{\infty}))|E_i=e]\notag\\
\label{ApeDeq7}
&=\sum_{e}\sum_{t \geq e}E[\hat{Z}_{i,t}|E_i=e]\cdot n_{e,t} \cdot CATT^{1}_{e,t}\cdot CLATT_{e,t},
\end{align}
where $n_{e,t}$ is population share and $E[\hat{Z}_{i,t}|E_i=e]$ in cohort $e$ at time $t$. The first equality follows from Assumption \ref{sec2as1} and Assumption \ref{sec2as7}. The second equality follows from Assumption \ref{sec2as5}.\par
Next, we consider the probability limit of the numerator. We note that the structure in the numerator is same as the one in the numerator. Therefore, by the same argument, we have:
\begin{align}
&\sum_{e}\sum_{t}N_{e,t}\hat{Z}_{e,t}\frac{1}{N_{e,t}}\sum_{i}^{N_{e,t}}D_{e(i),t}\notag\\
\label{ApeDeq8}
&\xrightarrow{p} \sum_{e}\sum_{t \geq e}E[\hat{Z}_{i,t}|E_i=e]\cdot n_{e,t} \cdot CATT^{1}_{e,t}.
\end{align}
Combining the result \eqref{ApeDeq8} with \eqref{ApeDeq7}, we obtain
\begin{align*}
\hat{\beta}_{IV} &\xrightarrow{p} \beta_{IV}\\
&=\frac{\sum_{e}\sum_{t \geq e}E[\hat{Z}_{i,t}|E_i=e]\cdot n_{e,t} \cdot CATT^{1}_{e,t}\cdot CLATT_{e,t}}{\sum_{e}\sum_{t \geq e}E[\hat{Z}_{i,t}|E_i=e]\cdot n_{e,t} \cdot CATT^{1}_{e,t}}\\
&=\sum_{e}\sum_{t \geq e}w_{e,t}\cdot CLATT_{e,t}.
\end{align*}
where the weight $w_{e,t}$ is:
\begin{align*}
w_{e,t}=\frac{E[\hat{Z}_{i,t}|E_i=e]\cdot n_{e,t} \cdot CATT^{1}_{e,t}}{\sum_{e}\sum_{t \geq e}E[\hat{Z}_{i,t}|E_i=e]\cdot n_{e,t} \cdot CATT^{1}_{e,t}}.
\end{align*}\par
Completing the proof.
\subsubsection*{Supplementary of Theorem \ref{sec5thm1}}
We provide another representation of Theorem \ref{sec5thm1}. We assume that there does not exist a never exposed cohort, that is, we have $\infty \notin \mathcal{S}(E_i)$, and define $l=\max\{E_i\}$ to be the last exposed cohort.
\begin{Lemma}
\label{ApeDlemma1}
    Suppose Assumptions \ref{sec2as1}-\ref{sec2as7} hold. Assume a binary treatment and an unbalanced panel setting. If there does not exists a never exposed cohort, i.e., we have $\infty \notin \mathcal{S}(E_i)$, the population regression coefficient $\beta_{IV}$ is:
    \begin{align}
    \label{Supplement1}
   \beta_{IV}=&\sum\limits_{e}\sum\limits_{l-1 \geq t \geq e}w_{e,t}^{1}\cdot CLATT_{e,t}+\sum\limits_{e}\sum\limits_{t \geq l}w_{e,t}^{2} \cdot \Delta_{e,t},
    \end{align}
    where $\Delta_{e,t}$ is:
    \begin{align*}
\displaystyle\frac{CAET^{1}_{e,t}\cdot CLATT_{e,t}-CAET^{1}_{l,t}\cdot CLATT_{l,t}}{CAET^{1}_{e,t}-CAET^{1}_{l,t}},
    \end{align*}
    and the weights $w^{1}_{e,t}$ and $w^{2}_{e,t}$ are:
    \begin{align}
\label{ApeDeq11}
&w_{e,t}^1=\frac{E[\hat{Z}_{i,t}|E_i=e]\cdot n_{e,t} \cdot CAET^1_{e,t}}{\sum\limits_{e}\left(\sum\limits_{l-1 \geq t \geq e}E[\hat{Z}_{i,t}|E_i=e]\cdot n_{e,t}\cdot CAET^1_{e,t}+\sum\limits_{t \geq l}E[\hat{Z}_{i,t}|E_i=e]\cdot n_{e,t}\cdot (CAET^{1}_{e,t}-CAET^{1}_{l,t})\right)},\\
\label{ApeDeq12}
&w_{e,t}^{2}=\frac{E[\hat{Z}_{i,t}|E_i=e]\cdot n_{e,t}\cdot(CAET^{1}_{e,t}-CAET^{1}_{l,t})}{\sum\limits_{e}\left(\sum\limits_{l-1 \geq t \geq e}E[\hat{Z}_{i,t}|E_i=e]\cdot n_{e,t}\cdot CAET^1_{e,t}+\sum\limits_{t \geq l}E[\hat{Z}_{i,t}|E_i=e]\cdot n_{e,t}\cdot (CAET^{1}_{e,t}-CAET^{1}_{l,t})\right)}.
\end{align}
\end{Lemma}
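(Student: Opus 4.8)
The plan is to obtain \eqref{Supplement1} by re-organizing the representation of $\beta_{IV}$ already derived in the proof of Theorem \ref{sec5thm1},
\begin{align*}
\beta_{IV}=\frac{\sum_{e}\sum_{t \geq e}E[\hat{Z}_{i,t}|E_i=e]\, n_{e,t}\, CAET^{1}_{e,t}\, CLATT_{e,t}}{\sum_{e}\sum_{t \geq e}E[\hat{Z}_{i,t}|E_i=e]\, n_{e,t}\, CAET^{1}_{e,t}},
\end{align*}
together with the population analogue of the normal-equation identity \eqref{ApeDeq2}: for every fixed $t$, $\sum_{e}n_{e,t}\,E[\hat{Z}_{i,t}|E_i=e]=0$, which holds in the limit because $\hat{Z}_{i,t}$ is the residual of the two-way projection and is thus orthogonal to the time dummies.

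First I would split the inner sum over $t$ in both the numerator and the denominator into the block $e\le t\le l-1$ and the block $t\ge l$, with $l=\max\{E_i\}$. Because $l$ is the last cohort, every $e$ satisfies $e\le l\le t$ for all $t\ge l$, so in the $t\ge l$ block the cohort sum runs over all cohorts. The $e\le t\le l-1$ blocks are left as they stand; after the final division they contribute exactly the $w^{1}_{e,t}$-weighted $CLATT_{e,t}$ terms of \eqref{Supplement1}.

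For each fixed $t\ge l$ I would then pass through a cohort-$l$ benchmark using the orthogonality identity. In the denominator, $\sum_{e}E[\hat{Z}_{i,t}|E_i=e]n_{e,t}CAET^{1}_{e,t}=\sum_{e}E[\hat{Z}_{i,t}|E_i=e]n_{e,t}(CAET^{1}_{e,t}-CAET^{1}_{l,t})$ since $CAET^{1}_{l,t}\sum_{e}E[\hat{Z}_{i,t}|E_i=e]n_{e,t}=0$, and the $e=l$ summand vanishes identically; in the numerator, likewise, $\sum_{e}E[\hat{Z}_{i,t}|E_i=e]n_{e,t}CAET^{1}_{e,t}CLATT_{e,t}=\sum_{e}E[\hat{Z}_{i,t}|E_i=e]n_{e,t}(CAET^{1}_{e,t}CLATT_{e,t}-CAET^{1}_{l,t}CLATT_{l,t})$ with the $e=l$ term zero. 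Multiplying and dividing each $e\ne l$ summand of the numerator block by $CAET^{1}_{e,t}-CAET^{1}_{l,t}$ rewrites it as $E[\hat{Z}_{i,t}|E_i=e]n_{e,t}(CAET^{1}_{e,t}-CAET^{1}_{l,t})\,\Delta_{e,t}$ with $\Delta_{e,t}$ exactly the quantity in the statement; since the factor $CAET^{1}_{l,t}-CAET^{1}_{l,t}$ is zero, one may re-admit $e=l$ in every sum. Hence the numerator becomes $\sum_{e}\sum_{e\le t\le l-1}E[\hat{Z}_{i,t}|E_i=e]n_{e,t}CAET^{1}_{e,t}CLATT_{e,t}+\sum_{e}\sum_{t\ge l}E[\hat{Z}_{i,t}|E_i=e]n_{e,t}(CAET^{1}_{e,t}-CAET^{1}_{l,t})\Delta_{e,t}$ and the denominator the same expression without $CLATT_{e,t}$ and without $\Delta_{e,t}$. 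Dividing and reading off the coefficient of each $CLATT_{e,t}$ and each $\Delta_{e,t}$ produces exactly the weights $w^{1}_{e,t}$ and $w^{2}_{e,t}$ of \eqref{ApeDeq11}--\eqref{ApeDeq12}; since the numerators of all these weights sum to the common denominator, $\sum_{e}\sum_{e\le t\le l-1}w^{1}_{e,t}+\sum_{e}\sum_{t\ge l}w^{2}_{e,t}=1$.

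I expect the argument to be essentially bookkeeping -- re-indexing a double sum and cancelling the $e=l$ terms -- so the main points to pin down carefully are: (i) the population orthogonality $\sum_{e}n_{e,t}E[\hat{Z}_{i,t}|E_i=e]=0$ for each $t$, obtained as the limiting form of the two-way-projection normal equations exactly as \eqref{ApeDeq2} was obtained in finite samples; and (ii) the (implicit) nondegeneracy $CAET^{1}_{e,t}\ne CAET^{1}_{l,t}$ whenever $w^{2}_{e,t}\ne 0$, which is what makes $\Delta_{e,t}$ well defined and legitimizes routing the post-$l$ comparisons through the last cohort. The non-binary, ordered-treatment version follows by the same steps after using Lemma \ref{ApeClemma1} to replace $CLATT_{e,t}$ with $CACRT_{e,t}$.
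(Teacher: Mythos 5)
Your proposal is correct, and it reaches the paper's formula by the same essential device --- recentering the post-$l$ comparisons around the last exposed cohort via the orthogonality of the residualized instrument to the time dummies --- but it executes this at a different point in the argument. The paper's own proof does not reuse Theorem \ref{sec5thm1}: it restarts from the finite-sample expression, uses the sample normal equations \eqref{ApeDeq1}--\eqref{ApeDeq2} to write $\hat{\beta}_{IV}$ as a ratio of sums built from $Y_{e,t}-Y_{e,1}-(Y_{l,t}-Y_{l,1})$ and $D_{e,t}-D_{e,1}-(D_{l,t}-D_{l,1})$ with cohort $l$ as the benchmark, defines the Wald-DID ratios $\widehat{WDID}_{e,t}$, and computes their probability limits case by case ($0$ for $t<e$, $CLATT_{e,t}$ for $e\le t\le l-1$, $\Delta_{e,t}$ for $t\ge l$) before applying Slutsky. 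You instead take the already-established population representation of $\beta_{IV}$ from Theorem \ref{sec5thm1} and apply the population analogue of \eqref{ApeDeq2}, $\sum_{e}n_{e,t}E[\hat{Z}_{i,t}\mid E_i=e]=0$, to subtract the cohort-$l$ terms in the $t\ge l$ block; since $l=\max\{E_i\}$ guarantees that the inner sum over $e$ runs over all cohorts there, the identity applies, and the bookkeeping you describe delivers exactly \eqref{ApeDeq11}--\eqref{ApeDeq12}. Your route is the more economical one (no repeated LLN computations), while the paper's has the advantage of exhibiting each post-$l$ term as the plim of an explicit ``bad-comparison'' Wald-DID estimator, which is what motivates interpreting $\Delta_{e,t}$ as the bias from using already-exposed units as controls. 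The two caveats you flag --- the population orthogonality and the nondegeneracy $CAET^{1}_{e,t}\neq CAET^{1}_{l,t}$ making $\Delta_{e,t}$ well defined --- are exactly the points that are left implicit in the paper as well (its $\widehat{WDID}_{l,t}$ is formally $0/0$ but carries zero weight, just as your re-admitted $e=l$ terms do), so nothing is missing.
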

We note that when there is no never exposed cohort, we can only identify each $CLATT_{e,t}$ before the time period $l=\max\{E_i\}$ for cohort $e \neq l$, exploiting the time trends of the unexposed treatment and outcome for cohort $l$. This implies that in equation \eqref{Supplement1}, each $\Delta_{e,t}$ is the bias term occurring from the bad comparisons performed by TWFEIV regressions. In a given application, we can estimate $CLATT_{e,t}$, $CLATT_{e,t}$, and the associated weights $w^{1}_{e,t}$, $w^{2}_{e,t}$ by constructing the consistent estimators, using \eqref{ApeDeq9} and \eqref{ApeDeq10} below.
\begin{proof}
We consider the case where there is no never exposed cohort, i.e., we have $\infty \notin \mathcal{S}(E_i)$. In this case, by using the last exposed cohort $l=\max\{E_i\}$, we obtain
\begin{align*}
\hat{\beta}_{IV}&=\frac{\sum_{e}\sum_{t}N_{e,t}\hat{Z}_{e,t}\left[Y_{e,t}-Y_{e,1}-(Y_{l,t}-Y_{l,1})\right]}{\sum_{e}\sum_{t}N_{e,t}\hat{Z}_{e,t}\left[D_{e,t}-D_{e,1}-(D_{l,t}-D_{l,1})\right]}\\
&=\frac{\sum_{e}\sum_{t}N_{e,t}\hat{Z}_{e,t}\left[D_{e,t}-D_{e,1}-(D_{l,t}-D_{l,1})\right]\cdot \widehat{WDID}_{e,t}}{\sum_{e}\sum_{t}N_{e,t}\hat{Z}_{e,t}\left[D_{e,t}-D_{e,1}-(D_{l,t}-D_{l,1})\right]}.
\end{align*}
where we define 
\begin{align*}
\widehat{WDID}_{e,t} \equiv \frac{\left[Y_{e,t}-Y_{e,1}-(Y_{l,t}-Y_{l,1})\right]}{\left[D_{e,t}-D_{e,1}-(D_{l,t}-D_{l,1})\right]}.
\end{align*}
From the Law of Large Numbers and the same argument in the proof of Theorem \ref{sec4thm2}, we have
\begin{align}
\label{ApeDeq9}
\widehat{WDID}_{e,t} \xrightarrow{p} \left\{
\begin{array}{lll}
0 & (t <e)\\
CLATT_{e,t} & (l-1 \geq t \geq e) \\
\displaystyle\frac{CAET^{1}_{e,t}\cdot CLATT_{e,t}-CAET^{1}_{l,t}\cdot CLATT_{l,t}}{CAET^{1}_{e,t}-CAET^{1}_{l,t}} & (t \geq l)
\end{array}
\right.
\end{align}
Similarly, we obtain
\begin{align}
\label{ApeDeq10}
\left[D_{e,t}-D_{e,1}-(D_{l,t}-D_{l,1})\right] \xrightarrow{p} \left\{
\begin{array}{lll}
0 & (t <e)\\
CAET^{1}_{e,t} & (l-1 \geq t \geq e) \\
CAET^{1}_{e,t}-CATT^{1}_{l,t} & (t \geq l) 
\end{array}
\right.
\end{align}
Combining the result \eqref{ApeDeq9} with \eqref{ApeDeq10} and by the Slutsky's theorem, we obtain
\begin{align*}
\beta_{IV}=&\sum\limits_{e}\sum\limits_{l-1 \geq t \geq e}w_{e,t}^{1}\cdot CLATT_{e,t}+\sum\limits_{e}\Bigg[\sum\limits_{t \geq l}w_{e,t}^{2} \cdot \displaystyle\frac{CAET^{1}_{e,t}\cdot CLATT_{e,t}-CAET^{1}_{l,t}\cdot CLATT_{l,t}}{CAET^{1}_{e,t}-CAET^{1}_{l,t}}\Bigg].
\end{align*}\par
Completing the proof.
\end{proof}\par
\subsection{Random assignment of the instrument adoption date} 
First, we set up the additional notations. We define $LATE_k^{CM}(W)$ and $LATE_k(W)$ analogous to $CLATT_k^{CM}(W)$ and $CLATT_k(W)$ in section \ref{sec4}: 
\begin{align*}
&LATE^{CM}_k(W) \equiv \sum_{t \in W}\frac{AE^{1}_{k,t}}{\sum_{t \in W}AE^{1}_{k,t}}LATE_{k,t},\\
&LATE_k(W) \equiv \frac{1}{T_W}\sum_{t \in W}AE^{1}_{k,t}LATE_{k,t},
\end{align*}
where we replace $CAET^{1}_{k,t}$ and $CLATT_{k,t}$ with $AE^{1}_{k,t}$ and $LATE_{k,t}$ respectively in $CLATT_k^{CM}(W)$ and $CLATT_k(W)$.\par
Theorem \ref{ApeCthm1} below presents the TWFEIV estimand under Assumptions \ref{sec2as1} - \ref{sec2as5} and Assumption \ref{sec5as1} (Random assignment assumption of adoption date $E_i$).
\begin{Theorem}
Suppose Assumptions \ref{sec2as1}-\ref{sec2as5} and \ref{sec5as1} holds. Then, the population regression coefficient $\beta_{IV}$ consists of two terms:
\label{ApeCthm1}
\begin{align*}
\beta_{IV}=WLATE-\Delta LATE.
\end{align*}
where we define:
\begin{align*}
WLATE &\equiv \sum_{k \neq U}w_{IV,kU}LATE^{CM}_{k}(POST(k))+\sum_{k \neq U}\sum_{l >k}w_{IV,kl}^{k}LATE^{CM}_{k}(MID(k,l))\\
&+\sum_{k \neq U}\sum_{l >k}\sigma_{IV,kl}^{l}\cdot LATE_l(POST(l)),\\
\Delta LATE &\equiv \sum_{k \neq U}\sum_{l >k}\sigma_{IV,kl}^{l}\cdot \left[LATE_k(POST(l))-LATE_k(MID(k,l))\right].
\end{align*}
The weights $w_{IV,kU}$, $w_{IV,kl}^{k}$ and $\sigma_{IV,kl}^{l}$ are the same in the proof of Theorem \ref{sec4thm2}.
\end{Theorem}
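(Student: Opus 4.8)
The plan is to obtain Theorem~\ref{ApeCthm1} as a corollary of Theorem~\ref{sec4thm2} together with a simplification of the building-block parameters, rather than by re-running the limiting argument from scratch. First I would record that the random assignment assumption (Assumption~\ref{sec5as1}) implies Assumptions~\ref{sec2as6} and \ref{sec2as7}: the never-exposed treatment $D_{i,t}^{\infty}$ is a measurable function of the potential treatment path, and the never-exposed outcome $Y_{i,t}(D_{i,t}^{\infty})$ is a measurable function of the triple $(Y_{i,t}(1),Y_{i,t}(0),D_{i,t}(z))$, so both are independent of $E_i$; hence $E[D_{i,t}^{\infty}\mid E_i=e]$ and $E[Y_{i,t}(D_{i,t}^{\infty})\mid E_i=e]$ are constant in $e$, which is strictly stronger than the two parallel trends conditions. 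Since Assumptions~\ref{sec2as1}--\ref{sec2as5} are maintained by hypothesis, the full set of Assumptions~\ref{sec2as1}--\ref{sec2as7} holds, so Theorem~\ref{sec4thm2} applies and gives $\beta_{IV}=WCLATT-\Delta CLATT$ with the weights $w_{IV,kU}$, $w_{IV,kl}^{k}$, $\sigma_{IV,kl}^{l}$ of \eqref{sec2thm2weight1}--\eqref{sec2thm2weight3}.

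Next I would simplify $CAET^{1}_{e,l}$ and $CLATT_{e,l}$ under Assumption~\ref{sec5as1}. On the event $\{E_i=e\}$ the observed treatment equals $D_{i,e+l}^{e}$, so $CAET^{1}_{e,l}=E[D_{i,e+l}^{e}-D_{i,e+l}^{\infty}\mid E_i=e]=E[D_{i,e+l}^{e}-D_{i,e+l}^{\infty}]=AE^{1}_{e,l}$, using independence of the potential treatment paths from $E_i$ for the middle equality. The complier event $CM_{e,e+l}=\{D_{i,e+l}^{e}>D_{i,e+l}^{\infty}\}$ is determined by the potential treatment paths alone, so conditioning further on $\{E_i=e\}$ is redundant and $CLATT_{e,l}=E[Y_{i,e+l}(1)-Y_{i,e+l}(0)\mid E_i=e,CM_{e,e+l}]=E[Y_{i,e+l}(1)-Y_{i,e+l}(0)\mid CM_{e,e+l}]=LATE_{e,l}$. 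Substituting these two identities into the definitions of $CLATT^{CM}_{k}(W)$ and $CLATT_{k}(W)$ yields exactly $LATE^{CM}_{k}(W)$ and $LATE_{k}(W)$ as defined at the start of the subsection, hence $WCLATT=WLATE$ and $\Delta CLATT=\Delta LATE$; the weight expressions are unchanged (with $CAET^{1}$, both where it appears explicitly and inside $C^{D,Z}$, now read as $AE^{1}$), which proves $\beta_{IV}=WLATE-\Delta LATE$.

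I expect the only delicate point to be the measurability/independence bookkeeping: one must interpret Assumption~\ref{sec5as1} as joint independence of $E_i$ from the whole collection $\{(Y_{i,t}(1),Y_{i,t}(0),D_{i,t}(z))\}_{z}$ (equivalently, from $(Y_{i,t}(1),Y_{i,t}(0),D_{i,t}^{e},D_{i,t}^{\infty})$ for every $e$), so that $D_{i,t}^{e}$, $D_{i,t}^{\infty}$, the complier event, and $Y_{i,t}(D_{i,t}^{\infty})$ are all simultaneously independent of $E_i$; granting that reading, every substitution above is immediate. If a self-contained derivation is preferred, one can instead repeat the proof of Theorem~\ref{sec4thm2} line by line, replacing each use of Assumption~\ref{sec2as6} or \ref{sec2as7} by the cohort-invariance of the never-exposed means and replacing Lemma~\ref{ApeBlemma2} by the unconditional analogue obtained by deleting the conditioning on $\{E_i=k\}$ from its statement and proof; this reproduces the same limiting expressions with $AE^{1}$ and $LATE$ in place of $CAET^{1}$ and $CLATT$.
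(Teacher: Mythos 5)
Your proposal is correct and follows essentially the same route as the paper's own proof: note that Assumption \ref{sec5as1} implies Assumptions \ref{sec2as6}--\ref{sec2as7} so that Theorem \ref{sec4thm2} applies, then observe that under random assignment $CAET^{1}_{e,l}=AE^{1}_{e,l}$ and $CLATT_{e,l}=LATE_{e,l}$, and substitute into $CLATT^{CM}_k(W)$ and $CLATT_k(W)$. Your write-up is in fact more careful than the paper's two-line argument about the joint-independence reading needed to drop the conditioning on $E_i$ from the complier event.
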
\par
Theorem \ref{ApeCthm1} is analogous to Theorem \ref{sec4thm2}, but $CAET^{1}_{e,l}$ and $CLATT_{e,l}$ are replaced by $AE^{1}_{e,l}$ and $LATE_{e,l}$ respectively because we assume a random assignment of adoption date. If we consider the restrictions on the effects of the instrument on the treatment and outcome as in section \ref{sec4.2}, the similar arguments hold as in Theorem \ref{sec4thm2}, Lemma \ref{sec4lemma2} and Lemma \ref{sec4lemma3}.
\begin{proof}
First, we note that Assumption \ref{sec5as1} implies Assumption \ref{sec2as6} and Assumption \ref{sec2as7}. Therefore, we obtain the result in Theorem \ref{sec4thm2} under Assumptions \ref{sec2as1}-\ref{sec2as5} and \ref{sec5as1}.\par 
By noticing that we have $CATT^{1}_{e,l}=AE^{1}_{e,l}$ and $CLATT_{e,l}=LATE_{e,l}$ under Assumption \ref{sec5as1}, we obtain:
\begin{align*}
&CLATT_k^{CM}(W)=LATE^{CM}_k(W),\\
&CLATT_k(W)=LATE_k(W).
\end{align*}\par
By replacing $CLATT_k^{CM}(W)$ and $CLATT_k(W)$ with $LATE^{CM}_k(W)$ and $LATE_k(W)$ in Theorem \ref{sec4thm2}, we obtain the desired result.
\end{proof}
\section{Proofs and discussions in section \ref{sec7}}\label{ApeD}
In this appendix, we first derive equation \eqref{TWFEIV_between2}. We then discuss the causal interpretation of the covariate-adjusted TWFEIV estimand under staggered DID-IV designs, imposing the additional assumptions.\par
\subsection{Decomposing the between IV coefficient}
Let $\hat{C}^{D,\Tilde{z}}_{b}$ denote the covariance between $D_{i,t}$ and $\Tilde{z}_{k,t}$, the between term of $\Tilde{z}_{i,t}$. The between IV coefficient $\hat{\beta}_{b,IV}^{z}$ is:
\begin{align}
\label{TWFEIV_ApeDeq1}
\hat{\beta}_{b,IV}^{z}=\frac{\hat{C}(Y_{i,t},\Tilde{z}_{k,t})}{\hat{C}(D_{i,t},\Tilde{z}_{k,t})}=\frac{\hat{C}(Y_{i,t},\Tilde{z}_{k,t})}{\hat{C}^{D,\Tilde{z}}_{b}}.
\end{align}
To derive equation \eqref{TWFEIV_between2}, we decompose the covariance between $Y_{i,t}$ and $\Tilde{z}_{k,t}$. To do so, we first split the between term $\Tilde{z}_{k,t}$ into the between term of $Z_{i,t}$ and the between term of $\tilde{p}_{i,t}$:
\begin{align*}
\Tilde{z}_{k,t}&=[(\bar{Z}_{k,t}-\bar{Z}_{k})-(\bar{Z}_{t}-\bar{\bar{Z}})]-[(\bar{p}_{k,t}-\bar{p}_{k})-(\bar{p}_{t}-\bar{\bar{p}})]\\
&\equiv \Tilde{Z}_{k,t}-\Tilde{p}_{k,t}.
\end{align*}
Then, we have
\begin{align}
\hat{C}(Y_{i,t},\Tilde{z}_{k,t})&=\frac{1}{NT}\sum_{i}\sum_{t}Y_{i,t}[(\bar{z}_{k,t}-\bar{z}_{k})-(\bar{z}_{t}-\bar{\bar{z}})]\notag\\
&=\frac{1}{T}\sum_{k}n_k \sum_{t}\bar{Y}_{k,t}[(\bar{z}_{k,t}-\bar{z}_{k})-(\bar{z}_{t}-\bar{\bar{z}})]\notag\\
&=\sum_{k}\sum_{l}n_k n_l \frac{1}{T}\sum_{t}(\bar{Y}_{k,t}-\bar{Y}_{l,t})\Tilde{Z}_{k,t}-\sum_{k}\sum_{l}n_k n_l \frac{1}{T}\sum_{t}(\bar{Y}_{k,t}-\bar{Y}_{l,t})\Tilde{p}_{k,t}\notag\\
\label{TWFEIV_ApeDeq2}
&=\sum_{k}\sum_{l > k}(n_k+n_l)^2[\hat{C}_{kl}^{D,Z}\hat{\beta}_{IV,kl}^{2 \times 2}-\hat{C}_{b,kl}^{p}\hat{\beta}_{b,IV,kl}^{p}].
\end{align}
$\hat{\beta}_{IV,kl}^{2 \times 2}$ is an estimator obtained from an IV regression of $Y_{i,t}$ on $D_{i,t}$ with $\Tilde{Z}_{k,t}$ as the excluded instrument in $(k,l)$ cell subsample. $\hat{C}_{kl}^{D,Z}$ is the covariance between $D_{i,t}$ and $\Tilde{Z}_{k,t}$ in $(k,l)$ cell subsample. $\hat{\beta}_{b,IV,kl}^{p}$ is an estimator obtained from an IV regression of $Y_{i,t}$ on $D_{i,t}$ with $\Tilde{p}_{k,t}$ as the excluded instrument in $(k,l)$ cell subsample. $\hat{C}_{b,kl}^{p}$ is the covariance between $D_{i,t}$ and $\Tilde{p}_{k,t}$ in $(k,l)$ cell subsample.
\par
By combining \eqref{TWFEIV_ApeDeq1} with \eqref{TWFEIV_ApeDeq2}, we obtain equation \eqref{TWFEIV_between2}.
\subsection{Causal interpretation of the covariate-adjusted TWFEIV estimand}
\par
This section considers the causal interpretation of the covariate-adjusted TWFEIV estimand $\beta_{IV}^{X}$. To simplify the analysis, we first make the following assumptions.  \cite{Goodman-Bacon2021-ej} also make similar assumptions to investigate the causal interpretation of the covariate-adjusted TWFE estimand in Appendix B.
\begin{itemize}
\item[(i)] Time-varying covariates $X_{i,t}$ are not affected by instrument (policy shock).
\item[(ii)] Time-varying covariates $X_{i,t}$ do not vary within cohorts.
\item[(iii)] The coefficients obtained from regressing $\Tilde{Z}_{i,t}$ on $\Tilde{X}_{i,t}$ in $(k,l)$ cell subsample are the same regardless of the pair $(k,l)$.
\end{itemize}
\par
Because Assumption (ii) implies that the within term is equal to zero, the covariate-adjusted TWFEIV estimator $\hat{\beta}_{IV}^{X}$ simplifies to
\begin{align*}
\hat{\beta}_{IV}^{X}=\sum_{k}\sum_{l > k}s_{b,kl}\hat{\beta}_{b,IV,kl}^{z}.
\end{align*}\par
Assumption (iii) guarantees that $\hat{\beta}_{b,IV,kl}^{z}$ is equal to the between coefficient obtained from estimating equation \eqref{TWFEIV-covariate1} in $(k,l)$ subsample, which we denote $\hat{\beta}_{b,IV,kl}^{z,X}$ hereafter. To see this formally, let $\Tilde{p}_{i,t}^{kl} \equiv \hat{\Gamma}_{k,l}\Tilde{X}_{i,t}$ denote the linear projection obtained from regressing $\Tilde{Z}_{i,t}$ on $\Tilde{X}_{i,t}$ in $(k,l)$ subsample and let $\Tilde{p}_{j,t}^{kl}$ denote the between term of $\Tilde{p}_{i,t}^{kl}$ in cohort $j$. We note that $\Tilde{p}_{j,t}^{kl} \neq \Tilde{p}_{j,t}$ (the between term of $\Tilde{p}_{i,t}$) holds in general because $\Tilde{p}_{i,t}=\hat{\Gamma}\Tilde{X}_{i,t}$ is estimated using the whole sample. Then, we have
\begin{align*}
\hat{\beta}_{b,IV,kl}^{z}=\frac{\hat{C}(Y_{i,t},\Tilde{z}_{j,t})}{\hat{C}(D_{i,t},\Tilde{z}_{j,t})}&=\frac{\hat{C}(Y_{i,t},\Tilde{Z}_{j,t}-\Tilde{p}_{j,t}^{kl})+\hat{C}(Y_{i,t},\Tilde{p}_{j,t}^{kl}-\Tilde{p}_{j,t})}{\hat{C}(D_{i,t},\Tilde{Z}_{j,t}-\Tilde{p}_{j,t}^{kl})+\hat{C}(D_{i,t},\Tilde{p}_{j,t}^{kl}-\Tilde{p}_{j,t})}\\
&=\frac{\hat{C}(D_{i,t},\Tilde{Z}_{j,t}-\Tilde{p}_{j,t}^{kl})\hat{\beta}_{b,IV,kl}^{z,X}+\hat{C}(D_{i,t},\Tilde{p}_{j,t}^{kl}-\Tilde{p}_{j,t})\hat{\beta}_{b,IV,kl}^{dif}}{\hat{C}(D_{i,t},\Tilde{Z}_{j,t}-\Tilde{p}_{j,t}^{kl})+\hat{C}(D_{i,t},\Tilde{p}_{j,t}^{kl}-\Tilde{p}_{j,t})},\hspace{3mm}j=k,l,
\end{align*}
where $\hat{\beta}_{b,IV,kl}^{dif}$ is an estimator obtained from an IV regression of $Y_{i,t}$ on $D_{i,t}$ with the difference $\Tilde{p}_{j,t}^{kl}-\Tilde{p}_{j,t}$ as the excluded instrument. Because Assumption (iii) ($\Tilde{p}_{j,t}^{kl}=\Tilde{p}_{j,t}$) implies $\hat{C}(D_{i,t},\Tilde{p}_{j,t}^{kl}-\Tilde{p}_{j,t})=0$, we obtain $\hat{\beta}_{b,IV,kl}^{z}=\hat{\beta}_{b,IV,kl}^{z,X}$.\par
Hereafter, we assume the identifying assumptions in staggered DID-IV designs and Assumption (i)-(iii). We focus on the between coefficient $\hat{\beta}_{b,IV,kU}^{z}=\hat{\beta}_{b,IV,kU}^{z,X}$ as it clarifies how covariates affect the interpretation of the TWFEIV estimand:
\begin{align*}
\hat{\beta}_{b,IV,kU}^{z}=\frac{\hat{C}(Y_{i,t},\Tilde{Z}_{j,t})-\hat{C}(Y_{i,t},\Tilde{p}_{j,t}^{kl})}{\hat{C}(D_{i,t},\Tilde{Z}_{j,t})-\hat{C}(D_{i,t},\Tilde{p}_{j,t}^{kl})},\hspace{3mm}j=k,U.
\end{align*}
Then, by the similar calculations in the proof of Theorem \ref{sec4thm2}, we obtain
\begin{align}
\hat{C}(Y_{i,t},\Tilde{Z}_{j,t})&=\hat{V}_{kU}^{z}\hat{D}_{kU}^{2 \times 2}\beta_{kU,IV}^{2 \times 2}\notag\\
\label{TWFEIV_causal1}
&\xrightarrow[]{p}V_{kU}^{z}\cdot CAET_{k}^{1}(POST(k))\cdot CLATT_{k}^{CM}(POST(k)),
\end{align}
and
\begin{align}
\hat{C}(Y_{i,t},\Tilde{p}_{j,t}^{kl})&=\frac{n_{kU}(1-n_{kU})}{T}\sum_{t}(\bar{Y}_{kt}-\bar{Y}_{Ut})\cdot [(\bar{p}_{k,t}^{kU}-\bar{p}_{k}^{kU})-(\bar{p}_{U,t}^{kU}-\bar{p}_{U}^{kU})]\notag\\
&\xrightarrow[]{p} \frac{N_{kU}(1-N_{kU})}{T}\sum_{t}\left\{E[Y_{i,t}(D_{i,t}^{\infty})|E_i=k]-E[Y_{i,t}(D_{i,t}^{\infty})|E_i=U]\right\}[(p_{k,t}^{kU}-p_{k}^{kU})-(p_{U,t}^{kU}-p_{U}^{kU})]\notag\\
\label{TWFEIV_causal2}
&+\frac{N_{kU}(1-N_{kU})}{T-(k-1)}\sum_{t \geq k}\underbrace{CAET_{k}\cdot CLATT_{k,t}}_{CAIET_{k,t}}\cdot[(p_{k,t}^{kU}-p_{k}^{kU})-(p_{U,t}^{kU}-p_{U}^{kU})],
\end{align}
where $N_{kU}$ and $[(p_{k,t}^{kU}-p_{k}^{kU})-(p_{U,t}^{kU}-p_{U}^{kU})]$ are the probability limits of $n_{kU}$ and $[(\bar{p}_{k,t}^{kU}-\bar{p}_{k}^{kU})-(\bar{p}_{U,t}^{kU}-\bar{p}_{U}^{kU})]$, respectively. Equations \eqref{TWFEIV_causal1} and \eqref{TWFEIV_causal2} indicate that covariates affects the causal interpretation of $\hat{\beta}_{b,IV,kU}^{z}$ in two ways. First, it additionally introduce the covariance between the difference in unexposed outcomes and the difference in the variation of the linear projection for cohorts $k$ and $U$ (the first term in equation \eqref{TWFEIV_causal1}). Second, it additionally introduce the covariance between the $CAIET_{k,t}$ and the difference in the variation of the linear projection for cohorts $k$ and $U$ (the second term in equation \eqref{TWFEIV_causal2}).
\end{document}